\documentclass{llncs}
\usepackage{amsmath,amssymb,amsfonts}
\usepackage{booktabs,tabularx}
\usepackage{url,xspace,soul,wrapfig}
\usepackage{graphicx,hyperref}
\usepackage{subcaption}
\usepackage{paralist,enumerate}
\usepackage[textsize=footnotesize]{todonotes}
\usepackage{enumerate}
\usepackage{soul,lineno}

\let\doendproof\endproof
\renewcommand\endproof{~\hfill$\qed$\doendproof}

\makeatother

\pagestyle{plain}
% ============================================================
% New commands
% ============================================================
\newcommand{\Ss}{\mathcal{S}}
\newcommand{\bo}{\sigma}
\newcommand{\co}{\chi}
\newcommand{\auG}{\widehat{G}}

\newcommand{\auC}{\widehat{C}}
\newcommand{\auP}{\widehat{P}}
\newcommand{\auGamma}{\widehat{\Gamma}}

\newcommand{\bcw}[1]{\overset{\curvearrowright}{\rho_{#1}}}
\newcommand{\bccw}[1]{\overset{\curvearrowleft}{\rho_{#1}}}
\newcommand{\rcw}[1]{\overset{\curvearrowright}{\delta_{#1}}}
\newcommand{\rccw}[1]{\overset{\curvearrowleft}{\delta_{#1}}}
\newcommand{\nr}[2]{d_r(#1,#2)}
\newcommand{\nd}[2]{d_d(#1,#2)}

\newcommand{\myref}[1]{{\textbf{#1}}}

% ============================================================
\title{Universal Slope Sets for\\ Upward Planar Drawings}

\author{Michael A. Bekos\inst{1}, 
Emilio Di~Giacomo\inst{2}, 
Walter Didimo\inst{2},\\ 
Giuseppe Liotta\inst{2}, 
Fabrizio Montecchiani\inst{2}}

\institute{
Institut f{\"u}r Informatik, Universit{\"a}t T{\"u}bingen, Germany
\\
\email{bekos@informatik.uni-tuebingen.de}
\and
Dipartimento di Ingegneria, Universit\`a degli Studi di Perugia, Italy
\\
\email{name.surname@unipg.it}
}
% ============================================================

\begin{document}
\maketitle

\pagenumbering{arabic}

% ============================================================
\begin{abstract} 
We prove that every set $\Ss$ of $\Delta$ slopes containing the horizontal slope is \emph{universal} for $1$-bend upward planar drawings of bitonic $st$-graphs with maximum vertex degree $\Delta$, i.e., every such digraph admits a $1$-bend upward planar drawing whose edge segments use only slopes in $\Ss$. This result is worst-case optimal in terms of the number of slopes, and, for a suitable choice of $\Ss$, it gives rise to drawings with worst-case optimal angular resolution. In addition, we prove that every such set $\Ss$ can be used to construct $2$-bend upward planar drawings of $n$-vertex planar $st$-graphs with at most $4n-9$ bends in total. Our main tool is a constructive technique that~runs~in~linear~time.  
\end{abstract}
% ============================================================

% ==================================================================
\section{Introduction}
% ==================================================================
Let $G$ be a graph with maximum vertex degree $\Delta$. The \emph{$k$-bend planar slope number} of $G$ is the minimum number of slopes for the edge segments needed to construct a $k$-bend planar drawing of $G$, i.e., a planar drawing where each edge is a polyline with at most $k \ge 0$ bends. Since no more than two edge segments incident to the same vertex can use the same slope, $\lceil \Delta/2 \rceil$ is a trivial lower bound for the $k$-bend planar slope number of $G$,  irrespectively of $k$. Besides its theoretical interest, this problem forms a natural extension of two well-established graph drawing models: The \emph{orthogonal}~\cite{DBLP:journals/comgeo/BiedlK98,orthoChapterHandbook,DBLP:journals/siamcomp/GargT01,DBLP:journals/siamcomp/Tamassia87} and the \emph{octilinear} drawing models~\cite{DBLP:journals/jgaa/BekosG0015,DBLP:conf/latin/Bekos0016,DBLP:journals/jgaa/BodlaenderT04,Noellenburg05}, which both have several applications, such as in VLSI and floor-planning~\cite{DBLP:conf/focs/Leiserson80,DBLP:journals/tc/Valiant81}, and in metro-maps and map-schematization~\cite{DBLP:journals/vlc/HongMN06,DBLP:journals/tvcg/NollenburgW11,DBLP:journals/tvcg/StottRMW11}. Orthogonal drawings use only $2$ slopes for the edge segments ($0$ and $\frac{\pi}{2}$), while octilinear drawings use no more than $4$ slopes ($0$, $\frac{\pi}{4}$, $\frac{\pi}{2}$, and $\frac{3\pi}{4}$); consequently, they are limited to graphs with $\Delta \le 4$~and~$\Delta \le 8$,~respectively.

These two drawing models have been generalized to graphs with arbitrary maximum vertex degree $\Delta$ by Keszegh et al.~\cite{DBLP:journals/siamdm/KeszeghPP13}, who proved that every planar graph admits a $2$-bend planar drawing with $\lceil \Delta/2 \rceil$ equispaced slopes. As a witness of the tight connection between the two problems, the result by Keszegh et al. was built upon an older result for orthogonal drawings of degree-$4$ planar graphs by Biedl and Kant~\cite{DBLP:journals/comgeo/BiedlK98}. In the same paper, Keszegh et al. also studied the $1$-bend planar slope number and showed an upper bound of $2 \Delta$ and a lower bound of $\frac{3}{4}(\Delta - 1)$ for this parameter. The upper bound has been recently improved, initially by Knauer and Walczak~\cite{DBLP:conf/latin/KnauerW16} to $\frac{3}{2}(\Delta - 1)$ and subsequently by Angelini et al.~\cite{DBLP:conf/compgeom/AngeliniBLM17} to $\Delta-1$. Angelini et al. actually proved a stronger result: Given \emph{any} set $\Ss$ of $\Delta-1$ slopes, every planar graph with maximum vertex degree $\Delta$ admits a 1-bend planar drawing whose edge segments use only slopes in $\Ss$. Any such slope set is hence called \emph{universal} for $1$-bend planar drawings. This result simultaneously establishes the best-known upper bound on the $1$-bend planar slope number of planar graphs and the best-known lower bound on the angular resolution of $1$-bend planar drawings, i.e., on the minimum angle between any two edge segments incident to the same vertex. Indeed, if the slopes in $\Ss$ are equispaced, the resulting drawings have angular resolution at least $\frac{\pi}{\Delta-1}$.

In this paper we study slope sets that are universal for $k$-bend \emph{upward} planar drawings of directed graphs (or digraphs for short). Recall that in an upward drawing of a digraph $G$, every edge $(u,v)$ is drawn as a $y$-monotone non-decreasing curve from $u$ to $v$. Also, $G$ admits an upward planar drawing if and only if it is a subgraph of a planar $st$-graph~\cite{DBLP:journals/tcs/BattistaT88,DBLP:journals/dm/Kelly87}. As such drawings are common for representing planar digraphs, they have been extensively studied in the literature  (see, e.g.,~\cite{DBLP:journals/siamcomp/BertolazziBMT98,DBLP:journals/jea/ChimaniZ15,DBLP:reference/algo/Didimo16,DBLP:journals/siamcomp/GargT01,chapterHandbook}). A preliminary result for this setting is due to Di Giacomo et al.~\cite{DBLP:conf/gd/GiacomoLM16}, who proved that every series-parallel digraph with maximum vertex degree $\Delta$ admits a  $1$-bend upward planar drawing that uses at most $\Delta$ slopes, and this bound on the number of slopes is worst-case optimal. Notably, their construction gives rise to drawings with optimal angular resolution $\frac{\pi}{\Delta}$ (but it uses a predefined set of slopes). Upward drawings with one bend per edge and few slopes have also been studied for posets by Czyzowicz et al.~\cite{Czyzowicz1990}.

\begin{figure}[t]
\centering
\begin{subfigure}{0.45\textwidth}
	\centering
	\includegraphics[width=\textwidth,page=1]{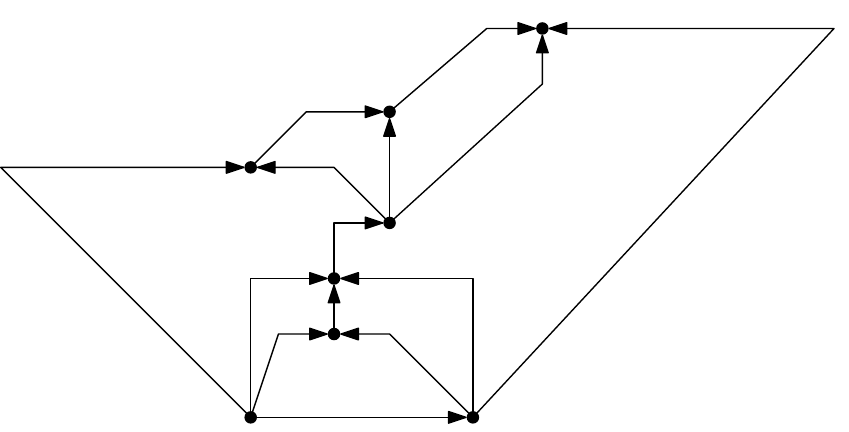}
	\caption{}
	\label{fi:example-1}
\end{subfigure}
\hfil
\begin{subfigure}{0.45\textwidth}
	\centering
	\includegraphics[width=\textwidth,page=2]{figs/intro}
	\caption{}
	\label{fi:example-2}
\end{subfigure}
\caption{(a) A $1$-bend upward planar drawing of a bitonic $st$-graph, and (b) a $2$-bend upward planar drawing of a planar $st$-graph, both defined on a slope set $\Ss=\{-\frac{\pi}{4},0,\frac{\pi}{4},\frac{\pi}{2},\pi\}$.}
\end{figure}

\smallskip\noindent{\bf Contribution.} We extend the study of universal sets of slopes to upward planar drawings, and present the first constructive technique that works for all planar $st$-graphs. This technique exploits a linear ordering of the vertices of a planar digraph introduced by Gronemann~\cite{DBLP:conf/gd/Gronemann16}, called \emph{bitonic $st$-ordering} (see also Section~\ref{sec:preliminaries}). We show that any set $\Ss$ of $\Delta$ slopes containing the horizontal slope is universal for $1$-bend upward planar drawings of  degree-$\Delta$ planar digraphs having a bitonic $st$-ordering (Section~\ref{sec:1bend}). We remark that the size of $\Ss$ is worst-case optimal~\cite{DBLP:conf/gd/GiacomoLM16} and, if the slopes of $\Ss$ are chosen to be equispaced, the angular resolution of the resulting drawing is at least $\frac{\pi}{\Delta}$ (also optimal); see Fig.~\ref{fi:example-1} for an illustration. We then extend our construction to all planar $st$-graphs by using two bends on a restricted number of edges (Section~\ref{sec:2bend}). More precisely, we show that, given a set $\Ss$ of $\Delta$ slopes containing the horizontal slope, every $n$-vertex upward planar digraph with maximum vertex degree $\Delta$ has a $2$-bend upward planar drawing that uses only slopes in $\Ss$ and with at most $4n-9$ bends in total; see  Fig.~\ref{fi:example-2} for an illustration. 

For space reasons some proofs are in appendix.

%\noindent{\bf Paper structure.} Section~\ref{sec:preliminaries} contains basic definitions and notation. In Section~\ref{sec:1bend} and in Section~\ref{sec:2bend}, we study universal slope sets for $1$-bend and $2$-bend upward planar drawings, respectively.  Conclusions and open problems are in Section~\ref{sec:conclusions}. For space reasons some proofs are in Appendix and the corresponding statements are marked with an asterisk ({\bf *}).

% ==================================================================
\section{Preliminaries}\label{sec:preliminaries}
% ==================================================================

We assume familiarity with common notation and definitions about graphs, drawings, and planarity (see, e.g.,~\cite{book}). %We only consider \emph{simple} graphs, i.e., graphs with neither loops nor multiple edges. 
%A \emph{directed graph}  (or \emph{digraph} for short) is a graph whose edges are oriented. 
%A \emph{drawing} $\Gamma$ of a graph $G$ is a mapping of the vertices of $G$ to distinct points of the plane, and of the edges of $G$ to Jordan arcs connecting their corresponding endpoints but not passing through any other vertex. A drawing is \emph{planar} if no two edges intersect, except possibly at a common endpoint. A \emph{planar graph} is a graph that admits a planar drawing. A planar drawing subdivides the plane into topologically connected regions, called \emph{faces}. The infinite region is called the \emph{outer face}. A \emph{planar embedding} of a planar graph is an equivalence class of planar drawings that define the same set of inner and outer faces. A planar embedding can be described by the clockwise circular order of the edges around each vertex together with the choice of the outer face. A planar graph with a given planar embedding is a \emph{plane graph}. A plane graph  is \emph{maximal} (or \emph{triangulated}) if the boundary of each face contains exactly $3$ vertices.
%
\begin{figure}[t]
\centering
\begin{subfigure}{0.49\textwidth}
	\centering
	\includegraphics[width=0.6\textwidth,page=1]{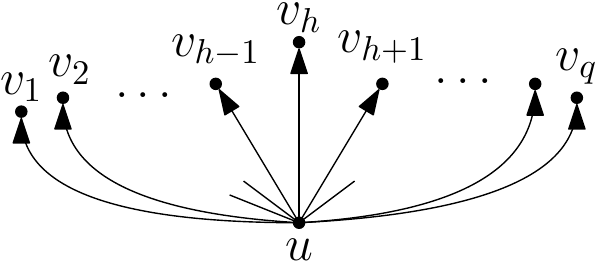}
	\caption{$\dots$$<$$\bo(v_{h-1})$$<$$\bo(v_{h})$$>$$\bo(v_{h+1})$$>$$\dots$}
	\label{fi:bitonic}
\end{subfigure}
\hfil
\begin{subfigure}{0.49\textwidth}
	\centering
	\includegraphics[width=0.6\textwidth,page=2]{figs/configurations}
	\caption{$(\bo(v_{i})$$>$$\bo(v_{i+1}))~\text{and}~(\bo(v_{j})$$<$$\bo(v_{j+1}))$}
	\label{fi:forbidden}
\end{subfigure}
\caption{(a) A bitonic sequence. (b) A forbidden configuration. }
\end{figure} 
An \emph{upward planar drawing} of a directed simple graph (or \emph{digraph} for short) $G$ is a planar drawing such that each edge of $G$ is drawn as a curve monotonically non-decreasing in the $y$-direction. An upward drawing is \emph{strict} if its edge curves are monotonically increasing. A digraph is \emph{upward planar} if it admits an upward planar drawing. Note that if a digraph admits an upward drawing then it also admits a strict upward drawing.  A digraph is upward planar if and only if it is a subgraph of a \emph{planar $st$-graph}~\cite{DBLP:journals/tcs/BattistaT88}. Let $G=(V,E)$ be an $n$-vertex planar $st$-graph, i.e., $G$ is a plane acyclic digraph with a single source $s$ and a single sink $t$, such that $s$ and $t$ belong to the boundary of the outer face and the edge $(s,t) \in E$~\cite{DBLP:journals/tcs/BattistaT88}. (Other works do not explicitly require the edge $(s,t)$ to be part of $G$, see, e.g.,~\cite{DBLP:conf/gd/Gronemann16}.) An \emph{$st$-ordering} of $G$ is a numbering $\bo: V \rightarrow \{1,2,\dots,n\}$ such that for each edge $(u,v) \in E$, it holds $\bo(u) < \bo(v)$ (which implies $\bo(s)=1$ and $\bo(t)=n$). Every planar $st$-graph has an $st$-ordering, which can be computed in $O(n)$ time (see, e.g.,~\cite{DBLP:books/daglib/0023376}).
If $u$ and $v$ are two adjacent vertices of $G$ such that $\bo(u)<\bo(v)$, we say that $v$ is a \emph{successor} of $u$, and $u$ is a \emph{predecessor} of $v$. Denote by $S(u) = \{v_1,v_2,\dots,v_q\}$ the sequence of successors of $v$ ordered according to the clockwise circular order of the edges incident to $u$ in the planar embedding of $G$. The sequence $S(u)$ is \emph{bitonic} if there exists an integer $1 \le h \le q$ such that $\bo(v_1) <  \dots < \bo(v_{h-1}) < \bo(v_h) > \bo(v_{h+1}) > \dots > \bo(v_q)$; see Fig.~\ref{fi:bitonic} for an illustration. Notice that when $h=1$ or $h=q$, $S(u)$ is actually a monotonic decreasing or increasing sequence. A \emph{bitonic $st$-ordering} of $G$ is an $st$-ordering such that, for every vertex $u \in V$, $S(u)$ is bitonic~\cite{DBLP:conf/gd/Gronemann16}. A planar $st$-graph $G$ is a \emph{bitonic $st$-graph} if it admits a bitonic $st$-ordering. Deciding whether $G$ is bitonic can be done in linear time both in the fixed~\cite{DBLP:conf/gd/Gronemann16} and in the variable~\cite{DBLP:conf/gd/ChaplickCCLNPTW17} embedding settings. If $G$ is not bitonic, every $st$-ordering $\bo$ of $G$ contains a forbidden configuration defined as follows. A sequence of successors $S(u)$ of a vertex $u$ forms a \emph{forbidden configuration} if there exist two indices $i$ and $j$, with $i<j$, such that $\bo(v_i) > \bo(v_{i+1})$ and $\bo(v_j) < \bo(v_{j+1})$, i.e. there is a path from $v_{i+1}$ to $v_i$ and a path from $v_j$ to $v_{j+1}$; see Fig.~\ref{fi:forbidden}.

Let $G=(V,E)$ be an $n$-vertex maximal plane graph with vertices $u$, $v$, and $w$ on the boundary of the outer~face. A \emph{canonical ordering}~\cite{DBLP:journals/combinatorica/FraysseixPP90} of $G$ is a linear ordering $\co = \{v_1=u,v_2=v,\dots,v_n=w\}$ of $V$, such that for every $3 \leq i \leq n$:
\begin{inparaenum}[\bfseries{C}1:]
\item\label{p:co1} The subgraph $G_i$ induced by $\{v_1,v_2,\dots,v_i\}$ is $2$-connected and internally triangulated, while the boundary of its outer face $C_i$ is a cycle containing $(v_1,v_2)$;
\item\label{p:co2} If $i+1\leq n$, $v_{i+1}$ belongs to $C_{i+1}$ and its neighbors in $G_{i}$ form a subpath of the path obtained by removing $(v_1,v_2)$ from $C_i$.
\end{inparaenum}

Computing $\co$ takes $O(n)$ time~\cite{DBLP:journals/combinatorica/FraysseixPP90}. Also, $\co$ is \emph{upward} if for every edge $(u,v)$ of a digraph $G$ $u$ precedes~$v$~in~$\co$.

The \emph{slope} of a line $\ell$ is the angle $\alpha$ that a horizontal line needs to be rotated counter-clockwise in order to make it overlap with $\ell$. If $\alpha=0$ we say that the slope of $\ell$ is \emph{horizontal}. The \emph{slope} of a segment is the slope of the line containing it. Let $\Ss=\{\alpha_1,\dots,\alpha_h\}$ be a set of $h$ slopes such that $\alpha_i < \alpha_{i+1}$. The slope set $\Ss$ is \emph{equispaced} if $\alpha_{i+1} - \alpha_i = \frac{\pi}{h}$, for $i=1,\dots,h-1$.  
Consider a \emph{$k$-bend planar drawing} $\Gamma$ of a graph $G$, i.e., a planar drawing in which every edge is mapped to a polyline containing at most $k+1$ segments. For a vertex $v$ in $\Gamma$ each slope $\alpha \in \Ss$ defines two different rays that emanate from $v$ and have slope $\alpha$. If $\alpha$ is horizontal these rays are called \emph{left horizontal} ray and \emph{right horizontal} ray. Otherwise, one of them is the \emph{top} and the other one is the \emph{bottom} ray of $v$. We say that a ray $r_v$ of a vertex $v$ is \emph{free} if there is no edge attached to $v$ through $r_v$ in $\Gamma$. We also say that $r_v$ is \emph{outer} if it is free and the first face encountered when moving from $v$ along $r_v$ is the outer face of $\Gamma$. 
The \emph{slope number} of a $k$-bend drawing $\Gamma$ is the number of distinct slopes used for the edge segments of $\Gamma$. The \emph{$k$-bend upward planar slope number} of an upward planar digraph $G$ is the minimum slope number over all $k$-bend upward planar~drawings~of~$G$. 

%Finally, to derive the time complexities of our algorithms, we will make use of the real RAM model of computation. 

\section{$1$-bend Upward Planar Drawings}\label{sec:1bend}
  
Let $G=(V,E)$ be an $n$-vertex planar $st$-graph with a bitonic $st$-ordering $\bo=\{v_1,v_2,\dots,v_n\}$; see, e.g., Fig.~\ref{fi:bitonic-1}. We begin by describing an augmentation technique to ``transform'' $\bo$ into an upward canonical ordering of a suitable supergraph $\auG$ of $G$. We start from a result by Gronemann~\cite{DBLP:conf/gd/Gronemann16}, whose properties are summarized in the following lemma; see, e.g., Fig.~\ref{fi:bitonic-2}.

\begin{lemma}[\cite{DBLP:conf/gd/Gronemann16}]\label{le:gronemann}
	Let $G=(V,E)$ be an $n$-vertex planar $st$-graph that admits a bitonic $st$-ordering $\bo=\{v_1,v_2,\dots,v_n\}$. There exists a planar $st$-graph $G'=(V',E')$ with an $st$-ordering $\co=\{v_L,v_R,v_1,v_2,\dots,v_n\}$ such that:
\begin{inparaenum}[(i)]
\item $V'=V\cup\{v_L,v_R\}$; 
\item $E \subset E'$ and $(v_L,v_R) \in E'$; 
\item $v_L$ and $v_R$ are on the boundary of the outer face of $G'$; 
\item\label{p:predecessors} Every vertex of $G$ with less than two predecessors in $\bo$ has exactly two predecessors in $\co$.
\end{inparaenum}
\noindent Also, $G'$ and $\co$ are computed in $O(n)$~time.
\end{lemma}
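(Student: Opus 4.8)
The plan is to build $G'$ by augmenting $G$ with the two vertices $v_L,v_R$ together with a carefully chosen set of \emph{forward} edges (edges whose endpoints increase in the order $\co=\{v_L,v_R,v_1,\dots,v_n\}$). Because every added edge points from a smaller to a larger vertex in $\co$, the numbering $\co$ will automatically be an $st$-ordering of the result and the augmented graph will stay acyclic, so I never have to re-verify these two properties. First I would insert $v_L,v_R$ into the outer face to the left of $G$, add the edge $(v_L,v_R)$, make $v_L$ the new unique source and keep $t=v_n$ as the unique sink. Since $s=v_1$ has no predecessor in $\bo$ and the only vertices below $s$ in $\co$ are $v_L,v_R$, its two predecessors in $\co$ must be exactly $v_L$ and $v_R$: this is precisely why two outer vertices are needed. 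Adding the edges making $v_L,v_R$ predecessors of $s$, and the outer-boundary edges realizing the required source--sink edge, already yields (i), (ii), (iii) and the part of (iv) concerning $s$.

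The core of the argument is then to give every remaining deficient vertex exactly one new incoming edge. Let $v\neq s$ have a unique predecessor $p$, so that $p\to v$ is its only incoming edge, and let $f_\ell,f_r$ be the two faces incident to $p\to v$. As the incoming edges at every vertex of a planar $st$-graph form a contiguous block in the rotation, $v$ is a successor of $p$, say $v=w_m$ in $S(p)=(w_1,\dots,w_q)$. The local criterion I would use is this: $v$ can be raised to two predecessors \emph{unless} every vertex other than $p$ on both $f_\ell$ and $f_r$ has a larger $\bo$-value than $v$. Indeed, if some boundary vertex $q\neq p$ of (say) $f_\ell$ has $\bo(q)<\bo(v)$, then the chord $q\to v$ can be drawn inside $f_\ell$: it is planar (a chord between two vertices of one face), forward, and new, and it gives $v$ exactly one extra predecessor.

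The main obstacle, and the point where bitonicity is essential, is to show that in a bitonic $st$-graph the bad case never occurs. I would split on the position of $v=w_m$ in $S(p)$. If $v$ is an extreme successor ($m=1$ or $m=q$), one of $f_\ell,f_r$ is bounded by an incoming edge of $p$ or is the outer face: in the former case this face has a source strictly below $p$ (hence below $v$) that serves as $q$; in the latter case $v$ lies on the outer boundary and we attach it instead to $v_L$ or $v_R$. If $v$ is an interior successor, then $f_\ell,f_r$ are the two inter-successor faces of $p$ flanking $v$, both having $p$ as their source, and the only vertices on them that can be lower than $v$ are the neighbours $w_{m-1},w_{m+1}$; hence the bad case arises \emph{exactly} when $\bo(w_{m-1})>\bo(w_m)<\bo(w_{m+1})$, i.e.\ when $w_m$ is a local minimum of $S(p)$. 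But a local minimum in $S(p)$ is precisely a forbidden configuration (a decreasing step followed by an increasing step), which a bitonic sequence cannot contain. This gives both directions: every deficient vertex of a bitonic $st$-graph admits a valid second predecessor, and conversely any un-augmentable vertex forces a forbidden configuration, so bitonicity is exactly the hypothesis that makes the construction go through.

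Finally I would check that all these additions coexist planarly. Each new edge lies inside a single original face of $G$ (or in the outer region for edges at $v_L,v_R$); within one inter-successor face the chosen chord $w_{m-1}\to w_m$ (or $w_{m+1}\to w_m$) is the lowest chord of that face and cuts off the triangle $p\,w_{m-1}\,w_m$, so chords of distinct faces cannot cross, and the edges emanating from $v_L$ (resp.\ $v_R$) to boundary vertices form a non-crossing fan; equivalently, one may triangulate each internal face by forward chords and then retain a single incoming chord per deficient vertex, which keeps the set non-crossing and enforces the \emph{exactly two} requirement of (iv). This produces a planar $st$-graph $G'$ with $st$-ordering $\co$ satisfying (i)--(iv). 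For the running time, a bitonic $st$-ordering and the face structure of $G$ are available in $O(n)$ time, the deficient vertices are found in a single scan, and each is resolved after inspecting only its unique predecessor and the two flanking faces, so the whole augmentation runs in $O(n)$ time.
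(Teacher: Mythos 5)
Note first that the paper does not actually prove this lemma itself; it imports it from Gronemann and only proves the strengthened variant, Lemma~\ref{le:gronemann2}, in the appendix. That proof shows the intended construction: vertices are added one at a time following $\co$, each deficient vertex $v_i$ with unique predecessor $u$ receives a dummy edge from a vertex $w$ lying \emph{next to $u$ on the outer boundary of the current partial graph}, and bitonicity is invoked to show that all edges from $u$ to later vertices leave $u$ consecutively starting at $(u,v_i)$, so the newly inserted edge cannot block them. Your proposal replaces this incremental sweep by a static, per-face chord-insertion argument, and as written it has a genuine gap.

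The gap is the planarity of the \emph{union} of the added chords. Your rules treat each deficient vertex in isolation, and your concluding paragraph only argues that chords in distinct faces cannot cross and that a chord joining two consecutive successors of a face source cuts off a triangle at the apex. But your own rules can place two interleaving chords inside the same face. Concretely, let $f$ be an inner face with source $z$, left boundary path $z \to z_1 \to p \to v \to \dots$ and right boundary path $z \to z_2 \to \dots$. If $v$ has unique predecessor $p$, then $v$ is an extreme successor of $p$ and $f$ is flanked at $p$ by the incoming edge $(z_1,p)$, so your extreme-successor rule inserts the chord $(z,v)$ into $f$ (the face source $z$ is the $q$ you choose). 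If additionally $z_2$ has unique predecessor $z$ and $\bo(z_1)<\bo(z_2)$, your interior rule inserts $(z_1,z_2)$ into the same face $f$. The endpoint pairs $\{z,v\}$ and $\{z_1,z_2\}$ interleave along the boundary cycle of $f$, so these chords cross, and nothing in the proposal coordinates the choices to prevent this. The fallback you offer --- triangulate every face by forward chords and then retain one incoming chord per deficient vertex --- does not repair the argument, because a forward triangulation need not give a deficient vertex any incoming chord at all: triangulating a face by joining every vertex to the face sink (exactly what the paper does in the proof of Lemma~\ref{le:bitonic-time}) produces only an \emph{outgoing} chord at $v$. Relatedly, your claim that on the two faces flanking $(p,v)$ the only vertices that can precede $v$ in $\bo$ are $w_{m-1}$ and $w_{m+1}$ is false: a boundary path of such a face may pass through many vertices whose $\bo$-values lie strictly between $\bo(w_{m-1})$ and $\bo(w_m)$. (The implication you actually need --- that the bad case forces a local minimum in $S(p)$, impossible for a bitonic sequence --- survives, but the ``exactly when'' characterization, and the resulting restriction of all chords to consecutive-successor chords, do not.) The incremental construction behind the paper's Lemma~\ref{le:gronemann2} sidesteps all of this: every dummy edge is added inside the current outer face, so it can only conflict with edges yet to come, and bitonicity is used precisely to show those future edges are not blocked.
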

We call $G'$ a \emph{canonical augmentation} of $G$. Observe that $G'$ always contains the edges $(v_L,v_1)$ and $(v_R,v_1)$ because of~(\ref{p:predecessors}). We also insert the edge $(v_L,v_n)$, which is required according to our definition of $st$-graph; this addition is always possible because $v_L$ and $v_n$ are both on the boundary of the outer face. The next lemma shows that any planar $st$-graph obtained by triangulating $G'$ admits an upward canonical ordering; see, e.g., Fig.~\ref{fi:bitonic-3}.  

\begin{figure}[t]
\centering
\begin{subfigure}{0.32\textwidth}
	\centering
	\includegraphics[width=\textwidth,page=1]{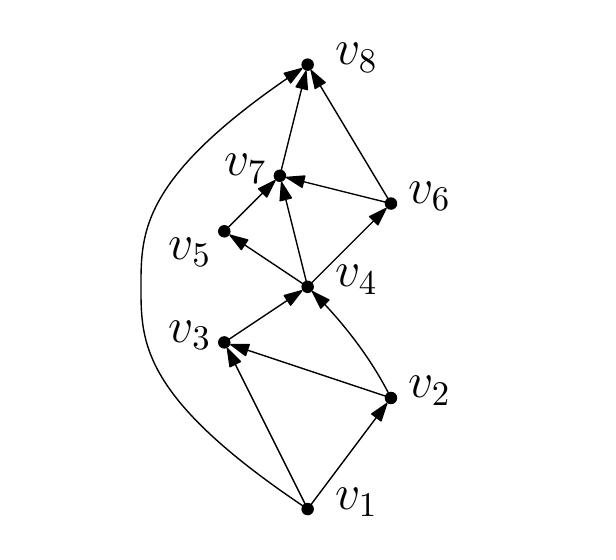}
	\caption{}
	\label{fi:bitonic-1}
\end{subfigure}
\begin{subfigure}{0.32\textwidth}
	\centering
	\includegraphics[width=\textwidth,page=2]{figs/bitonic}
	\caption{}
	\label{fi:bitonic-2}
\end{subfigure}
\begin{subfigure}{0.32\textwidth}
	\centering
	\includegraphics[width=\textwidth,page=3]{figs/bitonic}
	\caption{}
	\label{fi:bitonic-3}
\end{subfigure}
\caption{(a) A bitonic $st$-graph $G$ with $\bo=\{v_1,v_2,\dots,v_8\}$. (b) A canonical augmentation $G'$ of $G$ with $\co=\{v_L,v_R,v_1,v_2,\dots,v_8\}$. (c) A planar $st$-graph $\auG$ obtained by triangulating $G'$. $\co$ is an upward canonical ordering of $\auG$. }
\end{figure}

\begin{lemma}\label{le:augmentation}
Let $G'$ be a canonical augmentation of an $n$-vertex bitonic $st$-graph $G$. Every planar $st$-graph $\auG$ obtained by triangulating $G'$ has the following properties:
\begin{inparaenum}[(a)]
\item\label{p:simpletr} it has no parallel edges;
\item\label{p:upwardco} $\co=\{v_L,v_R,v_1,v_2,\dots,v_n\}$ is an upward canonical ordering. 
\end{inparaenum} 
\end{lemma}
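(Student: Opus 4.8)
The plan is to prove the two items separately, treating~(\ref{p:upwardco}) as the substantial part. Write the vertices in $\co$-order as $w_1=v_L,\,w_2=v_R,\,w_3=v_1,\dots,w_{n+2}=v_n$, and let $\auG_i$ denote the subgraph of $\auG$ induced by $\{w_1,\dots,w_i\}$, with outer boundary $C_i$. Item~(\ref{p:simpletr}) is comparatively routine: digraphs here are simple and $G'$ is simple by Lemma~\ref{le:gronemann}, while the single added edge $(v_L,v_n)$ lies on the outer face and is new, hence creates no parallel edge; it then suffices to triangulate without duplicating edges. For this I would use that in a planar $st$-graph every face is bounded by two directed paths sharing only their bottom and top vertex, hence by a \emph{simple} cycle, so two vertices on a common face are adjacent only when consecutive along it. Thus each chord inserted during triangulation joins non-adjacent vertices, and inserting chords only between currently non-adjacent vertices never creates a parallel edge.

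For~(\ref{p:upwardco}) I first dispose of \emph{upwardness}. By Lemma~\ref{le:gronemann}, $\co$ is an $st$-ordering (topological order) of $G'$; triangulating inserts inside each face only chords that can be oriented from the $\co$-smaller to the $\co$-larger endpoint, so $\co$ remains a topological order of $\auG$. Hence every edge $(x,y)$ of $\auG$ has $x$ preceding $y$ in $\co$, which is exactly the upward condition. It remains to verify \textbf{C1}--\textbf{C2}. I would argue by induction on $i$ that $\auG_i$ is $2$-connected and internally triangulated, that $C_i$ is a simple cycle containing $(v_L,v_R)$, and that the neighbours of $w_{i+1}$ in $\auG_i$ form a contiguous subpath of $C_i$ avoiding the edge $(v_L,v_R)$. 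The base case $\auG_3$ is the triangle $v_Lv_Rv_1$, using $(v_L,v_1),(v_R,v_1)\in E'$.

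The heart of the induction, and the main obstacle, is the \emph{contiguity} of the attachment. Since $\auG_i$ contains exactly the vertices with $\co$-index at most $i$, the neighbours of $w_{i+1}$ in $\auG_i$ are precisely its predecessors. I would first note that each predecessor lies on $C_i$: an interior vertex of the internally triangulated, $2$-connected graph $\auG_i$ has all incident edges inside $\auG_i$, so a vertex joined to the exterior vertex $w_{i+1}$ must be on $C_i$. Next I would invoke the standard property of planar $st$-graphs that the incoming edges around a vertex are consecutive in the rotation; listing the predecessors as $p_1,\dots,p_k$ in this order, each pair of consecutive in-edges $(p_j,w_{i+1}),(p_{j+1},w_{i+1})$ bounds a triangular face $w_{i+1}p_jp_{j+1}$, forcing $(p_j,p_{j+1})$ to be an edge of $C_i$. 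Thus $p_1\cdots p_k$ is a subpath of the simple cycle $C_i$; and since $(v_L,v_R)$ bounds the eventual outer triangle $v_Lv_Rv_n$ it is never covered, hence not among these edges, giving \textbf{C2}. Property~(\ref{p:predecessors}) of Lemma~\ref{le:gronemann} guarantees $k\ge 2$, so $w_{i+1}$ attaches to at least two boundary vertices and $2$-connectivity is preserved. Finally, for each intermediate $p_j$ (with $1<j<k$) the two triangles $w_{i+1}p_{j-1}p_j$ and $w_{i+1}p_jp_{j+1}$ together fill the entire exterior wedge at $p_j$, so $(p_j,w_{i+1})$ is its only edge leaving $\auG_i$; hence $p_2,\dots,p_{k-1}$ drop off the boundary and $C_{i+1}$ is obtained from $C_i$ by replacing $p_1\cdots p_k$ with $p_1\,w_{i+1}\,p_k$, which re-establishes \textbf{C1}. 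Taking $i+1=n+2$, the global sink $v_n$ attaches to all of $C_{n+1}$ except $(v_L,v_R)$, whose endpoints are $v_L$ and $v_R$, so the outer face of $\auG$ is the triangle $v_Lv_Rv_n$, matching the canonical-ordering requirement with outer vertices $v_L,v_R,v_n$.
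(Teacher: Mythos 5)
Your argument for part~(a) rests on a false claim. In a planar $st$-graph it is \emph{not} true that two vertices on a common face are adjacent only when they are consecutive along its boundary: the face boundary is indeed a simple cycle, but the two vertices may be joined by an edge embedded in a \emph{different} face. For instance, take the edge $(s,t)$ together with two paths $s \rightarrow a \rightarrow t$ and $s \rightarrow b \rightarrow t$; the face bounded by the two paths contains $s$ and $t$ non-consecutively, yet they are adjacent. So a chord inserted while triangulating can perfectly well duplicate an existing edge, and ruling this out is precisely the content of~(a). The paper does so via Lemma~\ref{le:gronemann}(\ref{p:predecessors}): two parallel edges joining $u$ to $v$ would bound a $2$-cycle which, since $\auG$ is triangulated, must enclose at least one vertex, and the enclosed vertex of smallest $\co$-value would then have $u$ as its only predecessor, a contradiction. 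Your proof of~(a) never invokes the two-predecessor property, and without it the statement simply fails for general planar $st$-graphs; this is the missing idea.

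Part~(b) has a gap at its load-bearing step. You assert that an interior vertex of $\auG_i$ ``has all incident edges inside $\auG_i$'' and you call $w_{i+1}$ ``the exterior vertex''; both presuppose that every vertex $w_j$ with $j>i$ is embedded in the outer face of $\auG_i$. This does not follow from $\auG_i$ being internally triangulated: its inner faces are triangles \emph{as faces of $\auG_i$}, but nothing you have said prevents a later vertex of $\auG$ from being embedded inside one of those triangles, in which case $w_{i+1}$ does not attach along $C_i$ at all and your boundary-update induction collapses. Excluding this configuration is exactly where the paper uses the ordering: if some $v_j$ with $j>i$ lay inside an inner face of $\auG_i$, then, since $\co$ is an $st$-ordering and all vertices on that face have smaller $\co$-value, there is no directed path from $v_j$ to any of them; but either $v_j=v_n$ or $v_j$ has a directed path to $v_n$, and both options contradict the fact that $v_n$ lies on the outer face of $\auG$. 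You prove upwardness first, so this tool is available to you, but you never deploy it, and as written the induction is circular at this point. For what it is worth, your contiguity argument itself (consecutive incoming edges at $w_{i+1}$ plus triangular faces forcing each $(p_j,p_{j+1})$ onto $C_i$) is a sound and more explicit route to \myref{C2} than the paper's one-line appeal to triangulation --- but it only becomes valid once the embedding claim above is actually proved.
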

\begin{proof}
Concerning Property~(\ref{p:simpletr}), suppose for a contradiction that $\auG$ has two parallel edges $e_1$ and~$e_2$ connecting $u$ with~$v$. Let $\mathcal{C}$ be the $2$-cycle formed by $e_1$ and $e_2$ and let $V_\mathcal{C}$ be the set of vertices distinct from $u$ and $v$ that are inside $\mathcal C$ in the embedding of $\auG$. $V_\mathcal{C}$ is not empty, as otherwise $\mathcal C$ would be a non-triangular face of $\auG$. Let $w$ be the vertex with the lowest number in $\co$ among those in $V_\mathcal{C}$. Since $\auG$ is planar (in particular $e_1$ and $e_2$ are not crossed) and has a single source, it contains a directed path from $u$ to every vertex in $V_\mathcal{C}$. Hence, it has an edge from $u$ to $w$. Also, by assumption, there is no vertex $z$ in $V_\mathcal{C}$ such that $\co(z) < \co(w)$, which implies that $u$ is the only predecessor of $w$ in $\co$, a contradiction to Lemma~\ref{le:gronemann}(\ref{p:predecessors}). Concerning Property~(\ref{p:upwardco}), if $\co$ is a canonical ordering of $\auG$, then $\co$ is actually an upward canonical ordering because it is also an $st$-ordering. To see that $\co$ is a canonical ordering, observe first that $v_L$, $v_R$ and $v_n$ are on the boundary of the outer face of $\auG$ by construction. Denote by $\auG_i$ the subgraph of $\auG$ induced by $\{v_L,v_R,v_1,\dots,v_i\}$ and let $\auC_i$ be the boundary of its outer face. We first prove by induction on $i$ (for $i=1,2,\dots,n$) that $\auG_i$ is $2$-connected. In the base case $i=1$, $\auG_1$ is a $3$-cycle and therefore it is $2$-connected. In the case $i>1$, $\auG_{i-1}$ is $2$-connected by induction and $v_{i}$ has at least two predecessors in $\auG_{i-1}$ by Lemma~\ref{le:gronemann}(\ref{p:predecessors}), thus $\auG_{i}$ is $2$-connected. We now prove that each $\auG_i$, for $i=1,2,\dots,n$, is internally triangulated, which concludes the proof of condition \myref{C\ref{p:co1}} of canonical ordering. Suppose, for a contradiction, that there exists an inner face $f$ that is not a triangle. Since $\auG$ is triangulated, there exists a vertex $v_j$, with $j > i$, that is embedded inside $f$ in $\auG_j$. Since $\co$ is an $st$-ordering, there is no directed path from $v_{j}$ to any vertex of $f$. On the other hand, either $v_{j}=v_n$ or there is a directed path from $v_{j}$ to $v_n$. Both cases contradict the fact that $v_n$ belongs to the boundary of the outer face of $\auG$. We finally show that $v_{i}$ belongs to $C_{i}$, for $i=1,2,\dots,n$. Since we already proved that $\auG_{i}$ is triangulated, this is enough to prove \myref{C\ref{p:co2}}. By the planarity of $\auG_i$, there is a face $f$ in $\auG_{i-1}$ such that all the neighbors of $v_{i}$ in $\auG_{i-1}$ belong to the boundary of $f$. We claim that $f$ is the outer face of $\auG_{i}$. If it was an inner face, then $v_{i}$ would be embedded inside $f$ in $\auG_{i}$ and, by the same argument used above,  $v_n$ would not belong to the boundary of the outer face of $\auG$.
\end{proof}

We now show that any set of $\Delta$ slopes $\Ss$ that contains the horizontal slope is universal for $1$-bend upward planar drawings of bitonic $st$-graphs. The algorithm is inspired by a technique of Angelini et al.~\cite{DBLP:conf/compgeom/AngeliniBLM17}. We will use important additional tools with respect to~\cite{DBLP:conf/compgeom/AngeliniBLM17}, such as the construction of a triangulated canonical augmentation,  extra slopes to draw the edges inserted by the augmentation procedure, and different geometric invariants.
Let $G$ be an $n$-vertex bitonic $st$-graph with maximum vertex degree $\Delta$; see Fig.~\ref{fi:bitonic-1}. The algorithm first computes a triangulated canonical augmentation $\auG$ of $G$; see Figs.~\ref{fi:bitonic-2}--\ref{fi:bitonic-3}. We call \emph{dummy edges} all edges that are in $\auG$ but not in $G$ and \emph{real edges} the edges in $\auG$ that are also in $G$. By~Lemma~\ref{le:augmentation}, $\auG$ admits an upward canonical ordering $\co=\{v_L,v_R,v_1,v_2,\dots,v_n\}$, where $\co$ is an $st$-ordering such that each vertex distinct from $v_L$ and $v_R$ has at least two predecessors. Let $\Ss=\{\rho_1,\dots,\rho_\Delta\}$ be any set of $\Delta$ slopes, which we call \emph{real slopes}. Let $\rho^*$ be the smallest angle between two slopes in $\Ss$ and let $\Delta^*$ be the maximum number of dummy edges incident to a vertex of $\auG$. For each slope $\rho_i$ ($1 \le i \le \Delta)$, we add $\Delta^*$ \emph{dummy slopes} $\{\delta^i_1,\dots,\delta^i_{\Delta^*}\}$ such that $\delta^i_j = \rho_i + j \cdot \frac{\rho^*}{\Delta^*+1}$, for $j=1,2,\dots, \Delta^*$. Hence, there are $\Delta^*$ dummy slopes between any two consecutive real slopes. We will use the real slopes for the real edges and the dummy slopes for the dummy ones. 

Let $\auG_i$ be the subgraph of $\auG$ induced by $\{v_L,v_R,v_1,v_2,\dots,v_i\}$. The algorithm  constructs the drawing by adding the vertices according to $\co$. More precisely, it computes a drawing $\auGamma_i$ of the digraph $\auG^-_i$ obtained from $\auG_i$ by removing the dummy edges  $(v_L,v_R)$ and $(v_1,v_R)$, which exist by construction, and $(v_R,v_2)$ if it exists. Let $\auC_i$ be the boundary of the outer face of $\auG_i$, and let $\auP_i$ be the path obtained by removing $(v_L,v_R)$ from $\auC_i$.
For a vertex $v$ of $\auP_i$, we denote by $\nr{v}{i}$ (resp.\ $\nd{v}{i}$) the number of real (resp.\ dummy) edges incident to $v$ that are not in $\auG_{i}$ and by $\bcw{j}(v,i)$ (resp.\ $\bccw{j}(v,i)$) the $j$-th outer real top ray in $\auGamma_{i}$ encountered in clockwise (resp.\ counterclockwise) order around $v$ starting from the left (resp.\ right) horizontal ray.
For dummy top rays, we define analogously $\rcw{j}(v,i)$ and $\rccw{j}(v,i)$.
\begin{figure}[t]
\centering
\begin{subfigure}{0.49\textwidth}
	\centering
	\includegraphics[width=\textwidth,page=1]{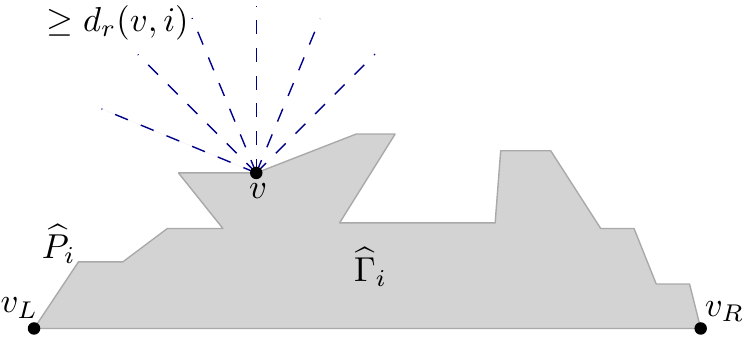}
	\caption{\myref{I3}}
	\label{fi:invariants-i3}
\end{subfigure}
\hfil
\begin{subfigure}{0.49\textwidth}
	\centering
	\includegraphics[width=\textwidth,page=2]{figs/invariants}
	\caption{\myref{I4}--\myref{I5}}
	\label{fi:invariants-i4-i5}
\end{subfigure}
%\begin{subfigure}{0.32\textwidth}
%		\centering
%		\includegraphics[width=\textwidth,page=1]{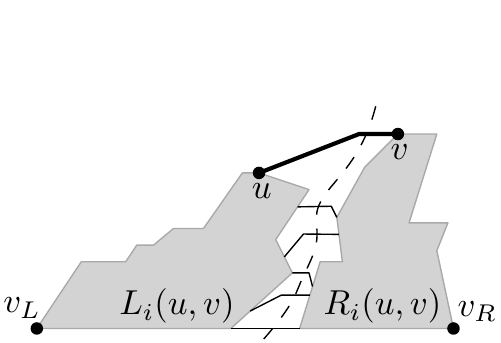}
%		\caption{}
%		\label{fi:cut1}
%	\end{subfigure}
%	\hfil
%	\begin{subfigure}{0.32\textwidth}
%		\centering
%		\includegraphics[width=\textwidth,page=2]{figs/cut}
%		\caption{}
%		\label{fi:cut2}
%	\end{subfigure}
%	\begin{subfigure}{0.32\textwidth}
%		\centering
%		\includegraphics[width=\textwidth,page=3]{figs/cut}
%		\caption{}
%		\label{fi:cut3}
%	\end{subfigure}
	\begin{subfigure}{0.32\textwidth}
		\centering
		\includegraphics[width=\textwidth,page=1]{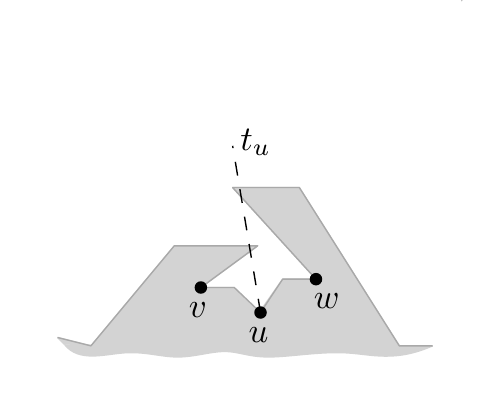}
		\caption{}
		\label{fi:stretch1}
	\end{subfigure}
	\hfil
	\begin{subfigure}{0.32\textwidth}
		\centering
		\includegraphics[width=\textwidth,page=2]{figs/stretch}
		\caption{}
		\label{fi:stretch2}
	\end{subfigure}
\caption{(a)-(b) Illustration for Invariants \myref{I3}--\myref{I5}; real rays are dashed, dummy rays are dotted. (c)-(d) Illustration for Lemma~\ref{le:stretching2}.}
\end{figure}
$\auGamma_i$  satisfies~the~following~{invariants}:
\begin{description}
	\item[{I1}] $\auGamma_i$ is a $1$-bend upward planar drawing whose real edges use only slopes in $\Ss$.
	
	\item[{I2}] Every edge of $\auP_i$ contains a horizontal segment.
	
	\item[{I3}] Every vertex $v$ of $\auP_i$ has at least $\nr{v}{i}$ outer real top rays; see Fig.~\ref{fi:invariants-i3}.
	
	\item[{I4}] Every vertex $v$ of $\auP_i$ has at least $\nd{v}{i}$ outer dummy top rays between $\rcw{1}(v,i)$ and $\bcw{1}(v,i)$ (resp.\ $\rccw{1}(v,i)$ and $\bccw{1}(v,i)$), including $\rcw{1}(v,i)$ (resp.\ $\rccw{1}(v,i)$); see Fig.~\ref{fi:invariants-i4-i5}.   
	\item[{I5}] Let $\ell$ be any horizontal line and let $p$ and $p'$ be any two intersection points between $\ell$ and the polyline representing $\auP_i$ in $\auGamma_{i}$; walking along $\ell$  from left to right, $p$ and $p'$ are encountered in the same order as when walking along $\auP_i$ from $v_L$ to $v_R$; see Fig.~\ref{fi:invariants-i4-i5}.
\end{description}

The last vertex $v_n$ is added to $\auGamma_{n-1}$ in a slightly different way and the resulting drawing will satisfy \myref{I1}. 
The next two lemmas state  important properties of any $1$-bend upward planar drawing satisfying \myref{I1}--\myref{I5}. Similar lemmas are proven in~\cite[Lemmas 2 and 3]{DBLP:conf/compgeom/AngeliniBLM17}, but for drawings that satisfy different invariants.

\begin{lemma}\label{le:stretching}
	Let $\auGamma_i$ be a drawing of $\auG^-_i$ that satisfies Invariants \myref{I1}--\myref{I5}. Let $(u,v)$ be any edge of $\auP_i$ such that $u$ is encountered before $v$  along $\auP_i$ when going from $v_L$ to $v_R$, and let $\lambda$ be a positive number. There exists a drawing $\auGamma'_i$ of $\auG^-_i$ that satisfies Invariants \myref{I1}--\myref{I5} and such that: (i) the horizontal distance between $u$ and $v$ is increased by $\lambda$; (ii) the horizontal distance between any two other consecutive vertices along $\auP_i$ is the same as in $\auGamma_i$.
\end{lemma}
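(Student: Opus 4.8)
The plan is to realise the required increase in horizontal distance as a rightward \emph{shift}: I keep $u$, and every vertex preceding it along $\auP_i$, in place, and I translate $v$ together with an appropriate ``right region'' of $\auGamma_i$ by the vector $(\lambda,0)$, absorbing the entire displacement inside the horizontal segment that $(u,v)$ is guaranteed to contain by \myref{I2}. Concretely, let $s_{uv}$ be the horizontal segment of $(u,v)$, let $p$ be an interior point of $s_{uv}$, and set $x_0=x(p)$. The first fact I would establish is that \myref{I5} forces $\auP_i$ to be $x$-monotone when traversed from $v_L$ to $v_R$: if two edges of $\auP_i$ had inverted $x$-ranges, some horizontal line would meet $\auP_i$ in two points whose left-to-right order disagrees with their order along the path, contradicting \myref{I5}. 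Consequently the vertical line $x=x_0$ meets $\auP_i$ only at $p$ (on $s_{uv}$), and, since $\auP_i$ is the upper boundary of the drawing, the ray going upward from $p$ stays in the outer face.

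Next I would define the set $R$ of vertices to be translated by $(\lambda,0)$ as $v$ together with the suffix $w_{k+1},\dots,w_m$ of $\auP_i$ and the part of the drawing lying to the right of a cut that starts at $p$, goes up to infinity in the outer face, and descends from $p$ through the interior. Edges with both endpoints in $R$, or both in its complement, are translated (respectively kept) rigidly and therefore preserve their shape, their slopes, and their local planarity; the gap between $u$ and $v$ then grows by exactly the inserted horizontal length $\lambda$, while every other pair of consecutive vertices of $\auP_i$ moves as a rigid block, giving conclusions (i) and (ii). For the unique contour edge $(u,v)$ whose endpoints straddle the cut, I simply lengthen $s_{uv}$ by $\lambda$: this keeps it a $1$-bend upward edge, does not alter the slope of its non-horizontal segment, and re-establishes \myref{I1} and \myref{I2} for it.

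The crux of the argument, and the step I expect to be the main obstacle, is the treatment of the edges \emph{other} than $(u,v)$ that are crossed by the cut. A horizontal displacement is free of charge only when it is absorbed along a horizontal segment: inserting horizontal length into a sloped segment would create new bends and would change a slope, destroying \myref{I1}. The plan is therefore to route the descending part of the cut so that every edge it meets is crossed \emph{on a horizontal segment}, along which the length $\lambda$ can then be inserted exactly as for $s_{uv}$; \myref{I2} supplies such a segment for the edges of $\auP_i$, and I would argue, following the cut face by face downward from $p$ and using the $x$-monotonicity derived from \myref{I5} together with planarity, that the crossing can always be placed on a horizontal portion. Proving the existence of such a routing, equivalently that the two sides of the cut induce a partition whose crossing edges are stretched only along horizontal segments, is the delicate technical point; once it is in place, the shift is well defined and slope-preserving.

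Finally I would check that $\auGamma'_i$ still satisfies \myref{I1}--\myref{I5}. Planarity and upwardness survive because the transformation is a rigid horizontal translation of the right region composed with a horizontal elongation of the crossed segments, neither of which can introduce a crossing or make an edge non-monotone in $y$; here \myref{I5} is invoked once more to guarantee that the translated block, lying entirely to the right, does not collide with the fixed block. Invariants \myref{I3} and \myref{I4} refer only to the free top rays at the vertices of $\auP_i$, which are invariant under horizontal translation and under elongation of horizontal segments, so they transfer verbatim, and \myref{I5} is preserved since $x$-monotonicity of $\auP_i$ is clearly maintained by shifting a suffix of the path to the right. I thus expect the combinatorial description of $R$ and the verification that stretched edges are cut along horizontal segments to be the heart of the proof, with the re-establishment of the invariants being comparatively routine.
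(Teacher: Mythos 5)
Your high-level strategy --- split the drawing into a left and a right part, translate the right part by $(\lambda,0)$, and absorb the displacement inside horizontal segments of the edges crossing the cut --- is exactly the paper's. But the two facts you rely on to realise the cut are, respectively, false and unproven, and they are the heart of the lemma. First, \myref{I5} does \emph{not} force $\auP_i$ to be $x$-monotone: \myref{I5} only compares points of $\auP_i$ lying at the \emph{same} height, so a path edge whose first segment ascends towards the upper left and whose second, horizontal, segment then proceeds rightwards backtracks in $x$ without ever placing two path points at equal height in inverted order. Such edges genuinely occur in the construction (the ray $t_{1}=\bccw{1}(u_1,i-1)$ may point up and to the left), so you cannot appeal to $x$-monotonicity, nor to its claimed consequence that the vertical line $x=x_0$ meets $\auP_i$ only at $p$.

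Second, and more importantly, the ``delicate technical point'' you defer --- routing the descending part of the cut so that every edge it crosses is crossed on a horizontal segment --- is precisely what has to be proven, and a geometric, face-by-face descent cannot establish it in general: the interior edges $(u_j,v_i)$ with $1<j<q$ are drawn with one top-ray segment and one bottom-ray segment and contain \emph{no} horizontal segment at all (\myref{I2} covers only edges of $\auP_i$), so the descent can reach interior faces offering no horizontal crossing point. The paper avoids this trap by constructing the cut \emph{combinatorially, by induction on the construction steps}: for every edge $(u,v)$ of $\auP_i$ it maintains a vertex partition $(L_i(u,v),R_i(u,v))$ all of whose crossing edges contain horizontal segments. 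The induction is what makes this work: if $(u,v)$ already belonged to $\auP_{i-1}$, the fan $u_1,\dots,u_q$ lies entirely on one side, $v_i$ joins that side, and no new edge crosses the cut; if $(u,v)$ is the new path edge $(u_1,v_i)$, the cut maintained for $(u_1,u_2)$ is reused with $v_i$ placed on the opposite side, and the \emph{only} new crossing edge is $(u_1,v_i)$ itself, which has a horizontal segment by \myref{I2}. In other words, the edges crossing the cut are always current or former path edges --- never the two-sloped interior edges --- and this inductive bookkeeping, which your proposal lacks, is the actual content of the proof. Your final verification of \myref{I1}--\myref{I5}, including the use of \myref{I5} to show the shifted drawing stays planar, matches the paper and is fine once the cut exists.
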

%\begin{proofsketch}
%	One can prove by induction on $i$ that there exists a cut $(L_i(u,v),R_i(u,v))$ such that: (i) the vertices of the subpath of $\auP_{i}$ from $v_L$ to $u$ belong to $L_i(u,v)$, while the vertices of the subpath of $\auP_{i}$ from $v$ to $v_R$ belong to $R_i(u,v)$; (ii) every edge that crosses the cut has a horizontal segment (see Fig.~\ref{fi:cut1}).  Then the drawing $\auGamma'_i$ can be constructed from $\auGamma_i$ by increasing the length of the edges that cross the cut $(L_i(u,v),R_i(u,v))$ by $\lambda$ units. In other words, we increase by $\lambda$ units the $x$-coordinate of all vertices in $R_i(u,v)$. 
%\end{proofsketch}

\noindent The next lemma can be proven by suitably applying Lemma~\ref{le:stretching}; see Figs.~\ref{fi:stretch1}-\ref{fi:stretch2}.

\begin{lemma}\label{le:stretching2}
	Let $\auGamma_i$ be a drawing of $\auG^-_i$ that satisfies Invariants \myref{I1}--\myref{I5}. Let $u$ be a vertex of $\auP_i$, and let $t_u$ be any outer top ray of $u$ that crosses an edge of $\auG^-_i$ in $\auGamma_i$. There exists a drawing $\auGamma'_i$ of $\auG^-_i$ that satisfies Invariants \myref{I1}--\myref{I5} in which $t_u$ does not cross any edge of $\auG^-_i$.
\end{lemma}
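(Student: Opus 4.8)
The plan is to apply Lemma~\ref{le:stretching} a single time, with a sufficiently large stretching parameter $\lambda$, so as to push the portion of $\auP_i$ responsible for the crossing away from $t_u$. The starting observation is that Invariant~\myref{I5} forces $\auP_i$ to be $x$-monotone: reading its intersections with a horizontal line in path order coincides with reading them from left to right, so the $x$-coordinates of the vertices strictly increase along $\auP_i$ from $v_L$ to $v_R$. In particular, the vertical line through $u$ meets $\auP_i$ only at $u$, and since $t_u$ is outer (it enters the outer face immediately above $u$), a vertical $t_u$ could never leave the outer face; hence the hypothesis that $t_u$ crosses an edge implies that $t_u$ is tilted, i.e., it points either up-right or up-left. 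I would treat the up-right case in detail; the up-left case is handled analogously after replacing ``successor'' by ``predecessor'' (note that a genuine left--right reflection is \emph{not} available, because $\Ss$ need not be symmetric, so the two cases must be argued separately).

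For the up-right case, let $q$ be the successor of $u$ along $\auP_i$ and apply Lemma~\ref{le:stretching} to the edge $(u,q)$ with parameter $\lambda$, obtaining a drawing $\auGamma'_i$ that still satisfies \myref{I1}--\myref{I5}. This operation keeps $u$ and all vertices preceding it fixed, shifts $q$ and every later vertex to the right by $\lambda$, and leaves all $y$-coordinates unchanged; thus $t_u$, being the ray of the unmoved vertex $u$ with unchanged slope, is literally the same ray. Let $L$ be the line supporting $t_u$. Because $t_u$ points up-right, only points of $\auP_i$ with $x \ge x_u$ can meet it, i.e.\ only the sub-path from $q$ to $v_R$ together with the edge $(u,q)$. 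Every vertex and every bend of this sub-path has bounded $y$-coordinate (unchanged by stretching) and $x$-coordinate increased by $\lambda$, so for $\lambda$ large enough each lies strictly below $L$; since a segment with both endpoints below a line stays below it, the whole sub-path from $q$ to $v_R$ lies below $L$ and cannot meet $t_u$. Finally, the edge $(u,q)$ emanates from $u$: as $t_u$ is an \emph{outer top} ray of $u$, no edge of $u$ lies between $t_u$ and the outer face, so $(u,q)$ leaves $u$ below $L$, and all its remaining points also lie below $L$ for $\lambda$ large; hence it touches $L$ only at $u$. Therefore $t_u$ is crossing-free in $\auGamma'_i$.

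The up-left case proceeds identically but stretches the edge $(p,u)$, where $p$ is the predecessor of $u$ on $\auP_i$. This fixes the sub-path from $v_L$ to $p$ and shifts $u$, together with the sub-path from $q$ to $v_R$, to the right by $\lambda$. Now $t_u$ points up-left, so only the fixed left sub-path can be relevant (the right sub-path moves with $u$ and stays on the wrong side); since the vertices and bends of the left sub-path keep bounded $y$ and unchanged $x$ while the apex $u$ of the ray moves right by $\lambda$, a direct computation shows that $t_u$ reaches the $x$-column of any left vertex or bend only at a height exceeding all their heights once $\lambda$ is large, so no crossing survives.

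The crux of the argument, and the step I expect to require the most care, is verifying that a single finite $\lambda$ simultaneously removes the crossing with the specific edge named in the hypothesis \emph{and} every other potential crossing of $t_u$, without creating new ones. This is precisely where $x$-monotonicity (Invariant~\myref{I5}) is essential: it confines all edges that $t_u$ can possibly meet to one side of $u$ and guarantees that, after the horizontal shift, those edges---being of bounded height---sink below the supporting line of $t_u$. Everything else, namely planarity, the upward property, and Invariants~\myref{I1}--\myref{I4}, is inherited for free from Lemma~\ref{le:stretching}, which already certifies that $\auGamma'_i$ satisfies all five invariants.
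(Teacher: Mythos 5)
Your argument has a genuine gap: the opening claim that Invariant \myref{I5} forces $\auP_i$ to be $x$-monotone is false, and the whole case analysis leans on it. Invariant \myref{I5} only constrains intersection points that lie on a \emph{common horizontal line}; it forces every horizontal segment of $\auP_i$ to be traversed from left to right, but it places no such restriction on the non-horizontal segments, which may climb leftward. Concretely, let $u$ be a vertex of $\auP_i$ with path-successor $w$, and let the edge $(u,w)$ be drawn, in accordance with \myref{I1}--\myref{I2}, as a horizontal segment from $u=(0,0)$ rightward to a bend at $(2,0)$ followed by a sloped segment going up-left to $w=(-1,5)$; this extends to a complete drawing of $\auG^-_i$ satisfying \myref{I1}--\myref{I5} (no horizontal line meets the path in an order violating \myref{I5}), yet $\auP_i$ is not $x$-monotone and the sloped segment passes above $u$, through the point $(0,10/3)$. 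Such shapes are not pathological: the paper's own algorithm creates them whenever $t_{1}=\bccw{1}(u_1,i-1)$ happens to point up-left. Now take $t_u$ to be an outer top ray of $u$ at angle, say, $100^{\circ}$: it points up-left, and it crosses the \emph{right} subpath (indeed the edge $(u,w)$ itself), which your case analysis declares impossible. Your recipe for up-left rays is to stretch the edge $(p,u)$; but by Lemma~\ref{le:stretching} that operation translates $u$, all of its rays, and the entire subpath from $u$ to $v_R$ --- including the crossed segment --- by the same horizontal vector, so the crossing survives for every choice of $\lambda$. A symmetric failure (an up-right outer ray crossing the \emph{left} subpath) arises when $\auP_i$ traverses the edge entering $u$ against its orientation, so neither of your two cases is sound, and the claim that a vertical ray can never cross anything falls with the same counterexample.

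The paper avoids this trap by keying the case distinction not on the direction of $t_u$ but on \emph{which subpath is crossed}: crossings with the subpath from $v_L$ to $u$ are removed by applying Lemma~\ref{le:stretching} to $(v,u)$ (which moves $u$, its rays, and the whole right part away from the fixed left part), crossings with the subpath from $u$ to $v_R$ by applying it to $(u,w)$ (which moves the right part away from the fixed ray), and both stretches are performed when both kinds of crossings occur. The ingredients of your proof that are sound --- that a sufficiently large $\lambda$ horizontally separates the ray from any bounded piece of the drawing that does not move with it, and that Invariants \myref{I1}--\myref{I5} are inherited from Lemma~\ref{le:stretching} --- are exactly the paper's mechanism; what is missing is the correct criterion for choosing which edge(s) to stretch, and that criterion cannot be read off the slope of $t_u$.
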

%\begin{proof}
%	The ray $t_u$ can cross the subpath of $\auP_i$ from $v_L$ to $u$ and/or the subpath of $\auP_i$ from $u$ to $v_R$. Let $v$ and $w$ be the vertices that are encountered before and after $u$ along $\auP_{i}$ when going from $v_L$ to $v_R$, respectively. To remove the crossing(s) it is sufficient to apply Lemma~\ref{le:stretching} to $(v,u)$ and/or to $(u,w)$ for a sufficiently large $\lambda$; see Figs.~\ref{fi:stretch1}-\ref{fi:stretch2} for an illustration.    
%\end{proof}

%\begin{figure}
%	\centering
%	\begin{subfigure}{0.4\textwidth}
%		\centering
%		\includegraphics[width=\textwidth,page=1]{figs/stretch}
%		\caption{}
%		\label{fi:stretch1}
%	\end{subfigure}
%	\hfil
%	\begin{subfigure}{0.4\textwidth}
%		\centering
%		\includegraphics[width=\textwidth,page=2]{figs/stretch}
%		\caption{}
%		\label{fi:stretch2}
%	\end{subfigure}
%	\caption{Illustration for Lemma~\ref{le:stretching2}.}
%\end{figure}

We now describe our drawing algorithm starting with the computation of $\auGamma_2$. We aim at drawing both $v_1$ and $v_2$ horizontally aligned between $v_L$ and $v_R$. Note that $v_1$ is the source of $G$, and, by the definition of a canonical augmentation, $v_1$ is adjacent to both $v_L$ and $v_R$, while $v_2$ is adjacent to $v_1$ and to at least one of $v_L$ and $v_R$. We remove the dummy edges $(v_1,v_R)$ and $(v_L,v_R)$, and the dummy edge $(v_R,v_2)$ if it exists. The resulting graph is either the path~$\langle v_L, v_1, v_2, v_R \rangle$ or the path~$\langle v_L, v_2, v_1, v_R \rangle$, which we draw along~a~horizontal~segment. 

\begin{lemma}\label{le:gamma2}
	Drawing $\auGamma_2$ satisfies Invariants \myref{I1}--\myref{I5}.
\end{lemma}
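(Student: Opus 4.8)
My plan is to verify the five invariants directly on the explicit drawing $\auGamma_2$, in which the four vertices are placed, in left-to-right order ($v_L,v_1,v_2,v_R$ or $v_L,v_2,v_1,v_R$), on a single horizontal line, with the edges of $\auG^-_2$ drawn as straight horizontal segments between consecutive vertices. Three invariants are essentially immediate. For \myref{I1}, the drawing is planar because its edges are internally disjoint sub-segments of one line; it is upward because a horizontal segment is weakly $y$-monotone, so each directed edge runs non-decreasingly from tail to head; each edge is a single, bend-free segment; and since $\Ss$ contains the horizontal slope, the real edges use a slope of $\Ss$. Invariant \myref{I2} is then trivial, every drawn edge (in particular every edge of $\auP_2$) being horizontal. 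For \myref{I5}, $\auP_2$ is a single horizontal segment whose vertices occur from $v_L$ to $v_R$ in exactly their left-to-right order, hence it is $x$-monotone and the two orders coincide.

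The substantive part is the ray count for \myref{I3} and \myref{I4}, and the first step I would isolate is one geometric observation: no edge of $\auGamma_2$ lies strictly above the horizontal line, so the whole open upper half-plane belongs to the outer face and \emph{every} top ray (real or dummy) of every vertex is simultaneously free and outer. For \myref{I3} this reduces the claim to $\nr{v}{2}\le\Delta-1$, since each vertex then has all $\Delta-1$ non-horizontal real slopes available as outer real top rays. The inequality is trivial for $v_L,v_R$, which carry no real edge. For $v_1,v_2$ I would invoke that in any $st$-ordering the vertex numbered $2$ has the vertex numbered $1$ as its only predecessor; hence $(v_1,v_2)$ is a real edge of $G$ already present in $\auG_2$, so each of $v_1,v_2$ has one of its at most $\Delta$ real edges drawn, and $\nr{v}{2}\le\Delta-1$ follows.

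For \myref{I4} the same observation makes every dummy top ray free and outer, leaving only the angular-placement condition. Here I would use the definition $\delta^i_j=\rho_i+j\cdot\frac{\rho^*}{\Delta^*+1}$: the $\Delta^*$ slopes $\delta^\Delta_1,\dots,\delta^\Delta_{\Delta^*}$ lie just above the largest real slope $\rho_\Delta$ and are exactly the dummy top rays met, clockwise from the left horizontal ray, before the first real top ray $\bcw{1}(v,2)$ (of slope $\rho_\Delta$); hence the wedge between $\rcw{1}(v,2)$ and $\bcw{1}(v,2)$ contains $\Delta^*$ outer dummy top rays, including $\rcw{1}(v,2)=\delta^\Delta_{\Delta^*}$. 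Symmetrically, $\delta^1_1,\dots,\delta^1_{\Delta^*}$ lie just above the horizontal slope and fill the counterclockwise wedge between $\rccw{1}(v,2)$ and $\bccw{1}(v,2)$ (of slope $\rho_2$). As $\nd{v}{2}\le\Delta^*$ by the definition of $\Delta^*$, both parts of \myref{I4} hold at every vertex, including the isolated $v_R$, for which the same count applies.

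I expect \myref{I4} to be the main obstacle, being the only invariant that is not settled by planarity or $y$-monotonicity: one must correctly identify the first dummy and first real top rays in each sweep and then match the number of dummy slopes trapped in the wedge against $\nd{v}{2}$. The difficulty is purely the combinatorial bookkeeping of how the construction interleaves real and dummy slopes; once the wedges are delimited, the bound $\nd{v}{2}\le\Delta^*$ closes the argument. I would also dispatch the small points that $\delta^\Delta_{\Delta^*}<\pi$ and $\delta^1_{\Delta^*}<\rho_2$, both consequences of $\Delta^*\cdot\frac{\rho^*}{\Delta^*+1}<\rho^*$, which ensure that these dummy slopes are genuine non-horizontal top rays located in the intended wedges.
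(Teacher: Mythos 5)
Your proof is correct and takes essentially the same route as the paper's: \myref{I1}, \myref{I2}, \myref{I5} are immediate for the horizontal drawing, \myref{I3} reduces to $\nr{v}{2}\le\Delta-1$ (zero for $v_L,v_R$; via the real edge $(v_1,v_2)$ for $v_1,v_2$) together with all $\Delta-1$ real top rays being outer, and \myref{I4} holds because all dummy top rays are outer. Your explicit wedge bookkeeping for \myref{I4} merely spells out what the paper dispatches in a single sentence.
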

%\begin{proof}
%	Invariants \myref{I1}, \myref{I2}, and \myref{I5} trivially hold. Since $\nr{v_L}{2}=\nr{v_R}{2}=0$, \myref{I3} trivially holds for $v_L$ and $v_R$. Vertices $v_1$ and $v_2$ are connected by a real edge and therefore each of them has at most $\Delta-1$ real incident edges in $\auG \setminus \auG_2$. Since there are $\Delta-1$ real top rays and they are all outer, \myref{I3} holds also for $v_1$ and $v_2$. \myref{I4} holds because all dummy top rays~are~outer. 
%\end{proof}

Assume now that we have constructed drawing $\auGamma_{i-1}$ of $\auG_{i-1}$ satisfying \myref{I1}--\myref{I5} $(3 \leq i < n)$. Let $\{u_1,\dots,u_q\}$ be the neighbors of the next vertex $v_i$ along $\auP_{i-1}$. Let $t_{1}$ be either $\bccw{1}(u_1,i-1)$, if $(u_1,v_i)$ is real, or $\rccw{1}(u_1,i-1)$, if $(u_1,v_i)$ is dummy. Symmetrically, let $t_{q}$ be either $\bcw{1}(u_q,i-1)$, if $(u_q,v_i)$ is real, or $\rcw{1}(u_q,i-1)$, if $(u_q,v_i)$ is dummy. 
Let $t_{j}$ (for $1 < j < q$) be any outer real (resp.\ dummy) top ray emanating from $u_j$ if $(u_j,v_i)$ is real (resp.\ dummy). By  \myref{I3} all such top rays exist and by Lemma~\ref{le:stretching2} we can assume that none of them crosses~$\auGamma_{i-1}$.
\begin{figure}[t]
	\centering
	\begin{subfigure}{0.32\textwidth}
		\centering
		\includegraphics[width=\textwidth,page=1]{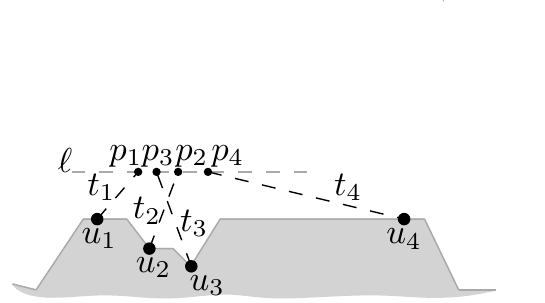}
		\caption{}
		\label{fi:vertex-addition1}
	\end{subfigure}
	\hfil
	\begin{subfigure}{0.32\textwidth}
		\centering
		\includegraphics[width=\textwidth,page=2]{figs/vertex-addition}
		\caption{}
		\label{fi:vertex-addition2}
	\end{subfigure}
	\begin{subfigure}{0.32\textwidth}
		\centering
		\includegraphics[width=\textwidth,page=3]{figs/vertex-addition}
		\caption{}
		\label{fi:vertex-addition3}
	\end{subfigure}
	\begin{subfigure}{0.48\textwidth}
		\centering
		\includegraphics[width=\textwidth,page=5]{figs/vertex-addition}
		\caption{}
		\label{fi:vertex-addition4}
	\end{subfigure}
	\begin{subfigure}{0.48\textwidth}
		\centering
		\includegraphics[width=\textwidth,page=6]{figs/vertex-addition}
		\caption{}
		\label{fi:vertex-addition5}
	\end{subfigure}
	\caption{Addition of vertex $v_i$.}
\end{figure}
Let $\ell$ be a horizontal line above the topmost point of $\auGamma_{i-1}$. Let $p_j$ be the intersection point of $t_{j}$ and $\ell$. We can assume that, for $j=1,2,\dots,q-1$, $p_j$ is to the left of $p_{j+1}$. If this is not the case, we can increase the distance between $u_j$ and $u_{j+1}$ so to guarantee that $p_j$ and $p_{j+1}$ appear in the desired order along $\ell$; this can be done by applying Lemma~\ref{le:stretching} with respect to each edge $(u_j,u_{j+1})$ for a suitable choice of $\lambda$; see Figs.~\ref{fi:vertex-addition1}-\ref{fi:vertex-addition2} for an illustration.
We will place $v_i$ above $\ell$ using $q-2$ bottom rays $b_2,b_3,\dots,b_{q-1}$  of $v_i$ for the segments of the edges $(u_j,v_i)$ ($j=2,3,\dots,q-1$) incident to $v_i$ such that: (i) $b_j$ ($1 <j < q$) is real (resp.\ dummy) if $(u_j,v_i)$ is real (resp.\ dummy); (ii) $b_j$ precedes $b_{j+1}$ in the counterclockwise order around $v_i$ starting from $b_2$. This choice is possible for the real rays  because $v_i$ has $\Delta-1$ real bottom  rays and it has at least one incident real edge not in $\auG_{i}$ (otherwise it would be a sink of $G$, which is not possible because $i < n$). Concerning the dummy rays, we have at most $\Delta^*$ dummy edges incident to $v_i$ and $\Delta^*$ dummy bottom rays between any two consecutive real rays.    
Consider the ray $t_{1}$ and choose a point $p$ to the right of $t_{1}$ and above $\ell$ such that placing $v_i$ on $p$ guarantees that $\min_{i=1 \dots q-2}\{x(p'_{i+1})-x(p'_i)\} > x(p_q)-x(p_1)$, where $p'_1=p_1$ and  $p'_2,p'_3,\dots,p'_{q-1}$ are the intersection points of the rays $b_2,b_3,\dots,b_{q-1}$ with the line $\ell$ (see Fig.~\ref{fi:vertex-addition3}). Observe that for a sufficiently large $y$-coordinate, point $p$ can always be found. 
We now apply Lemma~\ref{le:stretching} to each of the edges $(u_1,u_2)$, $(u_2,u_3)$, $\dots$, $(u_{q-2},u_{q-1})$, in this order, choosing $\lambda \geq 0$ so that each $p_j$ is translated to $p'_j$ (for $j=2,3,\dots,q-1$). We finally apply again the same procedure to $(u_{q-1},u_q)$ so that the intersection point between $t_{q}$ and the horizontal line $\ell_H$ passing through $v_i$ is to the right of $v_i$ (see Fig.~\ref{fi:vertex-addition4}). After this translation procedure, we can draw the edge $(u_1,v_i)$ (resp.\ $(u_q,v_i)$) with a bend at the intersection point between $t_{1}$ (resp.\ $t_{q}$) and $\ell_H$ and therefore using the slope of $t_{1}$ (resp.\ $t_{q}$) and the horizontal slope (see Fig.~\ref{fi:vertex-addition5}). The edges $(u_j,v_i)$ ($j=2,3,\dots,q-1$) are drawn with a bend point at $p_j=p'_j$ and therefore using the slopes of $t_{j}$ and $b_j$.

\begin{lemma}\label{le:gammai}
	Drawing $\auGamma_i$, for $i=3,4,\dots,n-1$, satisfies Invariants \myref{I1}--\myref{I5}.
\end{lemma}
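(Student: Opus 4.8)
The plan is to argue by induction on $i$, taking Lemma~\ref{le:gamma2} as the base case $i=2$ and, for the inductive step $i\ge 3$, verifying that the vertex-addition procedure described above turns $\auGamma_{i-1}$ (which satisfies \myref{I1}--\myref{I5} by the inductive hypothesis) into a drawing $\auGamma_i$ that again satisfies all five invariants. A preliminary observation that streamlines the bookkeeping is that, by condition \myref{C\ref{p:co2}} of the canonical ordering, the neighbors $u_1,\dots,u_q$ of $v_i$ form a subpath of $\auP_{i-1}$, so in $\auGamma_i$ only the extreme neighbors $u_1$ and $u_q$ remain on $\auP_i$, while $u_2,\dots,u_{q-1}$ become internal. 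Hence \myref{I3}--\myref{I5} only need to be re-established for $u_1$, $u_q$, $v_i$, and the vertices of $\auP_{i-1}$ outside $\{u_1,\dots,u_q\}$; the latter are unaffected by the addition, their pending edges are unchanged (so $\nr{v}{i}=\nr{v}{i-1}$ and $\nd{v}{i}=\nd{v}{i-1}$), and the applications of Lemma~\ref{le:stretching} preserve all their rays.

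First I would establish \myref{I1}. Each new edge $(u_j,v_i)$ is drawn with a single bend, so the $1$-bend property is preserved; it is $y$-monotone non-decreasing because $v_i$ lies above $\ell$, which is above all of $\auGamma_{i-1}$, and every such edge is a top-ray segment leaving $u_j$ upward followed either by a bottom-ray segment entering $v_i$ or (for $j\in\{1,q\}$) by a horizontal segment along $\ell_H$, so the drawing is upward. Real edges use the real rays $t_j$ and $b_j$, together with the horizontal slope at the bends of $(u_1,v_i)$ and $(u_q,v_i)$, all of which lie in $\Ss$ since $\Ss$ contains the horizontal slope. The crux is planarity: by Lemma~\ref{le:stretching2} none of the rays $t_1,\dots,t_q$ crosses $\auGamma_{i-1}$, and after the stretching steps their intersections $p_1,\dots,p_q$ with $\ell$ occur left to right; since $v_i$ is placed above $\ell$ and the bottom rays $b_2,\dots,b_{q-1}$ are chosen in counterclockwise order, the $q$ new edges form a planar fan that crosses neither $\auGamma_{i-1}$ nor one another.

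Next I would dispatch \myref{I2} and \myref{I5}. For \myref{I2}, the only new edges of $\auP_i$ are $(u_1,v_i)$ and $(u_q,v_i)$, both drawn with a horizontal segment along $\ell_H$, while every other edge of $\auP_i$ already belongs to $\auP_{i-1}$ and keeps its horizontal segment. For \myref{I5}, the stretching operations preserve the left-to-right order in which a horizontal line meets $\auP_{i-1}$, and $v_i$ is inserted above $\ell$ between the rays from $u_1$ and $u_q$, so the walk order along $\auP_i$ from $v_L$ to $v_R$ still matches the left-to-right order along every horizontal line.

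The remaining and, I expect, most delicate part is \myref{I3}--\myref{I4}, because real and dummy slopes are interleaved and one must track exactly which rays $u_1$, $u_q$, $v_i$ consume and confirm that the ``outer'' property survives. For $u_1$ the edge to $v_i$ is attached along $\bccw{1}(u_1,i-1)$ if $(u_1,v_i)$ is real and along $\rccw{1}(u_1,i-1)$ if it is dummy, i.e.\ the outer top ray closest to the right horizontal ray; the remaining, more counterclockwise, outer top rays of $u_1$ still face the outer region and stay outer, while $\nr{u_1}{i}$ (resp.\ $\nd{u_1}{i}$) drops by exactly one when the consumed edge is real (resp.\ dummy), so the bounds of \myref{I3}--\myref{I4} are maintained; the symmetric argument with $\bcw{1}(u_q,i-1)$ or $\rcw{1}(u_q,i-1)$ handles $u_q$. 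For $v_i$ I would use that it is topmost, so all its top rays are outer, and that its incoming edges occupy the bottom rays $b_2,\dots,b_{q-1}$ and the two horizontal rays; since $\deg(v_i)\le\Delta$, the $\Delta-1$ real top rays plus the free horizontal rays suffice for its $\nr{v_i}{i}$ pending real edges, and the $\Delta^*$ dummy rays reserved between consecutive real slopes suffice for its $\nd{v_i}{i}$ pending dummy edges. The main obstacle is precisely this simultaneous count: one must check that consuming the extreme rays at $u_1,u_q$ and the bottom rays at $v_i$ never exhausts the supply promised by \myref{I3}--\myref{I4}, and that the dummy top rays of $v_i$ actually lie in the angular region between $\rcw{1}(v_i,i)$ and $\bcw{1}(v_i,i)$ (resp.\ the counterclockwise analogue) demanded by \myref{I4} — which is exactly where the reservation of $\Delta^*$ dummy slopes between consecutive real slopes becomes essential.
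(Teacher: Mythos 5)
Your proposal takes essentially the same route as the paper's proof: induction on $i$ with Lemma~\ref{le:gamma2} as the base case, reduction of \myref{I3}--\myref{I5} to $u_1$, $u_q$, $v_i$ (the other vertices being untouched), planarity for \myref{I1} via Lemmas~\ref{le:stretching} and~\ref{le:stretching2} together with the left-to-right order of the bend points, \myref{I2} and \myref{I5} verbatim, and the same ray-consumption bookkeeping at $u_1$ and $u_q$ (consume $\bccw{1}(u_1,i-1)$ if the edge is real, $\rccw{1}(u_1,i-1)$ if dummy; the remaining counterclockwise rays stay outer). Also, the ``main obstacle'' you flag at the end is resolved in the paper exactly as you anticipate: when $(u_1,v_i)$ is real, the $\Delta^*$ dummy rays between $\bccw{1}(u_1,i-1)$ and $\bccw{2}(u_1,i-1)$ all remain outer and constitute the new window demanded by \myref{I4}, and at $v_i$ all top rays are outer, so the reservation of $\Delta^*$ dummy slopes between consecutive real slopes suffices.

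One step, however, is wrong as stated: your justification of \myref{I3} at $v_i$. You write that ``the $\Delta-1$ real top rays plus the free horizontal rays suffice for its $\nr{v_i}{i}$ pending real edges.'' Invariant \myref{I3} counts outer real \emph{top} rays only; horizontal rays are not top rays and cannot make up a shortfall, and $\deg(v_i)\le\Delta$ alone does not exclude $\nr{v_i}{i}=\Delta$, in which case \myref{I3} would fail. The missing observation is the paper's: since $i>1$, $v_i$ is not the source of $G$, hence it has at least one incoming \emph{real} edge, which is already in $\auG_i$; therefore $\nr{v_i}{i}\le\Delta-1$, and these pending real edges are served by the $\Delta-1$ real top rays of $v_i$, all of which are outer because $v_i$ is placed above the rest of the drawing. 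With this correction your argument goes through and coincides with the paper's.
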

\begin{proof}
	The proof is by induction on $i \geq 3$. $\auGamma_{i-1}$ satisfies Invariants \myref{I1}--\myref{I5} by Lemma~\ref{le:gamma2} when $i=3$, and by induction when $i > 3$. 
	
	\noindent \textbf{\myref{Proof of I1}}. By construction, each $(u_j,v_i)$ ($j=1,2,\dots,q$) is drawn as a chain of at most two segments that use real and dummy slopes. In particular, if $(u_j,v_i)$  is real, then it uses real slopes, i.e., slopes in $\Ss$. By the choice of $\ell$, the bend point of $(u_j,v_i)$ has $y$-coordinate strictly greater than that of $u_j$ and smaller than or equal to that of $v_i$. Since each $(u_j,v_i)$ is oriented from $u_j$ to $v_i$ (as $\co$ is an upward canonical ordering), the drawing is upward. Concerning planarity, we first observe that $\auGamma_{i-1}$ is planar and it remains planar each time we apply Lemma~\ref{le:stretching}. Also, by Lemma~\ref{le:stretching2} each $(u_j,v_i)$ ($j=1,2,\dots,q$) does not intersect $\auGamma_{i-1}$ (except at $u_j$). Further, the order of the bend points along $\ell$ guarantees that the edges incident to $v_i$ do not cross each~other. 
	
\noindent \textbf{\myref{Proof of I2}}. The only edges of $\auP_i$ that are not in $\auP_{i-1}$ are $(u_1,v_i)$ and $(u_q,v_i)$. For both these edges the segment incident to $v_i$ is horizontal by construction.
	
\noindent \textbf{\myref{Proof of I3}}. For each vertex of $\auP_i$ distinct from $u_1$, $u_q$ and $v_i$, \myref{I3} holds by induction. Invariant \myref{I3} also holds for $v_i$ because $\nr{v_i}{i} \leq \Delta-1$ (as otherwise $v_i$ would be a source of $G$, which is not possible because $i>1$) and all the real top rays of $v_i$, which are $\Delta-1$, are outer. Consider now vertex $u_1$ (a symmetric argument applies to $u_q$). If $(u_1,v_i)$ is real, then $\nr{u_1}{i}=\nr{u_1}{i-1}-1$; in this case $t_{1}=\bccw{1}(u_1,i-1)$ and therefore all the other $\nr{u_1}{i-1}-1$ outer real top rays of $u_1$ in $\auGamma_{i-1}$ remain outer in $\auGamma_{i}$. If $(u_1,v_i)$ is dummy, then $\nr{u_1}{i}=\nr{u_1}{i-1}$; in this case $t_{1}=\rccw{1}(u_1,i-1)$ and therefore all the $\nr{u_1}{i-1}$ outer real top rays of $u_1$ in $\auGamma_{i-1}$ remain outer in $\auGamma_{i}$.

\noindent \textbf{\myref{Proof of I4}}. For each vertex of $\auP_i$ distinct from $u_1$, $u_q$ and $v_i$, \myref{I4} holds by induction. \myref{I4} also holds for $v_i$ because $\nd{v_i}{i} \leq \Delta^*$ and there are $\Delta^*$ dummy top rays between $\rcw{1}(v_i,i)$ and $\bcw{1}(v_i,i)$ including $\rcw{1}(v_i,i)$ (all the top rays of $v_i$ are outer). Analogously, there are $\Delta^*$ outer dummy top rays between $\rccw{1}(v_i,i)$ and $\bccw{1}(v_i,i)$ including $\rccw{1}(v_i,i)$. Consider now $u_1$ (a symmetric argument applies to $u_q$). If $(u_1,v_i)$ is real, then $\nd{u_1}{i}=\nd{u_1}{i-1}$; in this case $t_{1}=\bccw{1}(u_1,i-1)$ and there are $\Delta^*$ outer dummy top rays between $\rccw{1}(u_1,i)$ and $\bccw{1}(u_1,i)$ including $\rccw{1}(u_1,i)$ (namely, all those between $t_{1}=\bccw{1}(u_1,i-1)$ and $\bccw{2}(u_1,i-1)$). If $(u_1,v_i)$ is dummy, then $\nd{u_1}{i}=\nd{u_1}{i-1}-1$; in this case $t_{1}=\rccw{1}(u_1,i-1)$ and therefore all the other $\nd{u_1}{i-1}-1$ outer dummy top rays of $u_1$, which by induction were between $\rccw{1}(u_1,i-1)$ and $\bccw{1}(u_1,i-1)$, remain outer in $\auGamma_{i}$. 	 
	
 \noindent \textbf{\myref{Proof of I5}}. Notice that the various applications of Lemma~\ref{le:stretching} to $\auGamma_{i-1}$ preserve~\myref{I5}. Let $p$ and $p'$ be any two intersection points between a horizontal line $\ell$ and the polyline representing $\auP_i$ in $\auGamma_{i}$, with $p$ to the left of $p'$ along $\ell$. If $p$ and $p'$ belong to $\auP_{i-1}$, \myref{I5} holds by induction. If both $p$ and $p'$ belong to the path $\langle u_1, v_i, u_q \rangle$, \myref{I5} holds by construction. If $p$ belongs to $\auP_{i-1}$ and $p'$ belongs to $\langle u_1, v_i, u_q \rangle$, then $p$ belongs to the subpath of $\auP_{i-1}$ that goes from $v_L$ to $u_1$ because the subpath from $u_q$ to $v_R$ is completely to the right of $t_{q}$, hence \myref{I5} holds also in this case. If $p$ belongs to $\langle u_1, v_i, u_q \rangle$ and $p'$ belongs to $\auP_{i-1}$, the proof is symmetric.    
\end{proof}

\begin{lemma}\label{le:bitonic-correct}
	$G$ has a $1$-bend upward planar drawing $\Gamma$ using only slopes in $\Ss$.
\end{lemma}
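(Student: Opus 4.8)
The plan is to run the inductive construction to completion, add the last vertex, and then discard everything that does not belong to $G$. By Lemma~\ref{le:gamma2} and Lemma~\ref{le:gammai}, the algorithm yields a drawing $\auGamma_{n-1}$ of $\auG^-_{n-1}$ that satisfies Invariants \myref{I1}--\myref{I5}. Two things then remain: to insert $v_n$ so that the resulting drawing satisfies \myref{I1}, and to check that deleting the dummy edges and the auxiliary vertices $v_L,v_R$ leaves a valid $1$-bend upward planar drawing $\Gamma$ of $G$ on the slopes of $\Ss$.

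For the insertion of $v_n$ I would first observe that, since $\co$ is an upward canonical ordering of the triangulated graph $\auG$ and $v_n$ is its last vertex, $v_n$ is the global sink and the outer face of $\auG$ is the triangle $v_Lv_Rv_n$; hence the neighbors of $v_n$ are exactly the vertices $u_1=v_L,\dots,u_q=v_R$ of $\auP_{n-1}$, in this left-to-right order. Because every edge incident to $v_n$ enters it from below, and because the dummy edges incident to $v_n$ (in particular $(v_L,v_n)$ and $(v_R,v_n)$) will be discarded, it suffices to realize only the \emph{real} edges of $v_n$. For each real neighbor $u$ the edge $(u,v_n)$ is a real edge not in $\auG_{n-1}$, so $\nr{u}{n-1}\ge 1$ and, by \myref{I3}, $u$ has a free outer real top ray; by Lemma~\ref{le:stretching2} I may assume these rays do not cross $\auGamma_{n-1}$. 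I would then place $v_n$ on a horizontal line above the drawing, route the two extreme real edges through the left- and right-horizontal rays of $v_n$, route every remaining real edge through a distinct bottom ray of $v_n$, and use Lemma~\ref{le:stretching} exactly as in the generic step to spread the neighbors so that the bends appear in the correct left-to-right order. Since $v_n$ has no successor, none of \myref{I2}--\myref{I5} must be re-established; I only have to argue \myref{I1}, which follows as before: the drawing is upward (each $(u,v_n)$ is oriented towards $v_n$ with its bend below $v_n$), planar (the ordering of the bends prevents crossings, and by Lemma~\ref{le:stretching2} no edge meets $\auGamma_{n-1}$), and every real edge uses a top slope, a bottom slope, or the horizontal slope, all of which lie in $\Ss$.

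The concluding step is purely combinatorial: I would delete from this drawing all dummy edges together with $v_L$ and $v_R$ (and the dummy edges incident to them). As the dummy edges are precisely the edges of $\auG$ not in $G$ and as $v_L,v_R\notin V$, what survives is a drawing of $G$ itself, in which every (real) edge still carries only slopes of $\Ss$ by \myref{I1}. Deleting vertices and edges cannot destroy planarity or the upward property and cannot create new bends, so $\Gamma$ is the desired $1$-bend upward planar drawing of $G$ using only slopes in $\Ss$.

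The one genuinely delicate point is the ray budget at the sink $v_n$: it may have up to $\Delta$ incident real edges, all entering from below, whereas $v_n$ offers only $\Delta-1$ real bottom rays. The resolution I expect to use is twofold. First, ignoring the dummy edges at $v_n$ lets the leftmost and rightmost \emph{real} neighbors occupy the extreme angular positions, so that their edges can legitimately be drawn with the left- and right-horizontal rays of $v_n$; this frees two slots. Second, the at most $\Delta-2$ remaining real edges then fit into the $\Delta-1$ available bottom rays. Verifying that this assignment is simultaneously consistent with the left-to-right order of the neighbors (and hence planar) is the part that requires the careful, stretching-based placement inherited from the generic construction.
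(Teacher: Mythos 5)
Your proposal is correct and takes essentially the same route as the paper: run the induction to $\auGamma_{n-1}$, treat $v_n$ specially because its up-to-$\Delta$ real incoming edges exceed the $\Delta-1$ real bottom rays, resolve this by ignoring dummy edges at $v_n$ so that real edges can claim horizontal ray(s) of $v_n$, and finally delete all dummy edges and the vertices $v_L,v_R$. The only (immaterial) difference is in the bookkeeping at $v_n$: the paper discards just the dummy edges preceding the leftmost real edge, so only the left horizontal ray of $v_n$ goes to a real edge while the right one serves the dummy edge $(v_R,v_n)$, whereas you discard all dummy edges at $v_n$ and give both horizontal rays to real edges; both variants fit within the ray budget.
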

\begin{proof}
	By Lemma~\ref{le:gammai}, drawing $\auGamma_{n-1}$ satisfies Invariant \myref{I1}--\myref{I5}. We explain how to add the last vertex $v_n$ to obtain a drawing that satisfies Invariant \myref{I1}. Let  $\{u_1,\dots,u_q\}$ be the predecessors of $v_n$ on $\auP_{n-1}$. Notice that, in this case $u_1=v_L$ and $u_q=v_R$. Vertex $v_n$ is added to the drawing similarly to all the other vertices added in the previous steps of the algorithm. The only difference is that the number of real incoming edges incident to $v_n$ in $\auGamma_{n-1}$ can be up to $\Delta$. If this is the case, since the real bottom rays are $\Delta-1$, they are not enough to draw all the real edges incident to $v_n$. Let $j$ be the smallest index such that $(u_j,v_n)$ is a real edge. We ignore all the dummy edges $(u_h,v_n)$, for $h=1,2,\dots,j-1$, and apply the construction used in the previous steps considering only $\{u_j,u_{j+1},\dots,u_q\}$ as predecessors of $v_n$ (notice that such predecessors are at least two because $v_n$ has at least two incident real edges). By ignoring these dummy edges, the segment of the real edge $(u_j,v_n)$ incident to $v_n$ will be drawn using the left horizontal slope. Denote by $\auGamma_{n}$ the resulting drawing. As in the proof of Lemma~\ref{le:gammai}, we can prove that \myref{I1} holds for $\auGamma_{n}$ and therefore $\auGamma_{n}$ is a $1$-bend upward planar drawing whose real edges use only slopes in $\Ss$. The drawing $\Gamma$ of $G$ is obtained from $\auGamma_{n}$ by removing all its dummy edges and the two dummy vertices~$v_L$~and~$v_R$.
\end{proof}

\begin{lemma}\label{le:bitonic-time}
	Drawing $\Gamma$ can be computed in $O(n)$ time.
\end{lemma}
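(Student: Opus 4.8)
The plan is to bound the cost of each phase of the construction separately and show that they all run in linear time. First I would dispatch the preprocessing: a bitonic $st$-ordering $\bo$ of $G$ is computed in $O(n)$ time (as recalled in Section~\ref{sec:preliminaries}), the canonical augmentation $G'$ together with the ordering $\co$ is obtained in $O(n)$ time by Lemma~\ref{le:gronemann}, and triangulating $G'$ into $\auG$ takes $O(n)$ time since $\auG$ is planar and has $O(n)$ edges. I would also point out that we must \emph{not} materialize the full set of $\Delta\cdot\Delta^*$ dummy slopes, which could be quadratic; instead, the slope $\delta^i_j=\rho_i+j\cdot\frac{\rho^*}{\Delta^*+1}$ needed by a given edge is evaluated in $O(1)$ time from its index at the moment the edge is drawn. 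As $\auG$ has $O(n)$ edges, only $O(n)$ such evaluations occur in total.

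Next I would analyze the main loop, which inserts $v_1,\dots,v_n$ according to $\co$. Maintaining the outer path as a doubly linked list, the neighbors $u_1,\dots,u_q$ of the incoming vertex $v_i$ form a contiguous subpath of $\auP_{i-1}$ by \myref{C\ref{p:co2}}, and are located in $O(q)$ time; selecting the rays $t_1,\dots,t_q$ and $b_2,\dots,b_{q-1}$, computing the bend points, and updating the path (removing $u_2,\dots,u_{q-1}$ and splicing in $v_i$) likewise cost $O(q)$. The key counting observation is that each vertex appears on the outer path $\auP$ exactly once and is later removed from an internal position at most once, so $\sum_i (q_i-2)\le n$ and hence $\sum_i q_i = O(n)$; the total combinatorial work over the whole loop is therefore $O(n)$.

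The step I expect to be the main obstacle is the geometric realization, namely the repeated invocations of Lemma~\ref{le:stretching}, of which there are $O(q)$ per insertion and $O(n)$ in total. Implemented naively, a single stretch repositions every vertex lying to the right of the stretched edge and would cost $\Theta(n)$, yielding a quadratic bound. To avoid this, I would store the drawing using \emph{relative} horizontal offsets rather than absolute $x$-coordinates, a standard device for implementing the shift method~\cite{DBLP:journals/combinatorica/FraysseixPP90} efficiently: each vertex records its offset with respect to a parent, so stretching the edge $(u,v)$ amounts to increasing a single offset in $O(1)$ time while the induced rightward translation of the rest of the drawing is applied implicitly. Crucially, the stretch amounts $\lambda$ depend only on the relative positions of $u_1,\dots,u_q$, which are recovered in $O(q)$ time by summing the offsets along their contiguous subpath, so no global coordinate information is needed during the sweep. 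Since a vertex covered when $v_i$ is added must still move when an ancestor is stretched later, these offsets have to be organized in a tree whose root-to-node accumulation yields the absolute $x$-coordinates; this single traversal is performed at the very end in $O(n)$ time. The $y$-coordinates raise no difficulty, as each $v_i$ is placed on a horizontal line above the current drawing and its height is fixed upon insertion. Combining the linear preprocessing, the $O(n)$ combinatorial work, the amortized $O(1)$ cost per stretch, and the final $O(n)$ coordinate extraction gives the claimed $O(n)$ running time; the last vertex $v_n$ (cf.\ Lemma~\ref{le:bitonic-correct}) is handled identically, and the final removal of the dummy edges and of $v_L,v_R$ is clearly $O(n)$.
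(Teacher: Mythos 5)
Your proposal is correct and takes essentially the same route as the paper's own proof: both defer exact coordinates during the insertion sweep, turn each application of Lemma~\ref{le:stretching} into an $O(1)$ update of purely relative data (your parent offsets; the paper's per-segment slope/length records, exploiting that a stretch only changes lengths of horizontal segments not on $\auP_i$), and recover absolute positions at the end by a single traversal of a tree rooted at $v_n$ --- indeed your offset tree (parent $=$ covering vertex) coincides with the paper's spanning tree of edges containing no horizontal segment. The remaining differences (lazy evaluation of dummy slopes, the explicit $\sum_i q_i = O(n)$ count) are details the paper leaves implicit rather than a different argument.
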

	
\noindent Lemmas~\ref{le:bitonic-correct} and~\ref{le:bitonic-time} are summarized by Theorem~\ref{thm:bitonic}. Corollary~\ref{co:1bendusn} is a consequence of Theorem~\ref{thm:bitonic} and of a result in~\cite{DBLP:conf/gd/GiacomoLM16}.

\begin{theorem}\label{thm:bitonic}
	Let $\Ss$ be any set of $\Delta \ge 2$ slopes including the horizontal slope and let $G$ be an $n$-vertex bitonic planar $st$-graph with maximum vertex degree $\Delta$. Graph $G$ has a $1$-bend upward planar drawing $\Gamma$ using only slopes in $\Ss$, which can be computed in $O(n)$ time.
\end{theorem}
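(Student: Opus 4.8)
The plan is to prove Theorem~\ref{thm:bitonic} as the conjunction of two facts established by the preceding development: that the desired drawing $\Gamma$ exists (Lemma~\ref{le:bitonic-correct}) and that it can be produced in linear time (Lemma~\ref{le:bitonic-time}). First I would set up the machinery: apply Lemma~\ref{le:gronemann} to obtain a canonical augmentation $G'$, add the dummy edge $(v_L,v_n)$, triangulate to get $\auG$, and invoke Lemma~\ref{le:augmentation} to guarantee that $\co=\{v_L,v_R,v_1,\dots,v_n\}$ is an upward canonical ordering in which every vertex other than $v_L,v_R$ has at least two predecessors. Then I would fix the real slopes $\Ss$ for real edges and introduce $\Delta^*$ dummy slopes between every pair of consecutive real slopes, reserving them for the dummy edges; the key budgeting observation is that the $\Delta-1$ real bottom rays suffice for the real edges entering any non-source vertex, while $\Delta^*$ dummy rays per gap suffice for the dummy edges.

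The heart of the correctness argument is the incremental construction. I would build drawings $\auGamma_i$ of $\auG^-_i$ one vertex at a time following $\co$, maintaining Invariants~\myref{I1}--\myref{I5} throughout. The base case $\auGamma_2$ is a horizontal path and is handled by Lemma~\ref{le:gamma2}. For the inductive step ($3\le i\le n-1$, Lemma~\ref{le:gammai}), I would place $v_i$ high above a horizontal line $\ell$ through the top of $\auGamma_{i-1}$, route its two outer edges $(u_1,v_i)$ and $(u_q,v_i)$ through the extreme free rays $t_1,t_q$ with a horizontal final segment, and route the interior edges along matching top and bottom rays; Invariants~\myref{I3}--\myref{I4} guarantee that enough free real/dummy rays exist, Lemma~\ref{le:stretching2} lets me assume none of the chosen rays crosses $\auGamma_{i-1}$, and repeated applications of Lemma~\ref{le:stretching} realign the intersection points so the bends occur in the correct left-to-right order. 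The final vertex $v_n$ (Lemma~\ref{le:bitonic-correct}) needs the special treatment of discarding the leading dummy edges and starting from the first real predecessor, because $v_n$ may have up to $\Delta$ real incoming edges while only $\Delta-1$ real bottom rays are available; after adding $v_n$ I would delete all dummy vertices and edges to recover a $1$-bend upward planar drawing $\Gamma$ of $G$ whose edges use only slopes in $\Ss$.

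For the running-time claim (Lemma~\ref{le:bitonic-time}) I would note that the augmentation, triangulation, and canonical ordering are each computable in $O(n)$ time, and that the $n$ vertex-additions perform a bounded number of ray selections and stretching operations per vertex. The main obstacle I anticipate is not any single lemma but the simultaneous preservation of all five invariants under the stretching operations in the inductive step: one must check that enlarging horizontal gaps to reorder the bend points neither destroys the horizontal segments required by~\myref{I2}, nor consumes the free top rays counted by~\myref{I3}--\myref{I4}, nor violates the left-to-right monotonicity of~\myref{I5}. This bookkeeping, together with the rigidity imposed by insisting that real edges use only the prescribed slopes in $\Ss$, is the delicate part that the preceding lemmas are structured to discharge, so that the statement of Theorem~\ref{thm:bitonic} follows by assembling them.
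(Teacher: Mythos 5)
Your correctness argument follows the paper's own proof structure essentially verbatim: triangulated canonical augmentation via Lemmas~\ref{le:gronemann} and~\ref{le:augmentation}, real versus dummy slopes, the incremental construction maintaining Invariants~\myref{I1}--\myref{I5} with Lemmas~\ref{le:gamma2}, \ref{le:stretching}, \ref{le:stretching2} and~\ref{le:gammai}, and the special handling of $v_n$ (dropping the leading dummy edges so that $\Delta-1$ real bottom rays plus the left horizontal ray suffice). That half of the proposal is sound and matches Lemma~\ref{le:bitonic-correct}.

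The genuine gap is in the running-time half. Your claim that each vertex addition performs ``a bounded number of ray selections and stretching operations'' does not establish $O(n)$ time. First, the number of stretching operations at step $i$ is $\Theta(\deg(v_i))$, not $O(1)$ --- this is still fine in aggregate since $\sum_i \deg(v_i) = O(n)$. The real problem is the \emph{cost} of each stretching operation: Lemma~\ref{le:stretching} is realized by rigidly translating all vertices on one side of a cut, i.e., up to $\Theta(n)$ vertices, so a direct implementation runs in $O(n^2)$ overall, as the paper itself points out in its proof of Lemma~\ref{le:bitonic-time}. The paper's proof closes this gap with a two-phase implementation that your proposal lacks: during the incremental construction no coordinates are computed at all; instead each edge segment is labeled with its slope and, when the segment lies on $\auP_i$ or is non-horizontal, its length (Properties~\myref{P1}--\myref{P2}), quantities that stretching never invalidates. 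Only after $v_n$ is processed are actual coordinates recovered, in $O(n)$ time, by a pre-order traversal of the spanning tree formed by the edges containing no horizontal segment, after which the horizontal-segment edges are filled in from their known slopes and bend heights. Without this (or an equivalent) deferred-coordinate scheme, the $O(n)$ claim in the theorem remains unproved.
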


%\begin{figure}
%\centering
%\begin{subfigure}{0.2\textwidth}
%	\centering
%	\includegraphics[width=\textwidth,page=1]{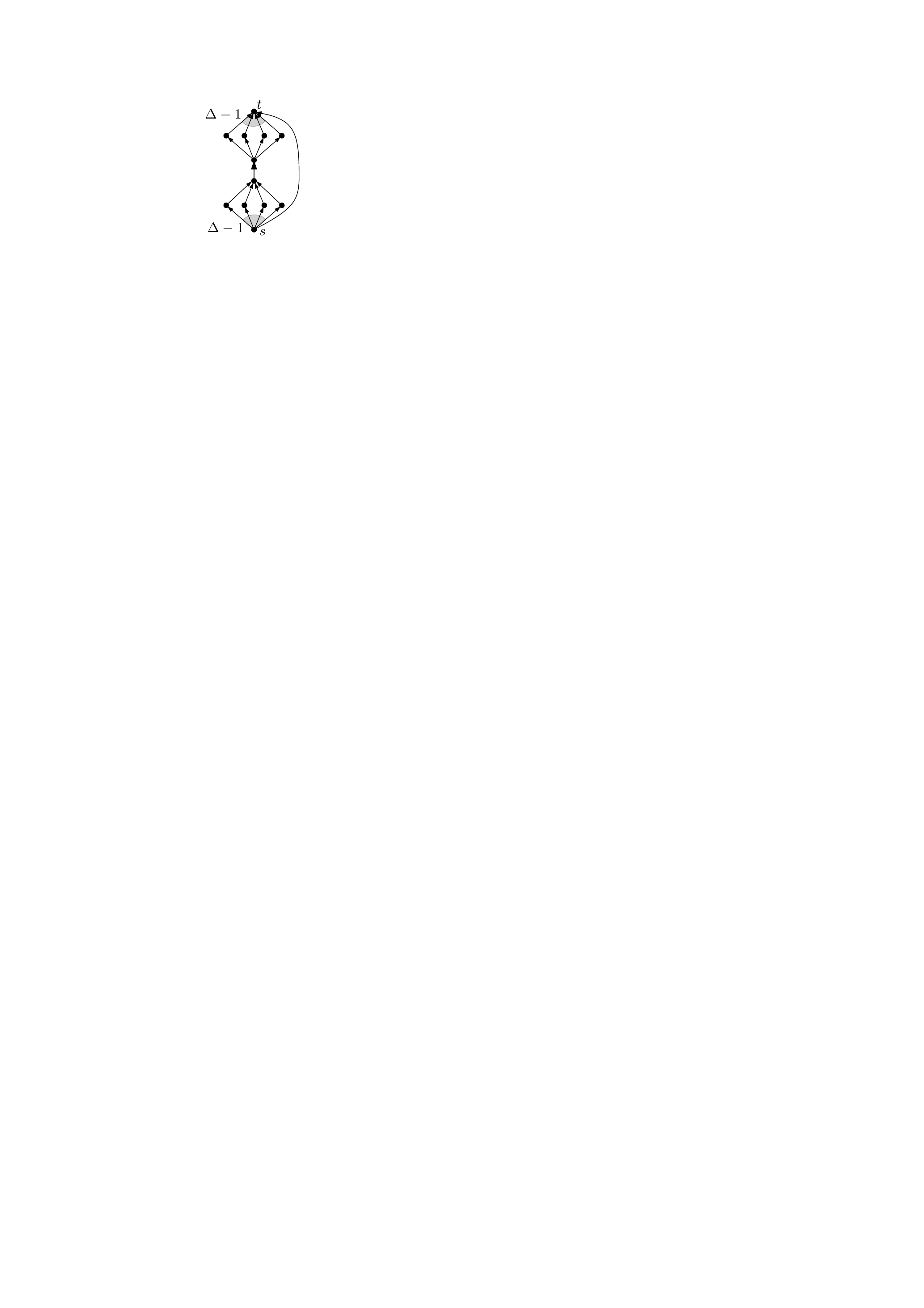}
%	\caption{}
%	\label{fi:lowerbound}
%\end{subfigure}
%\hfil
%\begin{subfigure}{0.3\textwidth}
%	\centering
%	\includegraphics[width=\textwidth,page=1]{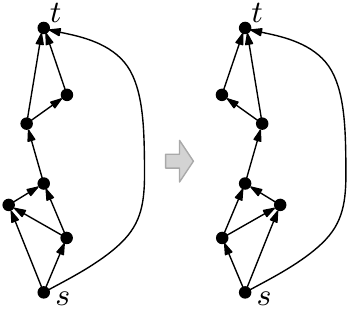}
%	\caption{}
%	\label{fi:cubic-bitonic}
%\end{subfigure}
%\caption{(a) A graph that requires $\Delta$ slopes and angular resolution at most $\frac{\pi}{\Delta}$ in every $1$-bend upward planar drawing. (b) Illustration for the proof of Corollary~\ref{co:cubic-bitonic}.}
%\end{figure}

\begin{corollary}\label{co:1bendusn}
	Every bitonic $st$-graph with maximum vertex degree $\Delta \geq 2$ has 1-bend upward planar slope number at most $\Delta$, which is worst-case optimal.
\end{corollary}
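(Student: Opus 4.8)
The plan is to read off the upper bound directly from Theorem~\ref{thm:bitonic} and to borrow the matching lower bound from~\cite{DBLP:conf/gd/GiacomoLM16}. For the upper bound I would fix any concrete set $\Ss$ of $\Delta$ slopes that contains the horizontal slope, for example the $\Delta$ equispaced slopes $\{\,k\pi/\Delta : k=0,\dots,\Delta-1\,\}$, one of which (for $k=0$) is horizontal. By Theorem~\ref{thm:bitonic} applied to this $\Ss$, the graph $G$ admits a $1$-bend upward planar drawing $\Gamma$ whose edge segments use only slopes in $\Ss$. Since $|\Ss|=\Delta$, the slope number of $\Gamma$ is at most $\Delta$, and therefore the $1$-bend upward planar slope number of $G$, being the minimum slope number over all such drawings, is at most $\Delta$ as well. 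No case analysis is needed here: the theorem already guarantees a drawing on $\Delta$ slopes for every bitonic $st$-graph of maximum degree $\Delta$.

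For worst-case optimality I would invoke the lower bound established in~\cite{DBLP:conf/gd/GiacomoLM16}. That paper constructs, for every $\Delta$, a series-parallel digraph of maximum vertex degree $\Delta$ such that every $1$-bend upward planar drawing of it (under any choice of slopes) requires at least $\Delta$ distinct slopes. The only thing that must be checked is that this extremal instance lies in the class addressed by the corollary, namely that it is a bitonic $st$-graph: the extremal series-parallel digraphs of~\cite{DBLP:conf/gd/GiacomoLM16} are planar $st$-graphs that admit a bitonic $st$-ordering, as one verifies directly from their recursive series-parallel structure, so they are bitonic $st$-graphs of maximum degree $\Delta$. Hence there exist bitonic $st$-graphs of maximum degree $\Delta$ whose $1$-bend upward planar slope number is at least $\Delta$, which together with the upper bound shows that the value $\Delta$ cannot be improved in general.

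The main obstacle is not geometric, since Theorem~\ref{thm:bitonic} does all the heavy lifting for the upper bound, but rather this final bookkeeping step: making sure the extremal example carrying the $\Delta$-slope lower bound is genuinely bitonic, so that the lower bound applies within the stated graph class. I would stress that a single high-degree vertex does not suffice for this purpose, since the horizontal slope contributes an extra usable ray and would only force $\Delta-1$ slopes; the stronger bound of $\Delta$ relies on the global structure of the instances in~\cite{DBLP:conf/gd/GiacomoLM16}. I would therefore either cite the bitonicity of the relevant series-parallel $st$-digraphs explicitly or, as a self-contained alternative, exhibit by hand a bitonic $st$-ordering of the particular family used there before concluding that $\Delta$ is worst-case optimal.
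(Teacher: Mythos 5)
Your proposal is correct and follows essentially the same route as the paper: the upper bound is read off from Theorem~\ref{thm:bitonic}, and worst-case optimality is obtained by citing the extremal family of~\cite{DBLP:conf/gd/GiacomoLM16}, whose members are bitonic $st$-graphs requiring $\Delta$ slopes (the paper simply points to Fig.~\ref{fi:lowerbound} for such a graph). Your additional observation that a single degree-$\Delta$ source would only force $\Delta-1$ slopes, so the full lower bound genuinely needs the structure of those instances, is accurate and consistent with the paper's own remarks, but it does not change the argument.
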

%\begin{proof}
%	Every bitonic $st$-graph has $1$-bend upward planar slope number at most $\Delta$ (Theorem~\ref{thm:bitonic}). Also, for every $\Delta \geq 2$ there is a bitonic $st$-graph (shown in Fig.~\ref{fi:lowerbound}) that requires at least $\Delta$ slopes in any $1$-bend upward planar drawing~\cite{DBLP:conf/gd/GiacomoLM16}.
%\end{proof}

\noindent If $\Ss$ is equispaced, Theorem~\ref{thm:bitonic} implies a lower bound of $\frac{\pi}{\Delta}$ on the angular resolution of the computed drawing, which is worst-case optimal~\cite{DBLP:conf/gd/GiacomoLM16}. 
Also, Theorem~\ref{thm:bitonic} can be extended to planar $st$-graphs with $\Delta \le 3$, as any such digraph can be made bitonic by only  rerouting the edge $(s,t)$.

\begin{theorem}\label{thm:cubic-bitonic}
	Every planar $st$-graph with maximum vertex degree $3$ has $1$-bend upward planar slope number at most $3$.
\end{theorem}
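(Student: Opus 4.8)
The plan is to reduce the statement to Theorem~\ref{thm:bitonic} by showing that a degree‑$3$ planar $st$-graph can always be re-embedded into a \emph{bitonic} $st$-graph, changing only the routing of the edge $(s,t)$. First I would fix any $st$-ordering $\bo$ of $G$ and recall that a forbidden configuration at a vertex $u$ needs two indices $i<j$ with $i+1,j+1\le q$, hence at least three successors, i.e. out-degree at least $3$. Since $G$ has maximum degree $3$, a vertex of out-degree $3$ has in-degree $0$ and is therefore the unique source $s$. Consequently, for \emph{every} vertex $u\neq s$ the sequence $S(u)$ has length at most $2$ and is trivially bitonic, so the \emph{only} possible obstruction to bitonicity is a forbidden configuration in $S(s)$. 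If none occurs, $G$ is already bitonic and the claim follows by applying Theorem~\ref{thm:bitonic} with any set $\Ss$ of three slopes containing the horizontal one.

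Next I would analyze the problematic case, in which $s$ has out-degree $3$ and $S(s)=\{v_1,v_2,v_3\}$ (in clockwise order) forms a forbidden configuration, i.e. $\bo(v_1)>\bo(v_2)<\bo(v_3)$, so that $v_2$ is a strict local minimum. I would choose the embedding so that $(s,t)$ lies on the boundary of the outer face (which does not affect the slope number, as it is quantified over all upward planar drawings); then $t$ is the first or the last successor in $S(s)$. Moreover, since $\bo(t)=n$ is the largest value among all successors of $s$ while $v_2$ is the minimum one, necessarily $t\neq v_2$, that is $t\in\{v_1,v_3\}$.

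The key step is the rerouting. I would delete $(s,t)$ and reinsert it through the outer face on the opposite side of $s$, so that $t$ moves from one extreme of the rotation at $s$ to the other. If $t=v_3$, the new clockwise successor sequence becomes $(t,v_1,v_2)$ with $n=\bo(t)>\bo(v_1)>\bo(v_2)$, a strictly decreasing (hence bitonic) sequence; symmetrically, if $t=v_1$, it becomes $(v_2,v_3,t)$ with $\bo(v_2)<\bo(v_3)<\bo(t)=n$, strictly increasing. Crucially, this modification touches only the rotations at $s$ and $t$: the successor sequence of every other vertex is unchanged (and bitonic as observed above), while $S(t)=\emptyset$ since $t$ is the sink. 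Hence the re-embedded graph is a bitonic planar $st$-graph on the same vertex and edge set, and applying Theorem~\ref{thm:bitonic} to it yields a $1$-bend upward planar drawing of $G$ using only three slopes, which establishes the bound.

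I expect the main obstacle to be the rerouting argument itself: one must verify that reinserting $(s,t)$ through the outer face is always possible (both $s$ and $t$ remain on the outer boundary after deleting the edge, since they bounded it) and that the result is still a valid planar $st$-graph with $(s,t)$ on its outer boundary, so that $\bo$ is a legitimate \emph{bitonic} $st$-ordering of the new embedding. The remaining ingredients — the degree/out-degree counting that localizes all violations to $s$, and the monotonicity of $S(s)$ after rerouting — are straightforward case checks.
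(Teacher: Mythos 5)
Your proposal is correct and follows essentially the same route as the paper: the paper likewise observes that a forbidden configuration needs out-degree at least $3$, hence (with $\Delta=3$) can only occur at the source $s$ and involves the edge $(s,t)$, and then eliminates it "by only rerouting the edge $(s,t)$" (implemented there as mirroring the embedding of $G-(s,t)$, which is just the mirror image of your re-insertion of $(s,t)$ on the other side of $s$). Your explicit case analysis ($t=v_1$ or $t=v_3$, yielding a monotone successor sequence at $s$) is a correct fleshing-out of what the paper delegates to a figure.
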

%\begin{proof}
%	Let $G$ be a planar $st$-graph with maximum vertex degree $3$. By Theorem~\ref{thm:bitonic}, if $G$ has a bitonic $st$-ordering then the statement follows. If $G$ does not contain any forbidden configuration, then it is bitonic (see Section~\ref{sec:preliminaries}). Else, recall that a forbidden configuration consists of at least three outgoing edges incident to the same vertex. Thus, $G$ contains exactly one forbidden configuration, which involves its source $s$ and the edge $(s,t)$. We can remove this forbidden configuration by mirroring the embedding of $G \setminus (s,t)$ (see Fig.~\ref{fi:cubic-bitonic} for an illustration).
%\end{proof}

We conclude with the observation that an upward drawing constructed by the algorithm of Theorem~\ref{thm:bitonic} can be transformed into a strict upward drawing that uses $\Delta+1$ slopes rather than $\Delta$. It suffices to replace every horizontal segment oriented from its leftmost (rightmost) endpoint to its rightmost (leftmost) one with a segment having slope $\varepsilon$ ($-\varepsilon$), for a sufficiently small value of $\varepsilon>0$.

\section{$2$-bend Upward Planar Drawings}\label{sec:2bend}

%We now extend the result of Theorem~\ref{thm:bitonic} to  planar $st$-graphs that are not bitonic: We show that universal slope sets of size $\Delta$ still exist but for drawings that use two bends on a limited number of edges. 

We now extend the result of Theorem~\ref{thm:bitonic} to non-bitonic planar $st$-graphs. By adapting a technique of Keszegh et al.~\cite{DBLP:journals/siamdm/KeszeghPP13}, one can construct $2$-bend upward planar drawings of planar $st$-graphs using at most $\Delta$ slopes. We improve upon this result in two ways: (i) The technique in~\cite{DBLP:journals/siamdm/KeszeghPP13} may lead to drawings with $5n-11$ bends in total, while we prove that $4n-9$ bends suffice; (ii) It uses a fixed set of $\Delta$ slopes (and it is not immediately clear whether it can work with any set of slopes), while we show that any set of $\Delta$ slopes with the~horizontal~one~is~universal. 

Let $G$ be an $n$-vertex non-bitonic planar $st$-graph. All forbidden configurations of $G$ can be removed in linear time by subdividing at most $n-3$ edges of $G$~\cite{DBLP:conf/gd/Gronemann16}. Let $G_b$ be the resulting bitonic $st$-graph, called a \emph{bitonic subdivision} of $G$. Let $\langle u,d,v \rangle$ be a directed path of $G_b$  obtained by subdividing the edge $(u,v)$ of $G$ with the dummy vertex $d$. We call $(u,d)$ the \emph{lower stub}, and $(d,v)$ the \emph{upper stub} of $(u,v)$. We can prove the existence of an augmentation technique similar to that of Lemma~\ref{le:gronemann}, but with an additional property on the upper stubs.

\begin{lemma}\label{le:gronemann2}
	Let $G=(V,E)$ be an $n$-vertex planar $st$-graph that is not bitonic. Let $G_b=(V_b,E_b)$ be an $N$-vertex bitonic subdivision of $G$, with a bitonic $st$-ordering $\bo=\{v_1,v_2,\dots,v_N\}$. There exists a planar $st$-graph $G'=(V',E')$ with an $st$-ordering $\co=\{v_L,v_R,v_1,v_2,\dots,v_N\}$ such that:
	\begin{inparaenum}[(i)]
		\item $V'=V_b\cup\{v_L,v_R\}$; 
	 	\item $E_b \subset E'$ and $(v_L,v_R) \in E'$; 
	 	\item $v_L$ and $v_R$ are on the boundary of the outer face of $G'$; 
	 	\item\label{p:predecessors2} Every vertex of $G_b$ with less than two predecessors in $\bo$ has exactly two predecessors in $\co$.
	 	\item\label{p:upperstub} There is no vertex in $G'$ such that its leftmost or its rightmost incoming edge is an upper stub. 
	\end{inparaenum}
	\noindent Also, $G'$ and $\co$ are computed~in~$O(n)$~time.
\end{lemma}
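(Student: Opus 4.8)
The plan is to obtain properties (i)--(\ref{p:predecessors2}) by invoking the construction of Lemma~\ref{le:gronemann} and to secure the new property~(\ref{p:upperstub}) by inserting a few additional augmenting edges. First I would apply Lemma~\ref{le:gronemann} to the bitonic $st$-graph $G_b$ together with its bitonic $st$-ordering $\bo$: since $G_b$ \emph{is} bitonic, this yields a planar $st$-graph $G''$ with an $st$-ordering $\co=\{v_L,v_R,v_1,\dots,v_N\}$ satisfying (i)--(\ref{p:predecessors2}) verbatim, and since $G_b$ has $N\le 2n-3$ vertices the step runs in $O(n)$ time. It then remains to modify $G''$ into $G'$ so that property~(\ref{p:upperstub}) holds as well, without destroying (i)--(\ref{p:predecessors2}).

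The key is a reformulation of property~(\ref{p:upperstub}) in the language of the ordering $\co$. In an upward planar embedding the incoming edges of a vertex $v_i$ are exactly the edges to its predecessors and appear consecutively around $v_i$, the two extreme members of this arc being precisely the leftmost and rightmost incoming edges of $v_i$. Reading $\co$ as a vertex-addition order, the predecessors of $v_i$ may be taken to form a contiguous arc $u_1,\dots,u_q$ of the current outer boundary path, with $(u_1,v_i)$ and $(u_q,v_i)$ the extreme incoming edges. Hence property~(\ref{p:upperstub}) is equivalent to requiring that no upper stub $(d,v_i)$ occurs as $(u_1,v_i)$ or $(u_q,v_i)$; that is, every subdivision vertex $d$ must be an \emph{internal} neighbor $u_2,\dots,u_{q-1}$ of its unique successor $v_i=\mathrm{head}(d)$, so that $d$ is buried when $v_i$ is added. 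This is consistent with $d$ having out-degree one, since an internal neighbor is removed from the outer boundary whereas an extreme neighbor would remain on it and expect further successors.

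To enforce this I would add \emph{flanking} dummy edges. Whenever an upper stub $(d,v_i)$ is the leftmost neighbor $u_1$ of $v_i$, let $u_0$ be the vertex immediately to the left of $d$ on the outer boundary path at the time $v_i$ is added; such a $u_0$ always exists because $d\neq v_L$, and the edge $(u_0,v_i)$ can be routed inside the face just to the left of $(d,v_i)$, so it preserves planarity and, being oriented from the earlier vertex $u_0$ to $v_i$, preserves the $st$-ordering. Since predecessors form a contiguous arc starting at $d$, the vertex $u_0$ was not already a predecessor of $v_i$, so the edge is new and no parallel edge is created. Adding it makes $u_0$ the new leftmost neighbor and buries $d$; the symmetric argument at the vertex immediately to the right handles the rightmost case. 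Crucially, each flanking edge is a fresh dummy edge and is \emph{not} an upper stub, so it cannot create a new violation, and it only raises in-degrees, so (i)--(iii) are untouched.

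The main obstacle is the interaction with property~(\ref{p:predecessors2}): a head $v_i$ whose \emph{only} predecessor in $\bo$ is an upper stub would be forced by~(\ref{p:predecessors2}) to have exactly two predecessors in $\co$, whereupon the stub is unavoidably extreme and flanking it would demand a third predecessor, breaking~(\ref{p:predecessors2}). The heart of the proof is therefore to show that this degenerate case never arises, i.e. that every subdivided edge has a head carrying a second, non-stub predecessor. I would derive this from the structure of the forbidden configurations driving the subdivision in~\cite{DBLP:conf/gd/Gronemann16}: the head of a subdivided edge lies on one of the two directed paths witnessing the configuration and hence already has an incoming edge besides the one being subdivided, so it retains a non-stub predecessor and may safely receive a flanking edge (raising its in-degree to at least three, which is permitted by~(\ref{p:predecessors2}) precisely because it started with at least two predecessors). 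A final routine check confirms that the flanking edges can be inserted in a single pass over $\co$ without cascading, so that $G'$ and $\co$ are still produced in $O(n)$ time.
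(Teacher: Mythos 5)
Your flanking-edge device is the same one the paper uses, and you deserve credit for explicitly isolating the tension between properties (iv) and (v): if some vertex had a subdivision dummy as its \emph{unique} predecessor in $\bo$, then (iv) would force exactly two incoming edges in $\co$, one of which is the upper stub, which would then be extreme --- so (iv) and (v) would be jointly unachievable. Your observation that this degenerate case never occurs, because the head of a subdivided edge is the endpoint of a directed path witnessing the forbidden configuration and therefore keeps a second, non-stub predecessor, is exactly the structural fact needed; the paper's own two-case analysis silently relies on it without stating it.

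There is, however, a genuine gap in the architecture of your proof: you invoke Lemma~\ref{le:gronemann} as a black box to obtain $G''$ satisfying (i)--(iv) and only afterwards insert flanking edges. The black box guarantees nothing about where \emph{its own} augmentation edges go, and they can occupy precisely the faces you need. Concretely, let $(d,v_i)$ be a leftmost upper stub. Since $d$ is an extreme predecessor of $v_i$, it stays on the outer boundary of the prefix graphs after step $i$ (boundary $\ldots,u_0,d,v_i,\ldots$). If a later vertex $v_k$ has $v_i$ as its unique predecessor in $\bo$, the edge added by Lemma~\ref{le:gronemann} to give $v_k$ a second predecessor may legitimately be $(d,v_k)$, routed in the region to the left of $(d,v_i)$. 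In the final $G''$ the face to the left of $(d,v_i)$ is then the triangle $d,v_i,v_k$, and no flanking edge into $v_i$ can be placed there: an edge from $d$ would be parallel to the stub, and an edge from $v_k$ would point backwards in $\co$ and create a $2$-cycle with $(v_i,v_k)$. So your claim that ``the edge $(u_0,v_i)$ can be routed inside the face just to the left of $(d,v_i)$'' holds in the prefix $G''_i$ but can fail in $G''$; reasoning ``at the time $v_i$ is added'' is not available once the augmentation is a black box. The repair is to abandon the two-phase plan and interleave the flanking insertion with the incremental augmentation, as the paper does: when $v_i$ is processed, the flanking edge is added immediately, which buries $d$ off the outer boundary, so $d$ can never later be chosen as the endpoint of an augmentation edge. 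In other words, your ``single pass over $\co$'' must re-run and locally modify the construction behind Lemma~\ref{le:gronemann}, not post-process its output.
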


\begin{theorem}\label{thm:planarst}
Let $\Ss$ be any set of $\Delta \ge 2$ slopes including the horizontal slope and let $G$ be an $n$-vertex planar $st$-graph with maximum vertex degree $\Delta$. Graph $G$ has a $2$-bend upward planar drawing $\Gamma$ using only slopes in $\Ss$, which has at most $4n-9$ bends in total and which can be computed in $O(n)$ time.
\end{theorem}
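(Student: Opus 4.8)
\noindent\emph{Proof plan.} The plan is to reduce everything to the bitonic case handled by Theorem~\ref{thm:bitonic}. If $G$ is bitonic we are already done, since the $1$-bend drawing of Theorem~\ref{thm:bitonic} has at most one bend per edge, hence at most $m \le 3n-6 \le 4n-9$ bends in total. So assume $G$ is not bitonic. First I would compute a bitonic subdivision $G_b$ of $G$ by subdividing at most $n-3$ edges~\cite{DBLP:conf/gd/Gronemann16}; writing $s \le n-3$ for the number of subdivided edges, $G_b$ has $N = n+s \le 2n-3$ vertices, maximum degree $\Delta$ (subdivision vertices have degree two and the degrees of the endpoints are unchanged), and a bitonic $st$-ordering. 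Each subdivided edge $(u,v)$ of $G$ becomes a path $\langle u,d,v\rangle$ consisting of its lower stub $(u,d)$ and its upper stub $(d,v)$. Next I would apply Lemma~\ref{le:gronemann2} (instead of Lemma~\ref{le:gronemann}) to obtain a canonical augmentation $G'$ that additionally satisfies property~(\ref{p:upperstub}), triangulate it as in Lemma~\ref{le:augmentation}, and run the drawing algorithm of Section~\ref{sec:1bend} verbatim. This yields a $1$-bend upward planar drawing of $G_b$ whose real edges use only slopes in $\Ss$, with at most one bend per stub and per non-subdivided edge. The final drawing $\Gamma$ of $G$ is then obtained by deleting $v_L$, $v_R$, the dummy edges, and by suppressing every subdivision vertex $d$, so that each path $\langle u,d,v\rangle$ collapses to a single polyline for $(u,v)$.

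The bend count is the heart of the statement. A non-subdivided edge is drawn with at most one bend, contributing at most $m-s$ bends in total. For a subdivided edge $(u,v)$, suppressing $d$ produces a polyline made of the lower stub (at most one bend), the upper stub (at most one bend), and possibly a bend at the point that used to be $d$; a priori this is three bends. The claim I must establish is that this third bend can always be avoided, so that each subdivided edge costs at most two bends. Granting this, the total is at most $(m-s)+2s = m+s \le (3n-6)+(n-3) = 4n-9$, and since no new slope is ever introduced, $\Gamma$ uses only the $\Delta$ slopes of $\Ss$.

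To avoid the bend at $d$ I would use property~(\ref{p:upperstub}) of Lemma~\ref{le:gronemann2}: the upper stub $(d,v)$ is neither the leftmost nor the rightmost incoming edge of $v$, hence it is drawn as a \emph{middle} edge when $v$ is inserted. For a middle edge the algorithm is free to pick \emph{any} outer top ray of $d$ for the segment incident to $d$ (whereas for an extreme edge this slope is forced). I would exploit this freedom to make the upper stub collinear with the lower stub at $d$: the lower stub reaches $d$ along a determined slope $\rho \in \Ss$, and I choose for the upper stub the ray of $d$ of slope $\rho$ collinear with it. The two stub segments incident to $d$ then form a single straight segment through $d$, so $d$ is not a bend and $(u,v)$ has at most two bends. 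Concretely, when $\rho$ is non-horizontal the lower stub enters $d$ along a bottom ray and I take the collinear \emph{top} ray of $d$; when $\rho$ is horizontal (the lower stub is extreme at $d$) I continue along the opposite horizontal ray of $d$. In either case Invariants~\myref{I3}--\myref{I4} guarantee that enough outer rays exist, and Lemma~\ref{le:stretching2} lets me assume the chosen ray crosses nothing, so that planarity, upwardness, and \myref{I1}--\myref{I5} are preserved exactly as in the proof of Lemma~\ref{le:gammai}.

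The main obstacle is precisely this smoothing step: I must certify that the ray of $d$ collinear with the incoming lower stub is free and can be made outer in \emph{every} configuration, and that rerouting the upper stub through it keeps the drawing planar and upward, the delicate subcase being the horizontal one, where the continuation segment lies at the height of $d$ and must be cleared of crossings. Property~(\ref{p:upperstub}) is exactly what makes this possible: without it the upper stub could be an extreme edge, its slope at $d$ would be forced, and the bend at $d$ could not be removed, costing the extra $s$ bends and yielding only the weaker bound of roughly $5n$ bends. Finally, the bitonic subdivision, the augmentation of Lemma~\ref{le:gronemann2}, the triangulation, the drawing algorithm, and the suppression of the $O(n)$ subdivision vertices each run in $O(n)$ time (as $N = O(n)$), so $\Gamma$ is computed in $O(n)$ time overall.
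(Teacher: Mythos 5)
Your global strategy coincides with the paper's: bitonic subdivision with $s\le n-3$ subdivision vertices, the augmentation of Lemma~\ref{le:gronemann2} with property~(\ref{p:upperstub}), triangulation, the algorithm of Section~\ref{sec:1bend} modified so that each subdivided edge costs at most two bends, and the count $(m-s)+2s\le 4n-9$. The gap is in your smoothing step, and it is not a repairable technicality: you try to save the bend at the wrong junction. Consider how the algorithm actually draws the lower stub $(u,d)$. A subdivision vertex $d$ has exactly one predecessor in $\bo$, hence exactly two predecessors in $\co$ by Lemma~\ref{le:gronemann2}(\ref{p:predecessors2}) (triangulation can only add more). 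When $d$ is inserted with $q=2$ predecessors, \emph{both} of its incoming edges are extreme edges, and the construction draws both of their segments incident to $d$ horizontally: one attaches to $d$ through its left horizontal ray, the other through its right horizontal ray. So in the generic situation the lower stub enters $d$ horizontally --- your ``delicate subcase'' is in fact the main case --- and the ray you would need for the collinear continuation, the opposite horizontal ray of $d$, is already occupied by the other (dummy) extreme edge incident to $d$. The collinear continuation therefore simply does not exist; any drawing of the upper stub must leave $d$ on a non-horizontal ray, creating a bend at $d$. (Even in configurations where a horizontal continuation ray were free, a horizontal tail segment for $(d,v)$ would not fit the algorithm's scheme: middle edges must leave their tail along a top ray so that their bend point lies on the line $\ell$ above $\auGamma_{i-1}$, which is exactly what the planarity argument in the proof of Lemma~\ref{le:gammai} relies on.)

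The paper saves the bend at the other end of the upper stub, and this is how your argument should be repaired. When $v$ is inserted, property~(\ref{p:upperstub}) guarantees that $(d,v)$ is a middle edge, so its segment incident to $d$ may use \emph{any} outer real top ray of $d$ (all real top rays of $d$ are outer, since the upper stub is the only real outgoing edge of $d$); one chooses the top ray of $d$ whose slope equals that of the bottom ray $b_j$ of $v$ assigned to $(d,v)$, so that the upper stub becomes a single straight segment with zero bends. The bend at $d$ --- horizontal lower-stub segment meeting the straight upper stub --- then remains, but the total per subdivided edge is still at most $1+1+0=2$, which yields the same $4n-9$ bound. In short, your identification of property~(\ref{p:upperstub}) and of the ``any outer top ray'' freedom is exactly right, but that freedom must be used to align the two segments of the upper stub \emph{with each other}, not to align the upper stub with the lower stub at $d$; as written, your Case~2 fails and with it the $4n-9$ bound.
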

\begin{proof}
	We compute a triangulated canonical augmentation $\auG$ of $G$ by (1) applying Lemma~\ref{le:gronemann2} and (2) triangulating the resulting digraph. By Lemma~\ref{le:augmentation}, $\auG$ has an upward canonical ordering $\co$.  
	The algorithm of Theorem~\ref{thm:bitonic} to $\auG$ would lead to a $3$-bend drawing of $G$ (by interpreting every subdivision vertex as a bend). We explain how to modify it to construct a drawing $\auGamma$ of $\auG$ with at most $2$ bends per edge and $4n-9$ bends in total. 
	Let $v_i$ the next vertex to be added according to $\co$ and let $\{u_1,u_2,\dots,u_q\}$ its neighbors in $\auP_{i-1}$. Suppose that $u_j$ is a dummy vertex and that $(u_j,v_i)$ is an upper stub. To save one bend along the edge subdivided by $u_j$, we draw $(u_j,v_i)$ without bends. By Lemma~\ref{le:gronemann2}~(\ref{p:upperstub}), we have that $1 < j < q$. The ray $t_{j}$ used to draw the segment of $(u_j,v_i)$ incident to $u_j$ can be any outer real top ray; we choose the ray with same slope as the real bottom  ray $b_j$ used to draw the segment of $(u_j,v_i)$ incident to $v_i$. This is possible because all real top rays of $u_j$ are outer (since $(u_j,v_i)$ is the only real outgoing edge of $u_j$). Hence, edge $(u_j,v_i)$ has no bends. 
	The drawing $\Gamma$ of $G$ is obtained from $\auGamma$ by removing dummy edges and replacing dummy vertices (except $v_L$ and $v_R$, which are removed) with bends. Since the upper stubs of subdivided edges has $0$ bends, each edge of $\Gamma$ has at most $2$ bends. Let $m_1$ and $m_2$ be the number of edges drawn with $1$ and $2$ bends, respectively; we have $m_2 \leq n-3$ and $m_1=m-m_2 \leq 3n-6 -(n-3)=2n-3$. Thus the total number of bends is at most $2n-3+2(n-3)=4n-9$. Finally, $\auG$ can be computed in $O(n)$ time (Lemma~\ref{le:gronemann2}) and the modified drawing algorithm still runs in linear time.
\end{proof}

A planar $st$-graph with a source/sink of degree $\Delta$ requires at least $\Delta -1$ slopes in any upward planar drawing; thus the gap with Theorem~\ref{thm:planarst} is one unit. Similarly to Theorem~\ref{thm:bitonic}, Theorem~\ref{thm:planarst} implies a lower bound of $\frac{\pi}{\Delta}$  on the angular resolution of $\Gamma$; an upper bound of $\frac{\pi}{\Delta-1}$ can be proven with the same digraph used for the lower bound on the~slope~number.
Finally, Theorem~\ref{thm:upward} extends the result of Theorem~\ref{thm:planarst} to every upward planar graph using an additional slope.
%More in detail, since every digraph that admits an upward planar drawing is a subgraph of a planar $st$-graph~\cite{DBLP:journals/tcs/BattistaT88,DBLP:journals/dm/Kelly87}, we can augment any upward planar graph $G$ with maximum vertex degree $\Delta$ to a planar $st$-graph with dummy edges. Then Theorem~\ref{thm:planarst} can be directly applied with a set of slopes of size $\Delta+1$ rather than $\Delta$, to account for the fact that $G$ may have multiple sources and sinks, and thus a vertex of $G$ may have $\Delta$ incoming or outgoing real edges. The next theorem easily follows.

\begin{theorem}\label{thm:upward}
	Let $\Ss$ be any set of $\Delta+1$ slopes including the horizontal slope and let $G$ be an $n$-vertex upward planar graph with maximum vertex degree $\Delta \ge 2$. Graph $G$ has a $2$-bend upward planar drawing using only slopes in $\Ss$.
\end{theorem}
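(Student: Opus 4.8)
The plan is to reduce Theorem~\ref{thm:upward} to the machinery behind Theorem~\ref{thm:planarst}, paying for the reduction with a single extra slope. First I would fix an upward planar embedding of $G$ (it exists since $G$ is upward planar) and augment $G$ to a planar $st$-graph $G'$ in the standard way: insert a super-source $s$ and a super-sink $t$, join $s$ to every source of $G$ and every sink of $G$ to $t$, add the edge $(s,t)$, and finally saturate every face so that it has a unique local source and a unique local sink. The crucial bookkeeping decision is to label \emph{every} inserted edge (and the vertices $s,t$ and all subdivision/saturation vertices) as \emph{dummy}, so that the edges of $G$ remain the only \emph{real} edges. Then the real degree of each original vertex is still at most $\Delta$, even though its total degree (and that of $s$ and $t$) may be much larger; this is harmless, because the algorithm of Section~\ref{sec:1bend} draws dummy edges on dummy slopes, of which it is free to create as many as needed between two consecutive real slopes.

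Next I would run the construction of Theorem~\ref{thm:planarst} on $G'$ --- a bitonic subdivision and a triangulated canonical augmentation $\auG$ via Lemmas~\ref{le:gronemann2} and~\ref{le:augmentation}, followed by the incremental placement that maintains Invariants~\myref{I1}--\myref{I5} --- but using the real/dummy labelling above and a palette of $\Delta+1$ real slopes. The extra real slope is exactly what the upward (as opposed to $st$) setting forces, for the following reason. In the vertex-insertion step (Lemma~\ref{le:gammai}) the two \emph{extreme} incoming edges of a newly placed vertex attach through its left and right horizontal rays, so an in-degree of up to $\Delta$ costs only $\Delta-2$ non-horizontal bottom rays; by contrast every \emph{outgoing} edge must leave its tail through a non-horizontal real \emph{top} ray. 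A source of $G$ --- a vertex with no real incoming edge and up to $\Delta$ real outgoing edges --- therefore consumes $\Delta$ real top rays. With $\Delta$ real slopes only $\Delta-1$ top rays exist, whereas $\Delta+1$ real slopes provide exactly $\Delta$. Unlike the unique global source in Theorem~\ref{thm:planarst}, which frees a top ray by sending the edge $(v_1,v_2)$ along a horizontal ray, a source of $G$ is an interior vertex whose horizontal rays are already occupied by the two dummy incoming edges created by the $st$-augmentation, so it cannot recover a top ray this way; hence the $+1$.

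With this palette I would re-verify Invariants~\myref{I1}--\myref{I5}. The arguments copy those of Lemma~\ref{le:gammai} almost verbatim once \myref{I3} is read as ``$v$ keeps at least $\nr{v}{i}$ outer real top rays among the $\Delta$ now available'', the only genuinely new case being a full-out-degree source, which the count above settles. Finally, as in the proofs of Lemma~\ref{le:bitonic-correct} and Theorem~\ref{thm:planarst}, I would delete all dummy edges and dummy vertices ($s$, $t$, $v_L$, $v_R$, and every subdivision and saturation vertex); since $G\subseteq G'$, what remains is a $2$-bend upward planar drawing of $G$ whose segments use only the $\Delta+1$ real slopes, each real edge carrying at most two bends because the upper stub of a subdivided real edge is drawn straight.

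The hard part will be the accounting rather than any new geometry. I must check that $G'$ can be made a planar $st$-graph using \emph{only} dummy additions, so that no real vertex ever needs more than $\Delta$ real top rays (for sources) or $\Delta-2$ real bottom rays (for sinks); and that piling arbitrarily many dummy edges onto a single vertex does not break Invariant~\myref{I4} or the free-ray counts relied upon when later vertices are inserted --- which is why the number $\Delta^*$ of dummy slopes must be chosen \emph{after} the augmentation, large enough to accommodate the densest dummy star produced by $s$, $t$, and the saturation. The conceptual content (one extra slope, forced solely by interior sources of full out-degree) is clean; making the real/dummy ray budget airtight throughout the incremental construction is the laborious step.
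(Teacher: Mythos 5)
Your proposal is correct and follows essentially the same route as the paper's own proof: augment $G$ with dummy edges to a planar $st$-graph, run the algorithm of Theorem~\ref{thm:planarst} while drawing the augmenting edges on dummy slopes, and use $\Delta+1$ real slopes precisely because a vertex that is a source (resp.\ sink) of $G$ but internal in the augmented graph may have $\Delta$ real outgoing (resp.\ incoming) edges while its horizontal rays are occupied by dummy edges. The only quibble is your count of ``$\Delta-2$ real bottom rays'' for sinks: a sink of $G$ may need up to $\Delta$ real bottom rays, since both of its extreme incoming edges can be dummy, but this is harmless because $\Delta+1$ slopes provide $\Delta$ non-horizontal bottom rays.
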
 

\section{Open Problems}\label{sec:conclusions}
%In this paper we extended the study of universal sets of slopes to upward planar  drawings, and presented the first constructive technique that works for all planar $st$-graphs. One of the main ingredients of our algorithms was the recently introduced bitonic $st$-ordering. 
%Using this new technique, we proved that any set $\Ss$ of $\Delta$ slopes containing the horizontal is universal for $1$-bend upward planar drawings of planar $st$-graphs admitting such an ordering. This result implies a tight bound for the $1$-bend upward planar slope number of bitonic $st$-graphs and gives rise to drawings with worst-case optimal angular resolution. We then extended the technique to prove that any such $\Ss$ is universal for $2$-bend upward planar drawings of planar $st$-graphs. This second result also implies nearly optimal bounds on the $2$-bend upward planar slope number and on the angular resolution. 

%
\begin{inparaenum}[(i)] 
	\item Can we draw every planar $st$-graph with at most one bend per edge (or less than $4n-9$ in total) and $\Delta$ slopes?
	\item What is the $2$-bend upward planar slope number of planar $st$-graphs? Is $\Delta$  a tight bound?
	%
	%\item Our drawing technique produces drawings with large area (possibly super-polynomial); what is the area requirement of $1$-bend upward planar drawings with few slopes and good angular resolution?
	%
	\item What is the straight-line upward planar slope number of upward planar digraphs?
\end{inparaenum}

\smallskip
{\footnotesize \noindent \textbf{Acknowledgments.} Research partially supported by project: 
	``Algoritmi e 
	sistemi di analisi visuale di 
	reti complesse e di grandi dimensioni - Ricerca di Base 
	2018, Dipartimento di Ingegneria, Universit\`a degli Studi di  Perugia''.}

\bibliographystyle{splncs04}
\bibliography{paper}

\clearpage

\appendix

\section*{Appendix}

\section{Missing proofs of Section~\ref{sec:1bend}}

\begin{figure}


	\centering
	\begin{subfigure}{0.32\textwidth}
		\centering
		\includegraphics[width=\textwidth,page=1]{figs/cut}
		\caption{}
		\label{fi:cut1}
	\end{subfigure}
	\hfil
	\begin{subfigure}{0.32\textwidth}
		\centering
		\includegraphics[width=\textwidth,page=2]{figs/cut}
		\caption{}
		\label{fi:cut2}
	\end{subfigure}
	\hfil
	\begin{subfigure}{0.32\textwidth}
		\centering
		\includegraphics[width=\textwidth,page=3]{figs/cut}
		\caption{}
		\label{fi:cut3}
	\end{subfigure}
	\caption{Illustration for Lemma~\ref{le:stretching}.}
\end{figure}

\setcounter{lemma}{2}

\begin{lemma}
	Let $\auGamma_i$ be a drawing of $\auG^-_i$ that satisfies Invariants \myref{I1}--\myref{I5}. Let $(u,v)$ be any edge of $\auP_i$ such that $u$ is encountered before $v$  along $\auP_i$ when going from $v_L$ to $v_R$, and let $\lambda$ be a positive number. There exists a drawing $\auGamma'_i$ of $\auG^-_i$ that satisfies Invariants \myref{I1}--\myref{I5} and such that: (i) the horizontal distance between $u$ and $v$ is increased by $\lambda$; (ii) the horizontal distance between any two other consecutive vertices along $\auP_i$ is the same as in $\auGamma_i$.
\end{lemma}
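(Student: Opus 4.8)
\begin{proofsketch}
The plan is to prove this by a \emph{cut-and-shift} argument in the spirit of the shifting method of de~Fraysseix, Pach, and Pollack and of the analogous stretching lemma of Angelini et al.~\cite{DBLP:conf/compgeom/AngeliniBLM17}, adapted so that the \emph{exact} slopes of $\Ss$ are preserved. The key difference from the classical shifting method is that a rigid horizontal translation of a sub-region must not alter the slope of \emph{any} edge; hence the only segments that the boundary of the translated region may cross, and thus lengthen, are the \emph{horizontal} ones, which stay horizontal when one of their endpoints is moved horizontally. Invariant \myref{I2} is precisely what makes horizontal segments available along $\auP_i$, and the fact that $\auGamma_i$ draws $\auG^-_i$ (so $(v_L,v_R)$, $(v_1,v_R)$, and possibly $(v_R,v_2)$ are omitted) leaves the bottom of the drawing open, which lets the separating curve escape to the outer face from below.

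First I would fix the horizontal segment $s$ of $(u,v)$ guaranteed by \myref{I2}, and pick an interior point $p\in s$ whose $x$-coordinate differs from that of every vertex and bend of $\auGamma_i$. Then I would construct a $y$-monotone cut curve $\mu$ through $p$ with both ends in the outer face of $\auGamma_i$: its upper end runs vertically upward from $p$ into the outer region above $s$, while its lower part descends from $p$ toward the opened bottom, at each step either moving through the interior of a face without meeting any edge or crossing an edge along one of its horizontal segments. I would route $\mu$ so that it meets $\auP_i$ only at $p$; the curve then separates $\auGamma_i$ into a \emph{left} part containing $v_L$ and $u$ and a \emph{right} part containing $v$ and $v_R$ (see Figs.~\ref{fi:cut1}--\ref{fi:cut2}). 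I obtain $\auGamma'_i$ by translating every vertex and every bend of the right part by $(\lambda,0)$ and leaving the left part fixed; each segment crossed by $\mu$ is horizontal and is merely lengthened by $\lambda$, so no bend is created and no slope changes (see Fig.~\ref{fi:cut3}). Because $\mu$ meets $\auP_i$ only at $p$, the unique consecutive pair of $\auP_i$ it separates is $(u,v)$, which yields (i) and (ii).

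It then remains to check \myref{I1}--\myref{I5}. For \myref{I1}: every edge lying entirely on one side of $\mu$ is rigidly translated and keeps its slopes and its single bend, while edges crossed by $\mu$ keep their slopes because only a horizontal piece is stretched; $y$-coordinates are unchanged, so the drawing stays upward, and planarity survives because the two sides are slid apart along a separating $y$-monotone curve. Invariant \myref{I2} is immediate, since only horizontal segments are lengthened. For \myref{I3}--\myref{I4}: a horizontal translation preserves all ray slopes and the rotation system around every vertex, and since the stretch only increases horizontal separations without changing the embedding or the outer face, the free and outer status of every real and dummy top ray, as well as the counts $\nr{v}{i}$ and $\nd{v}{i}$, are unaffected. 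For \myref{I5}: only the gap between $u$ and $v$ grows, and the left-to-right order of the intersections of any horizontal line with $\auP_i$ is inherited from $\auGamma_i$, because the right part moves rightward and remains separated from the left part by $\mu$.

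The main obstacle is the existence and correctness of the cut $\mu$: I must argue that a $y$-monotone curve from the outer region above $(u,v)$ to the outer region reachable through the opened bottom can always be routed so that every edge it crosses is crossed along a horizontal segment. This is exactly where \myref{I2} and the omission of the bottom edges of $\auG^-_i$ are essential. Concretely I would either exhibit such a staircase curve directly, descending from one horizontal segment to the next through faces it can traverse without crossing any non-horizontal edge, or, equivalently, define the right part combinatorially and prove that every edge with one endpoint on each side is cut by $\mu$ along a horizontal segment. Once the cut is available, the shift and the invariant checks above are routine.
\end{proofsketch}
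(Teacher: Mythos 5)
There is a genuine gap, and you essentially name it yourself: the existence of the cut is the entire content of this lemma, and your proposal leaves it as ``I would either exhibit such a staircase curve directly \dots or, equivalently, define the right part combinatorially and prove \dots''. Neither alternative is carried out, and the existence is not obvious. Note in particular that \myref{I2} only guarantees horizontal segments on the edges of $\auP_i$; the \emph{interior} edges of $\auG^-_i$ (for instance every edge $(u_j,v_i)$ with $1<j<q$ drawn via a top ray of $u_j$ and a bottom ray of $v_i$) in general contain \emph{no} horizontal segment, so your curve $\mu$ must avoid all of them entirely while descending to the outer face. That such a descent is always possible is precisely what has to be proved, and it cannot be read off from \myref{I2} plus the removal of the bottom edges alone. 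The paper closes this gap with an induction on $i$ that maintains, for \emph{every} edge $(u,v)$ of $\auP_i$, a combinatorial cut $(L_i(u,v),R_i(u,v))$ all of whose crossing edges have a horizontal segment: in the base case this is trivial for $\auP_2$; in the inductive step one uses the fact that the neighbors $u_1,\dots,u_q$ of $v_i$ lie consecutively on $\auP_{i-1}$, so for an old boundary edge the whole new star around $v_i$ falls on one side of the inherited cut, and for a new boundary edge such as $(u_1,v_i)$ one reuses the cut of $(u_1,u_2)$ and places $v_i$ on the far side, with \myref{I2} certifying that the single new crossing edge $(u_1,v_i)$ has a horizontal segment. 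Without some such inductive maintenance, your construction of $\mu$ is an unproven assertion.

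A second, smaller discrepancy: your planarity argument after the shift leans on $\mu$ being a $y$-monotone separating curve (``the two sides are slid apart along a separating $y$-monotone curve''), which adds a further existence burden ($y$-monotonicity) that you also do not establish, and which becomes delicate when crossing edges have horizontal segments at equal heights. The paper avoids geometry here altogether: it translates the vertices of $R_i(u,v)$ by $\lambda$ and proves planarity of the result by contradiction from Invariant \myref{I5} (a crossing created by the translation would force two points of $\auP_i$ to appear on a horizontal line in an order inconsistent with their order along $\auP_i$), and then re-verifies \myref{I5} for $\auGamma'_i$ by the same style of argument. If you replace your appeal to the geometric separation by this \myref{I5}-based argument, and supply the inductive construction of the cut, your outline becomes the paper's proof.
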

\begin{proof}
	We prove by induction on $i$ that there exists a cut $(L_i(u,v),R_i(u,v))$ such that: (i) the vertices of the subpath of $\auP_{i}$ from $v_L$ to $u$ belong to $L_i(u,v)$, while the vertices of the subpath of $\auP_{i}$ from $v$ to $v_R$ belong to $R_i(u,v)$; (ii) every edge that crosses the cut has a horizontal segment (see Figure~\ref{fi:cut1}).  
	
	This is trivially true for each edge $(u,v)$ in $\auP_2$. Assume that it is true for $\auGamma_{i-1}$ by induction ($i > 2$). Let $u_1,u_2,\dots,u_q$ be the neighbors of $v_i$ in $\auP_{i-1}$; observe that $\auP_i$ is obtained from $\auP_{i-1}$ by replacing the subpath $\langle u_1, u_2,\dots, u_q \rangle$ with $\langle u_1, v_i, u_q \rangle$. Consider an edge $(u,v)$ of $\auP_{i}$. If $(u,v)$ also belongs to $\auP_{i-1}$ then $\langle u_1, u_2,\dots, u_q \rangle$ all belong to either $L_{i-1}(u,v)$ or to $R_{i-1}(u,v)$, say to $L_{i-1}(u,v)$ (see also Fig.~\ref{fi:cut2}). This implies that the cut $(L_{i}(u,v),R_i(u,v))$ with  $L_{i}(u,v)=L_{i-1}(u,v) \cup \{v_i\}$ and  $R_i(u,v)=R_{i-1}(u,v)$ satisfies (i) and (ii). If $(u,v)$ does not belong to $\auP_{i-1}$ then $(u,v)$ is either $(u_1,v_i)$ or $(v_i,u_q)$ (see Fig.~\ref{fi:cut3}). Suppose it is  $(u_1,v_i)$ (the other case is similar). Consider the cut $(L_{i-1}(u_1,u_2),R_{i-1}(u_1,u_2))$. Since by \myref{I2} $(u_1,v_i)$ contains an horizontal segment and there is a face of $\auGamma_{i}$ that contains both $(u_1,u_2)$ and $(u_1,v_i)$, the cut 
	$(L_{i}(u,v),R_i(u,v))$ with  $L_{i}(u,v)=L_{i-1}(u,v)$ and  $R_i(u,v)=R_{i-1}(u,v) \cup \{v_i\}$ satisfies (i) and (ii).
	
	We construct the drawing $\auGamma'_i$ from $\auGamma_i$ by increasing the length of the edges that cross the cut $(L_i(u,v),R_i(u,v))$ by $\lambda$ units. In other words, we increase by $\lambda$ units the $x$-coordinate of all vertices in $R_i(u,v)$. It is immediate to verify that $\auGamma'_{i}$ is a $1$-bend upward drawing whose real edges use only slopes in $\Ss$ and that $\auGamma'_{i}$ satisfies \myref{I2}--\myref{I4}. 
	
	Concerning planarity, we claim that translating the subdrawing induced by $R_i(u,v)$ does not violate planarity. Suppose for a contradiction that $\auGamma'_{i}$ is not planar. Then there exist two non-adjacent edges $e_L$ and $e_R$ of $\auP_i$ that intersect in a point $p$. Since $e_L$ and $e_R$ did not cross in $\auGamma_i$,  $p$ belongs to only one of the two edges in $\auGamma_i$, say $e_L$, and there exists a point $p'$ of $e_R$ in $\auGamma_{i}$ that has been translated to $p$ when transforming $\auGamma_i$ into $\auGamma'_i$. This means that $p'$ is encountered before $p$ when walking from left to right along the horizontal line $\ell$ passing through $p$ and $p'$. Since $p'$ has been translated and $p$ has not, $p'$ belongs to the subpath of $\auP_i$ that goes from $v$ to $v_R$, while $p$ belongs to the subpath of $\auP_i$ from $v_L$ to $v$. In other words, when walking along $\auP_i$ from $v_L$ to $v_R$, $p$ is encountered before $p'$. But this contradict Invariant \myref{I5} for $\auGamma_i$, which means that $\auGamma'_{i}$ is planar.
	
	We finally prove that \myref{I5} holds for $\auGamma'_{i}$. Let $\ell$ be any horizontal line and let $p$ and $p'$ be any two intersection points between $\ell$ and the polyline representing $\auP_i$ in $\auGamma_{i}$, with $p$ to the left of $p'$ along $\ell$. If, in $\auGamma'_{i}$, the order of $p$ and $p'$ along $\ell$ is reversed or $p$ and $p'$ coincide, then $p$ has been translated while $p'$ has been not. On the other hand, since Invariant \myref{I5} holds for $\auGamma_{i}$, $p$ precedes $p'$ when walking along $\auP_{i}$ from $v_L$ to $v_R$ and therefore if $p$ is translated, $p'$ is also translated. Thus, \myref{I5} holds for $\auGamma'_{i}$.
\end{proof}

\begin{lemma}
	Let $\auGamma_i$ be a drawing of $\auG^-_i$ that satisfies Invariants \myref{I1}--\myref{I5}. Let $u$ be a vertex of $\auP_i$, and let $t_u$ be any outer top ray of $u$ that crosses an edge of $\auG^-_i$ in $\auGamma_i$. There exists a drawing $\auGamma'_i$ of $\auG^-_i$ that satisfies Invariants \myref{I1}--\myref{I5} in which $t_u$ does not cross any edge of $\auG^-_i$.
\end{lemma}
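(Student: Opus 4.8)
The plan is to reduce the statement to removing the crossings of $t_u$ with the upper boundary path $\auP_i$ alone, and then to eliminate those crossings by applications of Lemma~\ref{le:stretching} with a sufficiently large stretching parameter.

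First I would observe that it suffices to guarantee that $t_u$ crosses no edge of $\auP_i$. Indeed, $t_u$ is a top ray, so it is directed upward and emanates from $u \in \auP_i$ into the outer face (which is exactly what \emph{outer} means). Since $\auP_i$ is the upper boundary of $\auGamma_i$ and every edge of $\auG^-_i$ lies on or below it, a ray that starts above $u$ and never crosses $\auP_i$ stays in the outer face and hence meets no edge of $\auG^-_i$. Being non-horizontal, $t_u$ leans either to the left or to the right; I assume without loss of generality that it leans to the left, so that at every height above $u$ the ray lies strictly to the left of the vertical line through $u$ (the right-leaning case is symmetric).

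Next I would pin down the location of the crossed edges. Using Invariant \myref{I5} — which forces the left-to-right order of the intersections of any horizontal line with $\auP_i$ to agree with the order along $\auP_i$ from $v_L$ to $v_R$ — together with planarity, every edge of $\auP_i$ met by the left-leaning ray $t_u$ must lie on the subpath $\pi_L$ of $\auP_i$ from $v_L$ to $u$. Let $w$ be the neighbor of $u$ on $\auP_i$ toward $v_L$, i.e.\ $(w,u) \in \auP_i$. I would then apply Lemma~\ref{le:stretching} to the edge $(w,u)$ with parameter $\lambda$, which translates $u$ together with the entire subpath from $u$ to $v_R$ (and therefore the ray $t_u$) rightward by $\lambda$, while leaving $\pi_L$ and all the edges it crosses fixed. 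A direct computation then shows that, since each crossed edge occupies a bounded range of heights and $x$-coordinates, for all sufficiently large $\lambda$ the translated ray lies entirely to the right of $\pi_L$ at every relevant height, so no crossing with $\pi_L$ survives.

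Finally I would verify that no new crossings are introduced and that the invariants are preserved. The subpath from $u$ to $v_R$ moves rigidly with the ray, so its relative position to $t_u$ is unchanged; by the same side argument it was not crossed by the left-leaning ray and therefore remains uncrossed. Since $(w,u)$ belongs to $\auP_i$ it contains a horizontal segment by \myref{I2}, so the stretch is a legal application of Lemma~\ref{le:stretching}, and the resulting drawing $\auGamma'_i$ again satisfies \myref{I1}--\myref{I5}. The main obstacle is the third step: rigorously establishing, via Invariant \myref{I5}, that all crossings lie on a single side and that the rigid rightward shift removes them simultaneously without creating new ones; once this order-and-planarity bookkeeping is settled, the existence of a suitable $\lambda$ is routine.
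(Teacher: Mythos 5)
Your first step (reducing to crossings with $\auP_i$) and your handling of crossings on the $v_L$-side of $u$ are sound and match the paper, but the claim on which everything else rests --- that a left-leaning outer top ray can only cross the subpath $\pi_L$ from $v_L$ to $u$ --- is false, and the proof breaks exactly there. Invariant \myref{I5} constrains only the left-to-right order of the intersections of each \emph{horizontal} line with $\auP_i$; it does not prevent the subpath $\pi_R$ from $u$ to $v_R$ from first climbing to the right of $u$ and then sweeping back over $u$ as a narrow ``tongue'' that the ray must pierce. For a concrete configuration satisfying \myref{I1}--\myref{I5}, take $\auP_i=\langle v_L,u,a,b,v_R\rangle$ with $v_L=(0,0)$, $u=(5,10)$, $a=(6,20)$, $b=(-30,30)$, $v_R=(10,0)$, where $(v_L,u)$ and $(u,a)$ start with short horizontal segments at their tails, $(a,b)$ is drawn as a segment from $a$ to $(-35,30)$ followed by a horizontal segment to $b$, and $(v_R,b)$ is drawn as a segment from $v_R$ to $(5,30)$ followed by a horizontal segment to $b$. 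Every horizontal line meets this path in path order, so \myref{I5} holds (and \myref{I1}--\myref{I4} are easily arranged); yet the outer top ray of slope $\frac{3\pi}{4}$ at $u$ crosses $(a,b)$ near $(-8.5,\,23.5)$ and then the horizontal segment of $(v_R,b)$ at $(-15,30)$ --- both crossings lie on $\pi_R$, the side the ray leans \emph{away} from, and there is no crossing with $\pi_L$ at all.

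This gap is fatal for your repair step, not just for the bookkeeping: when the crossings are of this kind, applying Lemma~\ref{le:stretching} to the edge $(w,u)$ with $w$ on the $v_L$-side translates $u$, the ray $t_u$, \emph{and all of $\pi_R$} rigidly to the right (they lie on the same side of the cut used in that lemma), so every crossing between $t_u$ and $\pi_R$ survives unchanged for every choice of $\lambda$; no value of $\lambda$, however large, can remove them. This is precisely why the paper's proof treats the two sides separately and applies Lemma~\ref{le:stretching} to $(v,u)$ \emph{and/or} to $(u,w)$, where $v$ and $w$ are the path neighbors of $u$: stretching $(v,u)$ moves $u$ and $t_u$ away from the fixed, bounded left subpath and removes crossings on that side (the half you do prove), while stretching $(u,w)$ keeps $u$ and $t_u$ fixed and moves the right subpath away from the ray, removing crossings on the other side. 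To repair your proof, drop the one-sidedness claim and handle crossings with $\pi_R$ by this second, symmetric stretch.
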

\begin{proof}
	The ray $t_u$ can cross the subpath of $\auP_i$ from $v_L$ to $u$ and/or the subpath of $\auP_i$ from $u$ to $v_R$. Let $v$ and $w$ be the vertices that are encountered before and after $u$ along $\auP_{i}$ when going from $v_L$ to $v_R$, respectively. To remove the crossing(s) it is sufficient to apply Lemma~\ref{le:stretching} to $(v,u)$ and/or to $(u,w)$ for a sufficiently large $\lambda$; see Figs.~\ref{fi:stretch1}-\ref{fi:stretch2} for an illustration.    
\end{proof}

\setcounter{theorem}{4}
\begin{lemma}
	$\auGamma_2$ satisfies Invariants \myref{I1}--\myref{I5}.
\end{lemma}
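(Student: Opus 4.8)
The plan is to verify the five invariants one at a time on the explicit drawing $\auGamma_2$, in which $v_L,v_1,v_2,v_R$ (in the order dictated by whichever of the two cases occurs) are placed at distinct, strictly increasing $x$-coordinates on one common horizontal line, and every edge of $\auG^-_2$ is drawn as a horizontal segment joining the appropriate vertices. Before touching the invariants I would record the structural facts used throughout: since $v_L$ and $v_R$ are dummy vertices and $v_1$ is the source of $G$, the only real edge among the four vertices is $(v_1,v_2)$; every edge of $\auG^-_2$ is drawn horizontally; and, crucially, nothing is drawn strictly above the common line.

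For \myref{I1} I would note that each edge is a single horizontal segment, hence has zero bends (so at most one) and is trivially $y$-monotone non-decreasing, making the drawing $1$-bend and upward. Planarity is immediate because the drawn edges connect consecutive points of a strictly $x$-monotone sequence and therefore have pairwise disjoint interiors. Finally, the unique real edge $(v_1,v_2)$ uses the horizontal slope, which is in $\Ss$ by hypothesis. Invariant \myref{I2} is then immediate: every edge, and in particular every edge of $\auP_2$, is horizontal and thus contains a horizontal segment.

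The core of the argument is \myref{I3} and \myref{I4}, and the key geometric observation is that, because the whole drawing lies on a single horizontal line, every top ray (real or dummy) of every vertex is free and has the outer face as its first incident face, i.e.\ every top ray is outer. For \myref{I3} there are exactly $\Delta-1$ real top rays per vertex (one per non-horizontal slope of $\Ss$), and I would check $\nr{v}{2}\le\Delta-1$ for each vertex: $v_L$ and $v_R$ carry no real edges; $v_1$ already has its real edge $(v_1,v_2)$ in $\auG_2$, leaving at most $\Delta-1$ real edges incident to it; and $v_2$ likewise has $(v_1,v_2)$ in $\auG_2$. For \myref{I4} the construction places exactly $\Delta^*$ dummy slopes strictly between any two consecutive real slopes, so between $\rcw{1}(v,2)$ and $\bcw{1}(v,2)$ (and symmetrically between $\rccw{1}(v,2)$ and $\bccw{1}(v,2)$) there are $\Delta^*$ outer dummy top rays; since $\nd{v}{2}\le\Delta^*$ by the definition of $\Delta^*$, the required number of outer dummy top rays is available.

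For \myref{I5} I would use that the four vertices, and hence every point of the polyline $\auP_2$, appear in strictly increasing $x$-order from $v_L$ to $v_R$, so that the left-to-right order along any horizontal line of its intersection points coincides with their order along $\auP_2$. The only genuinely delicate point is that all vertices share a single $y$-coordinate, so a horizontal line either misses $\auP_2$ entirely or contains it, making the notion of ``two intersection points'' degenerate; I expect this to be the main (and essentially only) obstacle. I would dispose of it either by observing that strict $x$-monotonicity makes the order well-defined regardless, so \myref{I5} holds vacuously in the degenerate case, or, if a non-degenerate configuration is preferred, by applying an arbitrarily small vertical perturbation to the interior vertices that keeps every edge within the angular tolerance of the horizontal slope and leaves \myref{I1}--\myref{I4} intact.
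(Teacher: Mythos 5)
Your proof is correct and takes essentially the same route as the paper's, which likewise verifies the invariants directly on the explicit horizontal drawing: \myref{I1}, \myref{I2}, \myref{I5} are noted as trivial, and \myref{I3}--\myref{I4} are checked by the same ray-counting you use ($\nr{v_L}{2}=\nr{v_R}{2}=0$; $v_1$ and $v_2$ each have at most $\Delta-1$ real edges outside $\auG_2$ since $(v_1,v_2)$ is real; all real and dummy top rays are outer). One caution: your fallback perturbation for \myref{I5} is not only unnecessary (your $x$-monotonicity argument already settles the degenerate case) but would actually violate \myref{I1} and \myref{I2}, since the real edge $(v_1,v_2)$ and the edges of $\auP_2$ must remain exactly horizontal.
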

\begin{proof}
	Invariants \myref{I1}, \myref{I2}, and \myref{I5} trivially hold. Since $\nr{v_L}{2}=\nr{v_R}{2}=0$, \myref{I3} trivially holds for $v_L$ and $v_R$. Vertices $v_1$ and $v_2$ are connected by a real edge and therefore each of them has at most $\Delta-1$ real incident edges that are not in $\auG_2$. Since there are $\Delta-1$ real top rays and they are all outer, \myref{I3} holds also for $v_1$ and $v_2$. \myref{I4} holds because all dummy top rays~are~outer. 
\end{proof}

\setcounter{lemma}{7}
\begin{lemma}
	Drawing $\Gamma$ can be computed in $O(n)$ time.
\end{lemma}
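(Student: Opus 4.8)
The plan is to account separately for the preprocessing and the incremental drawing phase. The preprocessing computes the triangulated canonical augmentation $\auG$ of $G$ together with its upward canonical ordering $\co$; by Lemma~\ref{le:gronemann} and Lemma~\ref{le:augmentation} this takes $O(n)$ time, and since $\auG$ is planar it has $O(n)$ vertices and edges. The incremental phase adds $v_1,\dots,v_n$ in the order $\co$, and when adding $v_i$ it inspects only the $O(\deg(v_i))$ rays around the neighbors $u_1,\dots,u_q$ of $v_i$ on $\auP_{i-1}$; summed over all insertions this is $O(\sum_i \deg(v_i)) = O(n)$. The dummy slopes can be precomputed once in $O(n)$ time. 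Hence the only threat to an overall linear bound is the repeated use of the stretching operation of Lemma~\ref{le:stretching}: a single vertex insertion triggers up to $q-1$ stretches, and each stretch, implemented naively, translates every vertex of the right side of the cut $R_i(u,v)$ --- possibly $\Theta(n)$ of them --- giving an $O(n^2)$ running time.

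First I would remove this bottleneck by never storing absolute $x$-coordinates during the sweep, using instead the standard relative-coordinate (shift) technique that makes canonical-ordering drawing algorithms run in linear time, in the style of Chrobak and Payne. Concretely, I maintain the current outer path $\auP_i$ as a doubly linked list $v_L,\dots,v_R$ and store, for each vertex on the path, its $x$-offset to its left neighbor on the path; the $y$-coordinates are fixed at insertion time and are never altered by any stretch, so I keep them as absolute values and maintain the running maximum $y$ in $O(1)$ per step. In this representation a stretch of an edge $(u,v)$ of $\auP_i$ by $\lambda$, whose effect is to rigidly translate all of $R_i(u,v)$ to the right, is realized by a single $O(1)$ update: increasing the offset stored at $v$ by $\lambda$. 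Every interior vertex of $R_i(u,v)$ that has already left the outer path is attached beneath the path vertex that covered it, so it automatically inherits any later rigid translation of that vertex and need not be touched now.

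Next I would argue that every shift amount the algorithm requires can be computed locally in $O(1)$, so that absolute coordinates are never needed during the sweep. The quantities the construction reads off --- the left-to-right order of the intersection points $p_j$ of the rays $t_j$ with the line $\ell$, the target abscissae $p'_j$ of the bend points, and the large $\lambda$ needed by Lemma~\ref{le:stretching2} to uncross a ray --- are all differences of $x$-coordinates. Each such difference splits into a sum of offsets between consecutive path vertices (already stored) plus horizontal runs of the form $(Y-y(u_j))\cot(\text{slope})$, where $Y$ is the height of $\ell$ (read from the running maximum) and the slopes are fixed; thus each difference, and hence each $\lambda$, is computable in $O(1)$ from local data and then applied by one offset update.

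Finally I would close the amortization and extract the coordinates. When $v_i$ covers $u_1,\dots,u_q$, the interior neighbors $u_2,\dots,u_{q-1}$ are removed from $\auP_i$ and reattached beneath $v_i$ with offsets chosen to preserve their current positions; since a vertex is removed from the outer path at most once, $\sum_i (q_i-2)=O(n)$, and $\sum_i q_i=O(n)$ by planarity, so all insertion work, stretches included, totals $O(n)$. After the last vertex $v_n$ is placed as in Lemma~\ref{le:bitonic-correct}, a single preorder traversal of the resulting offset forest accumulates offsets into absolute $x$-coordinates in $O(n)$ time, and deleting the dummy edges and the vertices $v_L,v_R$ yields $\Gamma$. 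The main obstacle is precisely the correctness of this simulation: I must verify that replacing each geometric stretch --- conceptually a translation of linearly many points --- by one offset update together with rigid motion of the attached subtrees reproduces exactly the drawing $\auGamma_n$ of Lemma~\ref{le:bitonic-correct}, and that the locally computed shift amounts indeed meet the ordering and separation conditions used there; once this faithfulness is established, the $O(n)$ bound follows.
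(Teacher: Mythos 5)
Your proposal is correct and, at its core, it follows the same two-phase strategy as the paper's proof: both identify the repeated application of Lemma~\ref{le:stretching} as the source of an $O(n^2)$ running time, both avoid absolute coordinates during the incremental sweep by storing only relative data that a stretch can update in $O(1)$ time, and both recover the actual coordinates at the end by a single preorder traversal of a rooted tree spanning the internal vertices. The difference is in the bookkeeping. The paper stores, for each edge segment, a pair (slope, length), and maintains the invariant (its Properties P1--P2) that the stored lengths are valid for all segments on the current outer path and for all non-horizontal segments; since a stretch only lengthens horizontal segments of edges that are not on the outer path, no stored datum ever goes stale in a way that matters, and the final coordinates are read off the spanning tree formed by the edges containing no horizontal segment (each interior neighbor $u_j$, $1<j<q$, is joined to its coverer $v_i$ by such an edge). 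Your vertex-based offset forest is in fact this very tree: you attach $u_2,\dots,u_{q-1}$ beneath $v_i$, and in the paper's tree $v_i$ is the parent of exactly those vertices. What the paper's edge-based invariant buys is that faithfulness is automatic: a stretch provably never alters the slope or length of a non-horizontal segment, so the tree geometry used in the second phase is exact by construction. What your Chrobak--Payne-style offsets buy is a more standard and arguably more transparent data structure, but at the price of the proof obligation you correctly flag at the end: that a stretch of $(u,v)$, defined in Lemma~\ref{le:stretching} as a rigid translation of $R_i(u,v)$, is simulated exactly by adding $\lambda$ to the offset of $v$ and letting forest descendants inherit the motion.

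That obligation is real, but it closes easily from the cut construction in the paper's proof of Lemma~\ref{le:stretching}. The cuts are built inductively, and in every case the newly added vertex $v_i$ is assigned to the same side as its interior covered neighbors: for an edge $(u,v)$ already on the previous outer path, $v_i$ joins the side containing all of $u_1,\dots,u_q$; for the new edge $(u_1,v_i)$ the right side is $R_{i-1}(u_1,u_2)\cup\{v_i\}$, which contains $u_2,\dots,u_{q-1}$; and symmetrically for $(v_i,u_q)$. Hence no cut formed at any later step ever separates a vertex from its forest parent, so $R_i(u,v)$ is always exactly the suffix of the outer path from $v$ to $v_R$ together with all forest descendants of its vertices --- precisely the set your single offset update translates. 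With this observation made explicit, your simulation reproduces the drawing of Lemma~\ref{le:bitonic-correct} segment for segment, and the $O(n)$ bound follows as in the paper.
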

\begin{proof}
%	A straightforward implementation of our algorithm gives an $O(n^2)$ time complexity. Indeed, a 
A bitonic $st$-ordering $\bo$ of $G$ can be computed in $O(n)$ time~\cite{DBLP:conf/gd/Gronemann16}, and the same time complexity suffices to compute a canonical augmentation $G'$ of $G$ by Lemma~\ref{le:gronemann}. A triangulated planar $st$-graph $\auG$ can be obtained in $O(n)$ time by augmenting $G'$ as follows.  For every non-triangular face $f$ of $G'$, let $t_f$ be the (unique) sink in $f$. For each vertex $u$ of $f$ different from $t_f$ we add the edge $(u,t_f)$ inside $f$, unless this edge already belongs to the boundary of $f$. Clearly, all these edges can be added in a planar way and each face of the resulting digraph is triangular. By construction, for any edge $(u,v)$  added to triangulate $G'$ there exists a directed path from $u$ to $v$ in $G'$. Thus $(u,v)$ does not create any directed cycle and $\auG$ has a single source (vertex $v_L$) and a single sink (vertex $v_n$) that are the same as in $G'$. Therefore, the triangulated digraph $\auG$ is a planar $st$-graph. The construction of $\auGamma_i$ from $\auGamma_{i-1}$, for $i=3,4,\dots,n$,  requires $O(deg(v_i))$ applications of Lemma~\ref{le:stretching}, where $deg(v_i)$ is the degree of $v_i$ in $\auG^-_i$. Since a straightforward implementation of the technique of Lemma~\ref{le:stretching} takes $O(n)$ time, and since $\sum_{i=3}^nO(deg(v_i))=O(n)$, the overall time complexity~would~be~$O(n^2)$.
	
	To achieve $O(n)$ time, the algorithm can be implemented to work in two phases. In the first phase, the exact coordinates of the vertices are not computed, but we only store information on the edges to reconstruct these coordinates in the second phase. More precisely, in the first phase, we consider each vertex $v_i$ according to $\co$ and assign to each edge $e=(u,v)$ that is in $\auG^-_i$ but not in $\auG^-_{i-1}$ two pairs of numbers $(s(e,u),l(e,u))$ and $(s(e,v),l(e,v))$. We aim at guaranteeing the following properties for each $i=2,3,\dots,n$:
	
	\begin{description}
		
		\item[P1] $s(e,u)$  (resp.\ $s(e,v)$) represents the slope of the segment of $e$ incident to $u$ (resp.\ $v$) in $\auGamma_i$.
		
		\item[P2] $l(e,u)$  (resp.\ $l(e,v)$) represents the length of the segment of $e$ incident to $u$ (resp.\ $v$) in $\auGamma_i$  if at least one of the following two conditions apply: (a) $e$ is on the boundary of $\auP_i$, or (b) $s(e,u) \neq 0$ (resp.\ $s(e,v) \neq 0$), i.e., the segment does not use the horizontal slope.
	
	\end{description}
	
	For each edge $e=(u,v)$ in $\auGamma_2$, we set $s(e,u)=s(e,v)=0$ and $l(e,u)=l(e,v)=1$. It is immediate to verify that Properties \myref{P1}--\myref{P2} hold. Let $v_i$, $i>2$, be the next vertex to be considered. Let $u_1,u_2,\dots,u_q$ be the neighbors of $v_i$ along $\auP_{i-1}$ and assume, by induction, that \myref{P1}--\myref{P2} hold for $i-1$. Consider any edge $e_j=(u_j,v_i)$, for $j=1,\dots,q$. We choose the slopes for the two segments of $e_j$ as explained above and set $s(e_j,u_j)$ and $s(e_j,v_i)$ accordingly (the choice of the slopes can be done in $O(1)$ time). In order to compute $l(e_j,u_j)$ and $l(e_j,v_i)$, it is sufficient to compute the positions of $u_1,u_2,\dots,u_q$ \emph{relative} to $u_1$. This can be done in $O(deg(v_i))$ time, because, by \myref{P1}--\myref{P2}, we know both the slope and the length of each edge segments along $\auP_{i-1}$ from $u_1$ to $u_q$. We can then calculate, again in $O(deg(v_i))$ the (relative coordinates of) the intersection points $p_1,p_2,\dots,p_q$. Afterwards, we may need to (repeatedly) apply Lemma~\ref{le:stretching}. Note that the application of this lemma does not modify the slope of any edge segment, and thus it preserves Property \myref{P1} for all the edges of $\auGamma_{i-1}$. Instead, it changes the length of some horizontal segments. However, all the involved horizontal segments do not belong to $\auP_{i}$. Finally, in order to set $l(e_j,u_j)$ and $l(e_j,v_i)$,  it suffices to know the values of $\lambda$ used in all the applications of Lemma~\ref{le:stretching}, which can be computed by only looking at the intersection points $p_1,p_2,\dots,p_q$. It follows that \myref{P1}--\myref{P2} hold. 
	
	Once $v_n$ has been considered, we have information on all the edges of $\auGamma_n=\auGamma$. In particular, by \myref{P1}--\myref{P2}, we know the slope and the length in $\auGamma$ of all the edges that do not contain any horizontal segment. These edges form a tree\footnote{In fact, this tree is one of the three trees obtained by a Schnyder decomposition~\cite{Schnyder1990}.} rooted at $v_n$ spanning the graph obtained removing $v_L$ and $v_R$ from $\auG$. To see this, observe that, for $j=1,2,\dots,n-1$, each vertex $v_j$  has been connected exactly once to a vertex $v_z$, with $j < z$, with an edge that does not contain any horizontal segment, as otherwise $v_j$ would belong to the outer face of $\auGamma$. Hence, $v_z$ is the (only) parent of $v_j$ in the spanning tree. Furthermore, $v_n$ is incident to at least one of these edges since it has degree at least three and exactly two of its edges contain a horizontal segment. Therefore, an assignment of valid coordinates to the vertices of $G$ can be obtained through a pre-order visit of this spanning tree (recall that for all the edges of the spanning tree we know the slope and length of its two segments). Finally, all edges that contain a horizontal segment (including those that are incident to $v_L$ and $v_R$) can be drawn as we know the slopes of both segments and the $y$-coordinate of the bend point.
\end{proof}

\begin{figure}


\centering
\begin{subfigure}{0.2\textwidth}
	\centering
	\includegraphics[width=\textwidth,page=1]{figs/lowerbound}
	\caption{}
	\label{fi:lowerbound}
\end{subfigure}
\hfil
\begin{subfigure}{0.3\textwidth}
	\centering
	\includegraphics[width=\textwidth,page=1]{figs/cubic-bitonic}
	\caption{}
	\label{fi:cubic-bitonic}
\end{subfigure}
\caption{(a) A graph that requires $\Delta$ slopes and angular resolution at most $\frac{\pi}{\Delta}$ in every $1$-bend upward planar drawing. (b) Illustration for the proof of Theorem~\ref{thm:cubic-bitonic}.}
\end{figure}

\setcounter{corollary}{0}
\begin{corollary}
	Every bitonic $st$-graph with maximum vertex degree $\Delta \geq 2$ has 1-bend upward planar slope number at most $\Delta$, which is worst-case optimal.
\end{corollary}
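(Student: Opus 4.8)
The plan is to prove the two halves of the statement separately: the upper bound of $\Delta$ on the $1$-bend upward planar slope number of the whole class, and its worst-case optimality via a single hard instance.

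For the upper bound I would simply instantiate Theorem~\ref{thm:bitonic}. Fix any set $\Ss$ of exactly $\Delta$ slopes that contains the horizontal slope (for instance the equispaced set $\{0,\frac{\pi}{\Delta},\dots,\frac{(\Delta-1)\pi}{\Delta}\}$). By Theorem~\ref{thm:bitonic}, every $n$-vertex bitonic $st$-graph $G$ with maximum degree $\Delta$ admits a $1$-bend upward planar drawing all of whose edge segments use slopes from $\Ss$. Such a drawing uses at most $|\Ss|=\Delta$ distinct slopes, hence the $1$-bend upward planar slope number of $G$ is at most $\Delta$. Note that this bound holds uniformly over the class and requires nothing beyond quoting the theorem.

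For worst-case optimality I would exhibit one bitonic $st$-graph of maximum degree $\Delta$ whose $1$-bend upward planar slope number is at least $\Delta$, taking the lower-bound instance $G_\Delta$ of Di~Giacomo et al.~\cite{DBLP:conf/gd/GiacomoLM16} depicted in Fig.~\ref{fi:lowerbound}. Two things have to be verified. First, $G_\Delta$ must lie in our class, i.e.\ it must be a bitonic $st$-graph; this holds because series-parallel $st$-graphs admit a bitonic $st$-ordering, and in any case a bitonic ordering can be read off directly from the small instance of Fig.~\ref{fi:lowerbound}. Second, one needs the actual lower bound $\geq \Delta$ for $G_\Delta$, which is exactly the statement imported from~\cite{DBLP:conf/gd/GiacomoLM16}. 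I would recall the idea behind it, since it is the only nonroutine ingredient: it is a counting argument at the degree-$\Delta$ vertex $v$ of $G_\Delta$. Because the drawing is upward, the segment incident to $v$ of each edge at $v$ points weakly upward and therefore uses a top ray or a horizontal ray; the embedding confines all $\Delta$ such segments to a single closed half-plane above $v$, where each non-horizontal slope offers only one usable ray and the gadget prevents both horizontal rays from being reused. Consequently no two of the $\Delta$ segments can share a slope, so at least $\Delta$ slopes are needed (and, when they are spread evenly, the angular resolution cannot exceed $\frac{\pi}{\Delta}$, matching the caption of Fig.~\ref{fi:lowerbound}).

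Combining the two halves yields the claim: $\Delta$ is achievable for every bitonic $st$-graph by Theorem~\ref{thm:bitonic} and is attained as a true minimum on $G_\Delta$, so it is worst-case optimal. The step I expect to be the main obstacle is the lower-bound half: making the ``no two segments can share a slope'' argument airtight requires carefully excluding the reuse of the two horizontal rays (and of paired top/bottom rays) under the upward and planarity constraints at $v$, which is precisely the delicate reasoning handled in~\cite{DBLP:conf/gd/GiacomoLM16}; by contrast, the upper bound is an immediate corollary of Theorem~\ref{thm:bitonic}.
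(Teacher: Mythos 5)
Your proposal is correct and takes essentially the same route as the paper's own proof: the upper bound by directly instantiating Theorem~\ref{thm:bitonic}, and worst-case optimality via the bitonic lower-bound instance of Di~Giacomo et al.~\cite{DBLP:conf/gd/GiacomoLM16} shown in Fig.~\ref{fi:lowerbound}. The additional material you supply (the sketch of the counting argument at the degree-$\Delta$ vertex and the check that the instance is bitonic) is detail the paper delegates entirely to the citation, so nothing essential differs.
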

\begin{proof}
	Every bitonic $st$-graph has $1$-bend upward planar slope number at most $\Delta$ (Theorem~\ref{thm:bitonic}). Also, for every $\Delta \geq 2$ there is a bitonic $st$-graph (shown in Fig.~\ref{fi:lowerbound}) that requires at least $\Delta$ slopes in any $1$-bend upward planar drawing~\cite{DBLP:conf/gd/GiacomoLM16}.
\end{proof}

\setcounter{theorem}{1}
\begin{theorem}
	Every planar $st$-graph with maximum vertex degree $3$ has $1$-bend upward planar slope number at most $3$.
\end{theorem}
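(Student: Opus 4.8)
The plan is to reduce to the bitonic case already settled by Theorem~\ref{thm:bitonic}. Let $G$ be a planar $st$-graph with maximum vertex degree $3$, source $s$, and sink $t$ (so $\bo(s)=1$ and $\bo(t)=n$). First I would pin down \emph{which} vertices can possibly host a forbidden configuration. By definition, a forbidden configuration at a vertex $u$ needs indices $i<j$ in its successor sequence $S(u)$ with $\bo(v_i)>\bo(v_{i+1})$ and $\bo(v_j)<\bo(v_{j+1})$, which forces $|S(u)|\ge 3$, i.e.\ $u$ has out-degree at least $3$. Since $\Delta\le 3$, such a $u$ has out-degree exactly $3$ and in-degree $0$, hence is a source; as $G$ has a unique source, $u=s$. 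Every other vertex has at most two successors, so its successor sequence is trivially bitonic. Consequently the only vertex that can carry a forbidden configuration is $s$, and if none occurs then $G$ is already bitonic and Theorem~\ref{thm:bitonic} (with $\Delta=3$) applies directly.

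So assume $G$ is not bitonic. Then $S(s)=\{v_1,v_2,v_3\}$ (clockwise) forms a forbidden configuration, and for a length-$3$ sequence the unique non-bitonic pattern is the \emph{valley} $\bo(v_1)>\bo(v_2)<\bo(v_3)$, making $v_2$ a strict local minimum. Since $\bo(t)=n$ is the global maximum, $t\ne v_2$, so the edge $(s,t)$ reaches either $v_1$ or $v_3$. Both $s$ and $t$ lie on the outer face, so $(s,t)$ can be redrawn entirely within the outer face and reinserted at either extreme of the rotation at $s$, without touching any other vertex's rotation or any other successor sequence. If $t=v_1$ I would move it to the last position, so that $S(s)$ becomes $(\bo(v_2),\bo(v_3),n)$, which is increasing; if $t=v_3$ I would move it to the first position, so that $S(s)$ becomes $(n,\bo(v_1),\bo(v_2))$, which is decreasing. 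In either case $S(s)$ becomes monotone, hence bitonic; see Fig.~\ref{fi:cubic-bitonic}.

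The resulting digraph $G'$ has the same vertices, the same edge set (as an abstract digraph), the same maximum degree $3$, and is still a planar $st$-graph; moreover every successor sequence is now bitonic, so $G'$ is a bitonic $st$-graph. Applying Theorem~\ref{thm:bitonic} to $G'$ with any set of $3$ slopes including the horizontal one yields a $1$-bend upward planar drawing that uses only those $3$ slopes. Since the $1$-bend upward planar slope number is taken over \emph{all} upward planar drawings of the abstract digraph (and therefore over all planar embeddings), this drawing certifies that $G$ itself has $1$-bend upward planar slope number at most $3$.

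The main obstacle I expect is making the rerouting step fully rigorous rather than merely plausible: one must check that redrawing $(s,t)$ through the outer face to the chosen extreme of the rotation at $s$ keeps the embedding planar and preserves the $st$-graph structure (single source, single sink, acyclicity, and both $s,t$ on the outer boundary), while leaving the rotation at every other vertex untouched. The degree bound carries the combinatorial load by confining every forbidden configuration to $s$ and forcing a valley there; the delicate part is the embedding bookkeeping that guarantees the rerouted edge can be placed at the required end of $s$'s rotation and that this is the only local change.
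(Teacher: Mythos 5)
Your proposal is correct and takes essentially the same approach as the paper: the degree bound confines any forbidden configuration to the source $s$, and the fix is to reroute $(s,t)$ so that $S(s)$ becomes monotone --- the paper phrases this as mirroring the embedding of the subgraph $G-(s,t)$ while keeping $(s,t)$ in place, which is the same operation as yours up to a reflection of the whole plane. Your write-up is in fact more explicit than the paper's terse proof, spelling out the valley pattern at $s$, the two cases $t=v_1$ and $t=v_3$, and the embedding bookkeeping that the paper leaves implicit.
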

\begin{proof}
	Let $G$ be a planar $st$-graph with maximum vertex degree $3$. By Theorem~\ref{thm:bitonic}, if $G$ has a bitonic $st$-ordering then the statement follows. If $G$ does not contain any forbidden configuration, then it is bitonic (see Section~\ref{sec:preliminaries}). Otherwise, recall that a forbidden configuration consists of at least three outgoing edges incident to the same vertex. Thus, $G$ contains exactly one forbidden configuration, which involves its source $s$ and the edge $(s,t)$. We can remove this forbidden configuration by mirroring the embedding of the subgraph obtained removing $(s,t)$ from $G$ (see Fig.~\ref{fi:cubic-bitonic} for an illustration).
\end{proof}

\section*{Missing proofs of Section~\ref{sec:2bend}}

\setcounter{lemma}{8}
\begin{lemma}
	Let $G=(V,E)$ be an $n$-vertex planar $st$-graph that is not bitonic. Let $G_b=(V_b,E_b)$ be an $N$-vertex bitonic subdivision of $G$, with a bitonic $st$-ordering $\bo=\{v_1,v_2,\dots,v_N\}$. There exists a planar $st$-graph $G'=(V',E')$ with an $st$-ordering $\co=\{v_L,v_R,v_1,v_2,\dots,v_N\}$ such that:
	
	\begin{inparaenum}[(i)]
		\item $V'=V_b\cup\{v_L,v_R\}$; 
	 	
	 	\item $E_b \subset E'$ and $(v_L,v_R) \in E'$; 
	 	
	 	\item $v_L$ and $v_R$ are on the boundary of the outer face of $G'$; 
	 	
	 	\item Every vertex of $G_b$ with less than two predecessors in $\bo$ has exactly two predecessors in $\co$.
	 	
	 	\item There is no vertex in $G'$ such that its leftmost or its rightmost incoming edge is an upper stub. 
	 	
	\end{inparaenum}
	
	\noindent Also, $G'$ and $\co$ are computed~in~$O(n)$~time.
\end{lemma}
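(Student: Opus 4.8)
The plan is to obtain $G'$ in two phases: first invoke Lemma~\ref{le:gronemann} on the bitonic subdivision $G_b$ to get a canonical augmentation satisfying (i)--(iv), and then add a linear number of extra dummy edges to additionally enforce (v) while arguing that these additions preserve (i)--(iv). Since $G_b$ is bitonic by hypothesis, Lemma~\ref{le:gronemann} applied to $G_b$ yields a planar $st$-graph $G''$ with $st$-ordering $\co=\{v_L,v_R,v_1,\dots,v_N\}$ satisfying (i)--(iv); the only genuinely new content is (v). I would first record the invariant that the edges inserted by the augmentation of Lemma~\ref{le:gronemann} are incident to $v_L$ or $v_R$, so that no subdivision vertex $d$ acquires a new \emph{outgoing} edge; hence in $G''$ every such $d$ still has its upper stub $(d,v)$ as its \emph{unique} outgoing edge, which is the key fact exploited below.

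For the second phase, consider any vertex $v$ whose leftmost incoming edge in $G''$ is an upper stub $(d,v)$ (the rightmost case is symmetric). Let $f$ be the face incident to $v$ lying immediately to the left of $(d,v)$, i.e.\ the wedge between the leftmost incoming and the leftmost outgoing edge of $v$ (recall that in an $st$-graph the incoming edges around $v$ form a contiguous block). I would add one \emph{flanking} dummy edge into $f$ as follows. Because $d$ has a single outgoing edge, $d$ cannot be the source $s_f$ of $f$: a source of $f$ would carry two outgoing edges on $\partial f$. Thus $s_f\neq d$, and $s_f$ lies strictly below $v$ on the left bounding path of $f$ (indeed $\co(s_f)<\co(d)<\co(v)$); I then pick the highest vertex $w$ of that left path with $\co(w)<\co(v)$ — such a $w$ exists since $s_f$ qualifies, and $w\neq d$ because $d$ lies only on the right path. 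Inserting $(w,v)$ inside $f$ places it to the left of $(d,v)$, so it becomes the new leftmost incoming edge of $v$; it is a dummy edge rather than an upper stub, it is oriented consistently with $\co$ since $\co(w)<\co(v)$, and it is not already present (being a chord of the empty region $f$).

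To finish, I would check that the process behaves globally. Each insertion changes only the incoming edges of a single vertex $v$ (while adding one outgoing edge to $w$), so it can never turn another vertex's extreme incoming edge into an upper stub; the fixes are therefore independent, and applying one on each offending side of each offending vertex yields (v). Every insertion keeps $\co$ an $st$-ordering (only lower-to-higher edges are added, so no directed cycle and no new source or sink appears), preserves planarity, does not decrease any predecessor count (so (iv) still holds), and does not move $v_L,v_R$ off the outer face; hence (i)--(iv) persist and the resulting graph is the desired $G'$. The number of offending upper stubs is $O(N)=O(n)$, and each flanking edge is located by a single walk along the boundary of one face, so the construction runs in $O(n)$ time on top of Lemma~\ref{le:gronemann}.

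I expect the main obstacle to be precisely the existence claim in the second phase — that a flanking edge can \emph{always} be inserted. The delicate point is excluding the degenerate configuration in which $d$ is the bottom vertex of $f$, which would force the only lower neighbour of $v$ on $\partial f$ to be $d$ itself and thus create a parallel edge. This is exactly where the invariant that a subdivision vertex has a unique outgoing edge, and that the augmentation of Lemma~\ref{le:gronemann} never adds an outgoing edge to it, does the real work; I would make this invariant explicit and verify it against the concrete augmentation, after which the remaining planarity/acyclicity bookkeeping is routine.
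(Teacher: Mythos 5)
Your plan is genuinely different from the paper's proof: the paper does not invoke Lemma~\ref{le:gronemann} as a black box and then repair property~(v) afterwards. Instead, it reruns the incremental augmentation from scratch, inserting the vertices in the order $\co$ and, at the very step where $v_i$ is added, inserting a dummy edge from the vertex $w$ that precedes (or follows) the offending predecessor $u$ along the outer cycle of the current graph $G'_{i-1}$; in this way properties~(iv) and~(v) are enforced simultaneously, and all insertions happen on the current outer face. This difference is not cosmetic, and your post-processing phase has concrete gaps, located exactly where you predicted the difficulty would lie.

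First, your exclusion of the degenerate case $d=s_f$ rests entirely on the invariant that the augmentation behind Lemma~\ref{le:gronemann} adds only edges incident to $v_L$ or $v_R$, so that every subdivision vertex keeps its upper stub as its \emph{unique} outgoing edge. That invariant is not part of the statement of Lemma~\ref{le:gronemann}, and it is false for the construction that underlies it (Gronemann's, which the paper reproduces in its own proof of this lemma): when a vertex $v_i$ with a single predecessor $u$ is processed, the new edge emanates from the outer-cycle neighbor $w$ of $u$ in $G'_{i-1}$, and this forced choice of $w$ can be any vertex of the partial graph --- in particular a subdivision vertex whose upper stub is still pending. Indeed, no variant of Lemma~\ref{le:gronemann} can restrict the added edges to be incident to $v_L$ and $v_R$, since a vertex with one predecessor may be enclosed by faces none of which touches the outer boundary. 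So the invariant cannot simply be ``verified against the concrete augmentation''; it is where the real work lies, and your proof of $d\neq s_f$ collapses without it.

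Second, even granting $d\neq s_f$, your simplicity argument for the flanking edge is a non sequitur: being ``a chord of the empty region $f$'' does not prevent $(w,v)$ from already existing as an edge embedded in some \emph{other} face --- two vertices on a common face boundary can be adjacent elsewhere. For $w$ strictly interior to the left path of $f$ one could try to exclude this via the path/left-of dichotomy of planar $st$-graphs, but your fallback $w=s_f$, which is precisely the case you invoke to guarantee existence, is the case where that dichotomy is silent: $s_f$ reaches $v$ by a directed path along $\partial f$, so the edge $(s_f,v)$ may already be present in $G''$, embedded to the right of $(d,v)$, and inserting a second copy inside $f$ creates a parallel edge. The paper's integrated construction avoids this trap structurally, because its flanking vertex $w$ is taken just beyond the leftmost (rightmost) predecessor of $v_i$ on the outer cycle and is therefore guaranteed not to be adjacent to $v_i$. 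Finally, a smaller but real flaw: your claim that~(iv) survives because insertions ``do not decrease any predecessor count'' is backwards, since~(iv) demands \emph{exactly} two predecessors for the relevant vertices; if a vertex has its stub as its only predecessor in $\bo$, phase one gives it exactly two incoming edges, the stub is then necessarily leftmost or rightmost, and your phase-two fix raises its in-degree to three, violating~(iv) as stated.
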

\begin{proof}
We construct $G'$ together with its embedding by adding a vertex per step according to $\co$. We start with the $3$-cycle whose edges are $(v_L,v_R)$, $(v_L,v_1)$, and $(v_R,v_1)$ embedded so that, starting from the outer face, the edge $(v_R,v_1)$ is the first edge in the clockwise circular order of the edges around $v$. Let $G'_i$ be the plane digraph induced by $\{v_L,v_R,v_1,\dots,v_i\}$. The neighbors of $v_i$ that are in $G'_{i-1}$ all belong to the boundary of the outer face (because $\bo$ is an $st$-ordering of $G_b$). Thus $v_i$ can be planarly connected to its neighbors in $G'_{i-1}$. In order to guarantee properties (\ref{p:predecessors2}) and (\ref{p:upperstub}) we may add dummy edges connecting $v_i$ to some vertex of the outer face of $G'_{i-1}$ that is not adjacent to $v_i$ in $G_b$. 

Suppose first that $v_i$ has only one predecessor $u$ in $\co$. We claim that all the edges connecting $u$ to vertices that are after $v_i$ in $\co$ appear consecutively either in clockwise or in counterclockwise order around $u$ starting from $(u,v_i)$ in the embedding of $G_b$. If this was not the case, then there would be two vertices $v_j$ and $v_h$, with $i < j$ and $i < h$, such that $v_j$ precedes $v_i$ and $v_h$ follows $v_i$ in the circular order around $u$. But this would imply that $v_j,v_i,v_k$ form a forbidden configuration for $G_b$, which would contradict the fact that $G_b$ is bitonic. If all these edges appear after $v_i$ in clockwise (resp.\ counterclockwise) order, then we can add the edge $(v_i,w)$, where $w$ is the vertex preceding (resp.\ following) $u$ when walking clockwise along the boundary of $G'_{i-1}$ (see, for example, Fig.~\ref{fi:support1}).

\begin{figure}
	\centering
	\begin{subfigure}{0.4\textwidth}
		\centering
		\includegraphics[width=\textwidth,page=1]{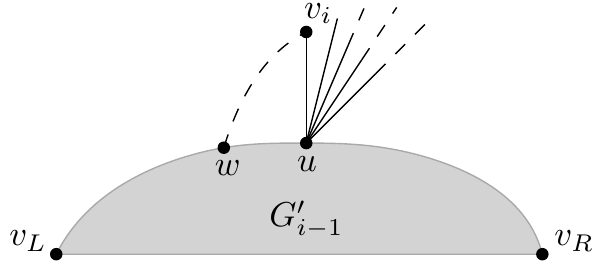}
		\caption{}
		\label{fi:support1}
	\end{subfigure}
	\hfil
	\begin{subfigure}{0.4\textwidth}
		\centering
		\includegraphics[width=\textwidth,page=2]{figs/support}
		\caption{}
		\label{fi:support2}
	\end{subfigure}
	\caption{\label{fi:support}Addition of dummy edges to guarantee properties~\ref{p:predecessors2} and~\ref{p:upperstub} of Lemma~\ref{le:gronemann2}. (a) $v_i$ has a single predecessor. (b) $v_i$ has a leftmost incoming edges that is an upper stub (the squared vertex is a dummy vertex).}
\end{figure}

Suppose now that $v_i$ has more than one predecessor in $\co$, but its leftmost (resp.\ rightmost) incoming edge $(u,v_i)$ is an upper stub. This means that $u$ is a dummy vertex and therefore it has no successor in $\bo'$ other than $v_i$. Then we can add the edge $(v_i,w)$, where $w$ is the vertex preceding (resp.\ following) $u$ when walking clockwise along the boundary of $G'_{i-1}$ (see, for example, Fig.~\ref{fi:support2}). Properties (i), (ii), and (iii) immediately follow from our construction, while (iv) and (v) are guaranteed by the dummy edges that we insert as explained above. Since $G'$ has $O(n)$ vertices and edges, the above procedure can be implemented to run in $O(n)$ time.
\end{proof}

\setcounter{theorem}{3}

\begin{theorem}
	Let $\Ss$ be any set of $\Delta+1$ slopes including the horizontal slope and let $G$ be an $n$-vertex upward planar graph with maximum vertex degree $\Delta \ge 2$. Graph $G$ has a $2$-bend upward planar drawing using only slopes in $\Ss$.
\end{theorem}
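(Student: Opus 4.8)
The plan is to reduce the statement to Theorem~\ref{thm:planarst} by augmenting $G$ to a planar $st$-graph and re-running the construction of Section~\ref{sec:2bend} with a single extra slope. First I would fix an upward planar embedding of $G$ (which exists, since $G$ is upward planar) and perform the standard $st$-augmentation: add a super-source $s$ and a super-sink $t$, connect $s$ to every source of $G$, connect every sink of $G$ to $t$, and insert the edge $(s,t)$ on the outer face. The result is a planar $st$-graph $G^*$ of which $G$ is a subgraph. I would then declare the edges of $G$ to be \emph{real} and all the newly inserted edges (exactly those incident to $s$ or $t$) to be \emph{dummy}. The key bookkeeping observation is that every vertex of $G$ retains its real degree, which is at most $\Delta$, while only the two dummy vertices $s$ and $t$ acquire large degree.

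Next I would apply the algorithm behind Theorems~\ref{thm:bitonic} and~\ref{thm:planarst} to $G^*$ (i.e.\ take a bitonic subdivision, build a triangulated canonical augmentation, and draw vertex by vertex along $\co$), but using a set $\Ss$ of $\Delta+1$ \emph{real} slopes instead of $\Delta$, while retaining the dummy slopes for all dummy edges, now including the edges at $s$ and $t$. Since $s$ and $t$ are removed at the end together with $v_L$, $v_R$, and the subdivision vertices (replaced by bends), their possibly large degree only inflates the number $\Delta^*$ of dummy slopes and never appears in the final drawing of $G$; each real edge of $G$ still carries at most two bends and uses only slopes of $\Ss$.

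The reason one extra slope is needed is isolated entirely in Invariant \myref{I3}. In the $st$-graph setting only $v_1$ is a source of $G$, so $\nr{v_i}{i}\le\Delta-1$ for every $i>1$, and the $\Delta-1$ real top rays guaranteed by a set of $\Delta$ slopes suffice. For a general upward planar $G$, a vertex $v_i$ with $i>1$ may itself be a source of $G$ with $\Delta$ outgoing real edges, so \myref{I3} now demands $\Delta$ outer real top rays at $v_i$. A set of $\Delta+1$ slopes containing the horizontal one provides $\Delta$ non-horizontal slopes, hence exactly $\Delta$ top rays. The symmetric sink side is already fine: an internal sink of $G$ with $\Delta$ incoming real edges, when added as $v_i$, uses the two horizontal segments for its extreme edges and at most $\Delta-2$ real bottom rays for the middle ones, fitting within $\Delta$ (and a fortiori $\Delta+1$) slopes. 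With the extra real slope in place, the verifications of \myref{I1}, \myref{I2}, \myref{I4}, \myref{I5} in Lemma~\ref{le:gammai} and the final step of Lemma~\ref{le:bitonic-correct} carry over unchanged, as they are insensitive to how many vertices of $G$ are sources or sinks.

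The main obstacle, and the only genuinely new content, is pinpointing precisely where the $\Delta$-slope bound of Theorem~\ref{thm:planarst} was tight and confirming that $\Delta+1$ slopes remove that single deficiency: re-establishing \myref{I3} at vertices that are simultaneously sources (or sinks) of $G$ and endpoints of dummy super-source/super-sink edges, and checking that the high-degree dummy vertices $s$ and $t$ can be realized with dummy slopes without violating planarity or upwardness. Everything else is a faithful replay of the earlier construction, so once the extra real slope is introduced the argument is routine.
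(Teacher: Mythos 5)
Your strategy coincides with the paper's own proof: augment $G$ with dummy edges to a planar $st$-graph, run the algorithm of Theorem~\ref{thm:planarst} with dummy slopes reserved for the dummy edges, and observe that the only place where the size of the real slope set matters is Invariant \myref{I3} --- a vertex that is a source of $G$ (but not of the augmented graph) can have $\Delta$ outgoing real edges and hence needs $\Delta$ real top rays, which is exactly what a set of $\Delta+1$ slopes containing the horizontal one supplies. This part of your argument is correct and in fact more detailed than the paper's rather terse proof.

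The gap is in the augmentation itself. Adding a new super-source $s$ adjacent to every source of $G$ and a new super-sink $t$ adjacent to every sink does not preserve planarity in general: in a planar $st$-graph both $s$ and $t$ must lie on the outer face, but a sink $u$ of $G$ lying strictly inside an internal face $f$ of the fixed upward planar embedding (e.g., attach a pendant edge $a \to u$ inside an internal face of a diamond) cannot be made adjacent to any vertex of the outer face without crossing the boundary of $f$. The augmentation that actually works --- and the one the paper invokes, due to Di Battista and Tamassia --- saturates internal sources and sinks face by face with dummy edges between \emph{existing} vertices, so that $G$ becomes a spanning subgraph of a planar $st$-graph $G_{st}$. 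With this repair your bookkeeping changes slightly: dummy edges now attach to ordinary vertices of $G$, not only to two designated new vertices, but this is harmless since real degrees are unchanged and the extra dummy edges only inflate $\Delta^*$, exactly as in your text. A second, more minor slip: your claim that the sink side is ``already fine'' with $\Delta$ slopes fails when both the leftmost and the rightmost incoming edge of a sink of $G$ along $\auP_{i-1}$ happen to be dummy, since then all $\Delta$ real incoming edges are middle edges requiring real bottom rays; this is why the paper treats sources and sinks symmetrically. As you have $\Delta+1$ slopes available anyway, this imprecision costs nothing, so only the augmentation step genuinely needs to be fixed.
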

\begin{proof}
Since $G$ is upward planar, it is possible to augment $G$ with dummy edges to a planar $st$-graph $G_{st}$ with maximum vertex degree $\Delta_{st}$ (see, e.g.,~\cite{DBLP:journals/tcs/BattistaT88}). If we apply the algorithm of Theorem~\ref{thm:planarst} to $G_{st}$ we obtain a $2$-bend upward planar drawing using any set of $\Delta_{st}$ slopes including the horizontal one. However, it is not immediate to augment $G$ so that $\Delta_{st} \leq \Delta+1$.  On the other hand, the algorithm can be applied so to draw the dummy edges of $G_{st}$ using dummy slopes. In this case we should take into account the fact that a vertex $v$ that is a source (resp.\ a sink) in $G$ and not in $G_{st}$ may have $\Delta$ outgoing (resp.\ incoming) real edges. To cope with this issue it suffices to use any set of $\Delta+1$ real slopes that includes~the~horizontal~one.
\end{proof}

\end{document}